\newtheorem{assumption}{Assumption}
\newtheorem{remark}[theorem]{Remark}
\numberwithin{equation}{section}
\numberwithin{theorem}{section}
\newcommand\itemb{\item[$\bullet$]}
\title{Jump-Diffusion Risk-Sensitive Asset Management II: Jump-Diffusion Factor Model\thanks{Research supported by EPSRC under Mathematics and Industry  Grant EP/F035578/1. We wish to express our gratitude to the participants at the Stochastics, Control and Finance Workshop held in London on April 12-14, 2010,  the sixth World Congress of the Bachelier Finance Society held in Toronto on June 22-26, 2010, the ``Modern trends in controlled stochastic processes'' workshop held in Liverpool on July 12-16, 2010 and the 12th Conference on Stochastic Programming (SPXII) held in Halifax on August 16-20, 2010. We would also like to offer our thanks to Mete Soner for his insightful advice and kind encouragement, and to an anonymous referee for detailed comments on the paper which have helped us improve it greatly both in terms of accuracy and readability.}}
\author{Mark Davis\footnote{Department of Mathematics, Imperial College London, London SW7 2AZ, England, Email: mark.davis@imperial.ac.uk} and S\'ebastien Lleo\footnote{Finance Department and Value \& Persuasion Research Centre, Reims Management School, 59 rue Pierre Taittinger, 51100 Reims, France, Email: sebastien.lleo@reims-ms.fr}}
\date{23 July 2012}
\begin{document}
\bibliographystyle{plain}
\maketitle

\begin{abstract}
In this article we extend earlier work on the jump-diffusion risk-sensitive asset management problem in a factor model [SIAM J. Fin. Math. {\bf 2} (2011) 22-54] by allowing jumps in both the factor process and the asset prices, as well as stochastic volatility and investment constraints. In this case, the HJB equation is a partial integro-differential equation (PIDE). We are able to show that finding a viscosity solution to this PIDE is equivalent to finding a viscosity solution to a related PDE, for which classical results give uniqueness. With this in hand, a policy improvement argument and classical results on parabolic PDEs show that the HJB PIDE admits a unique smooth solution. The optimal investment strategy is given by the feedback control that minimizes the Hamiltonian function appearing in the HJB PIDE.
\end{abstract}

\emph{Key words:} Asset management, risk-sensitive stochastic control, jump diffusion processes, Poisson point processes, L\'evy processes, HJB PIDE, policy improvement,  parabolic PDE, classical solutions, viscosity solutions.
\\


\section{Introduction}
Risk-sensitive control generalizes classical stochastic control by considering directly the optimizing agent's degree of risk aversion. In risk-sensitive control, the decision maker's objective is to select a control policy $h(t)$ to maximize the criterion
\begin{equation}\label{eq_criterion_J}
    J(\theta, h) := -\frac{1}{\theta}\ln\mathbf{E}\left[e^{-\theta F_T(h)} \right]
\end{equation}
where $F_T(h)$ is a reward function at a fixed final time $T$ corresponding to a control process $h$, and the exogenous parameter $\theta>0$ represents the decision maker's degree of risk aversion. A Taylor expansion of $J$ around $\theta = 0$ gives
\begin{equation}\label{eq_Taylor_RSC}
    J(\theta,h)
    = \mathbf{E}\left[F_T(h)\right]
    - \frac{\theta}{2}\mathbf{Var}\left[F_T(h)\right]
    + O(\theta^2).
\end{equation}
In optimal investment problems we take $F_T=\log V_T$, where $V_T$ is the value of the investment portfolio corresponding to an asset allocation strategy $h$; then \eqref{eq_Taylor_RSC} shows that risk-sensitive control amounts to `dynamic Markowitz': we maximize the expected return subject to a constraint on the variance of returns.
The reader will find a brief survey of previous work in this area in Part I of this paper \cite{dall_JDRSAM_Diff}.
\\

In \cite{dall_JDRSAM_Diff}, as in Bielecki and Pliska \cite{bipl99} and many other papers, the growth rates of the assets are supposed to depend on a random factor process $X_t$, the components of which can be interpreted either as macroeconomic factors or simply as a statistical representation of the uncertainty of asset returns. We modelled $X_t$ as a gaussian diffusion, while the asset prices themselves were modelled as affine jump-diffusions.
Here we present a more general model that includes jumps in the factor process as well as stochastic volatility and investment constraints.
Including jumps presents more than a mathematical interest. It is also important from a practical perspective as this formulation is necessary to model additional asset classes such as credit products. Our solution technique uses a change-of-measure idea introduced by Kuroda and Nagai \cite{kuna02} that reduces the risk-sensitive optimization problem to a stochastic control problem in the factor process. When, as in \cite{dall_JDRSAM_Diff}, $X_t$ has no jumps, the associated Bellman equation is a partial differential equation (PDE) which was shown to admit a unique classical classical $(C^{1,2})$ solution.
\\

In the framework of the present paper the transformed problem \`a la Kuroda-Nagai is a fully nonlinear controlled jump-diffusion, and the Bellman equation is a is a partial integro-differential equation (PIDE) for which no analytical solution can be expected. In such a situation, viscosity solutions are generally used to show that the value function is the unique continuous solution of the HJB PIDE; see in particular Crandall Ishii and Lions~\cite{crisli92}, Barles and Imbert~\cite{baim08} for an overview of viscosity solutions and Fleming and Soner~\cite{flso06}, \O ksendal and Sulem~\cite{oksu05} or Touzi~\cite{to02} for a discussion of their application to stochastic control, as well as Davis and Lleo~\cite{dall_JDRSAM_Visc} for a viscosity approach to risk-sensitive asset management. In the context of optimal control, a limitation of viscosity solutions is that they are weak solutions: they do not generally satisfy the smoothness assumption required to obtain a sufficient condition via a classical verification theorem.  
\\

Pham~\cite{ph98}, in the context of linear PIDEs, and more recently Davis, Guo and Wu~\cite{guwu09}, in the context of elliptical PIDEs arising from impulse-control problems, suggested a link between viscosity solutions and classical solutions. Their argument differs from more abstract results on classical solutions to PIDEs (see for example~\cite{mipr94}) in that it is directly tailored to the problem at hand. It is therefore of particular interest for the type of applied problems related to optimal control. In this article, we build on this approach to prove that the risk-sensitive Bellman PIDE admits a unique smooth solution and solves the risk-sensitive asset management problem in a jump-diffusion setting. Our situation is nevertheless different from that encountered by Pham or Davis, Guo and Wu. The difficulty in their approach stems form the resolution of a free boundary problem with one source of nonlinearity: the nonlocal integro-differential term associated with the jumps. In contrast, we have no free boundary value problem but three sources of nonlinearity: a quadratic growth term, the optimization embedded in the HJB PDE and the non-local nonlinear integro-differential term. While we can readily eliminate the first nonlinearity through a change of variable, we must tackle the last two simultaneously. 
\\

The article is organised as follows. We introduce the analytical setting in Section \ref{sec_setting} before formulating the control problem in Section \ref{sec_problem}. The main result, Theorem \ref{theo_JDRSAM_main_result}, is stated in Section \ref{main} and proved in detail in the two subsequent sections: in Sections \ref{classical} and \ref{optimalcontrol}, we respectively address the questions of the existence of classical solution to the HJB PIDE, and a verification theorem establishing optimality of our candidate control. An Appendix gives the proof of Proposition \ref{prop_tildePhi_Lipschitz},
establishing Lipschitz continuity of the value function.\\


%
%

\section{Analytical Setting}\label{sec_setting}
The asset market comprises $m$ risky securities $S_i, \; i=1,\ldots m$ and a money market account process $S_0$. The growth rates of the assets depend on $n$ factors $X_1(t), \ldots, X_n(t)$ which follow the dynamics given in the jump diffusion equation~\eqref{eq_FactorProcess} below.
\\

Let $(\Omega, \mathcal{F},
\mathbb{P})$ be the underlying probability space. On this space is
defined an $\mathbb{R}^M$-valued
$\left(\mathcal{F}_t\right)$-Brownian motion $W(t)$ with components
$W_k(t)$, $k=1,\ldots,M$, and $M:=m+n$. Moreover, let $N$ be a $\left(\mathcal{F}_t\right)$-Poisson point process on $(0,\infty)\times\mathbf{Z}$, independent of $W(t)$, where $(\mathbf{Z},\mathcal{B}_{\mathbf{Z}})$ is a given Borel space. Define
\begin{equation}\label{def_JDRSAM_ZFrak_set}
    \mathfrak{Z} := \left\{ U \in \mathcal{B}_{\mathbf{Z}}, \mathbb{E} \left[N(t,U)\right] < \infty \; \forall t\right\}
\end{equation}
For notational convenience, we fix throughout the paper a set $\mathbf{Z}_0 \in \mathcal{B}_{\mathbf{Z}}$ such that
$\nu(\mathbf{Z} \backslash \mathbf{Z}_0)<\infty$ and define, as in \cite{oksu05} 
\begin{eqnarray}
    &&\bar{N}(dt,dz)
                                                \label{nbar}\\
    &=& \left\{ \begin{array}{ll}
        N(dt,dz) - \hat{N}(dt,dz) = N(dt,dz) - \nu(dz)dt =: \tilde{N}(dt,dz)    &   \textrm{if } z \in \mathbf{Z}_0     \\
        N(dt,dz)                      &   \textrm{if } z \in \mathbf{Z} \backslash \mathbf{Z}_0       \\
    \end{array}\right.
                                                \nonumber
\end{eqnarray}

For $t\in[0,T]$ let $\mathcal{F}_{t}$ be the $\sigma$-field generated by the Brownian motions $W_k(s)$ and Poisson processes $N(A,s)$ for $k=1,\ldots, M$, $A\in\mathcal{B}_{\mathbf{Z}}$ and $0\leq s\leq t$, completed with all null sets of $\mathcal{F}_T$. It is well known that the filtration $\{\mathcal{F}_{t}\}_{t\in[0,T]}$ satisfies the `usual conditions'.

\subsection{Factor Dynamics}\label{Chapter3_JDRSAM_theory_factordynamics}
The factor process $X(t)\in\mathbb{R}^{n}$ is allowed to have a full jump-diffusion dynamics,
satisfying the {\sc SDE}
\begin{equation}\label{eq_FactorProcess}
    dX(t) = b\left(t,X(t^-)\right)dt + \Lambda(t,X(t)) dW(t) + \int_{\mathbf{Z}} \xi\left(t,X(t^-),z\right) \bar{N} (dt,dz),
    \qquad X(0) = x_0\in\mathbb{R}^{n}.
\end{equation}
The standing assumptions are as follows.
\\

\begin{assumption}\label{As_Factors}

\begin{enumerate}[(i)]
\item\label{As_Factors_i} The function $b: [0,T]\times\mathbb{R}^n\to \mathbb{R}^n$ is bounded and Lipshitz continuous
\begin{eqnarray}\label{As_Factor_eq1}
                |b(t,y)-b(s,x)| \leq K_b\left(|t-s|+|y-x|\right)
\end{eqnarray}
for some constant $K_b>0$.

\item\label{As_Factors_ii} the function $\Lambda: [0,T]\times\mathbb{R}^n\to\mathbb{R}^{n \times M}$ is bounded and Lipschitz continuous,
\begin{eqnarray}\label{As_Factor_eq2}
                |\Lambda(t,y)-\Lambda(s,x)| \leq K_{\Lambda}\left(|t-s|+|y-x|\right)
\end{eqnarray}
for some constant $K_{\Lambda}>0$.

\item\label{As_Factors_iii} There exists $\eta_{\Lambda} > 0$ such that
\begin{eqnarray}
                \zeta'\Lambda\Lambda'(t,x)\zeta \geq \eta_{\Lambda} |\zeta|^2
\end{eqnarray}
for all $\zeta \in \mathbb{R}^n$

\item\label{As_Factors_iv} There exists $K_b'>0$ and $K_{\Lambda}' >0$ such that
\begin{eqnarray}\label{As_Factor_eq3}
                |b_t| + |b_x| \leq K_b'
\end{eqnarray}
\begin{eqnarray}\label{As_Factor_eq4}
                |\Lambda_t| + |\Lambda_x| \leq K_{\Lambda}'
\end{eqnarray}

\item\label{As_Factors_v} The function $\xi: [0,T]\times\mathbb{R}^n\times\mathbf{Z}\to\mathbb{R}$ is bounded and Lipshitz continuous, i.e.
\begin{eqnarray}
                |\xi(t,y,z)-\xi(s,x,z)| \leq K_{\xi}\left(|t-s|+|y-x|\right)
\end{eqnarray}
for some constant $K_{\xi}>0$. 

\item\label{As_Factors_vi} The vector valued function $\xi(t,x,z)$ satisfies:
\begin{eqnarray}\label{as_factorjumps_xi_integrable}
   \int_{\mathbf{Z}_0} \lvert\xi(t,x,z)  \rvert \nu(dz) < \infty,
        \qquad  \forall (t,x) \in [0,T]\times\mathbb{R}^n
\end{eqnarray}
and for some constant $c$
\begin{equation} \int_\mathbf{Z}|\xi(t,x,z)|^2\nu(dz)<c(1+|x|)^2.\label{as_factorjumps_xi_sq_int}
\end{equation}
\end{enumerate}
\end{assumption}
The minimal condition on $\xi$ under which the factor equation \eqref{eq_FactorProcess}
is well posed is
\[         \int_{\mathbf{Z}_0} \lvert\xi(t,x,z)  \rvert^2 \nu(dz) < \infty,\]
see Definition II.4.1 in Ikeda and Watanabe~\cite{ikwa81}. However, for this paper it is essential to impose the stronger condition \eqref{as_factorjumps_xi_integrable} in order to connect the viscosity solution of the HJB partial integro-differential equation (PIDE) to the viscosity solution of a related parabolic PDE. The same condition is imposed in Davis-Guo-Wu \cite{guwu09}. Condition \eqref{as_factorjumps_xi_sq_int} is needed in Section \ref{optimalcontrol}.
\newline

\begin{remark}
        Note that~\eqref{As_Factor_eq3} and~\eqref{As_Factor_eq4} follow respectively from~\eqref{As_Factor_eq1} and~\eqref{As_Factor_eq2} when $b$ and $\lambda$ are differentiable. 
\end{remark}

\subsection{Asset Market Dynamics}\label{Chapter3_JDRSAM_theory_assetdynamics}
Let $S_0$ denote the wealth invested in the money market account with dynamics given by the equation:
\begin{equation}\label{eq_JDRSAM_BankAccount}
    \frac{dS_{0}(t)}{S_{0}(t)} = a_0\left(t,X(t)\right)dt,
    \qquad S_0(0) = s_0
\end{equation}
and let $S_{i}(t)$ denote the price at time $t$ of the $i$th security, with $i = 1,
\ldots, m$. The dynamics of risky security $i$ can be expressed as:
\begin{eqnarray}\label{eq_SecurityProcess}
    \frac{dS_{i}(t)}{S_{i}(t^{-})} &=&
        \left[a\left(t,X(t^-)\right)\right]_{i}dt
        + \sum_{k=1}^{N} \Sigma_{ik}(t,X(t)) dW_{k}(t)
        + \int_{\mathbf{Z}}\gamma_i(t,z)\bar{N} (dt,dz),
                                                \nonumber\\
    && S_i(0) = s_i,
    \quad i = 1, \ldots, m
\end{eqnarray}
The standing assumptions are as follows.
\\

\begin{assumption}\label{As_Securities}

\begin{enumerate}[(i)]

\item\label{As_Securities_i} the function $a_0$ defined as $a_0: [0,T]\times\mathbb{R}^n\to\mathbb{R}$ is bounded, of class $C^{1,1}\left( [0,T]\times\mathbb{R}^n\right)$ and is Lipshitz continuous
\begin{eqnarray}
                |a_0(t,y)-a_0(s,x)| \leq K_0\left(|t-s|+|y-x|\right)
\end{eqnarray}
for some constant $K_0>0$. 

\item\label{As_Securities_ii} There exists $K_0'>0$ such that
\begin{eqnarray}
                \bigg|\frac{\partial a_0}{\partial t}\bigg| + |Da_0| \leq K_0'
\end{eqnarray}

\item\label{As_Securities_iii} the function $a: [0,T]\times\mathbb{R}^n \to\mathbb{R}^m$ is bounded and Lipshitz continuous, i.e.
\begin{eqnarray}\label{As_Securities_eq1}
                |a(t,y)-a(s,x)| \leq K_a\left(|t-s|+|y-x|\right)
\end{eqnarray}
for some constant $K_a>0$.

\item\label{As_Securities_iv}  the function $\Sigma: [0,T]\times\mathbb{R}^n\to\in \mathbb{R}^{m \times M}$ is bounded and Lipshitz continuous, i.e.
\begin{eqnarray}\label{As_Securities_eq2}
                |\Sigma(t,y)-\Sigma(s,x)| \leq K_{\Sigma}\left(|t-s|+|y-x|\right)
\end{eqnarray}
for some constant $K_{\Sigma}>0$.

\item\label{As_Securities_v}  There exists $\eta_{\Sigma} > 0$ such that
\begin{eqnarray}\label{H11}
                \zeta'\Sigma\Sigma'(t,x)\zeta \geq \eta_{\Sigma} |\zeta|^2
\end{eqnarray}
for all $\zeta \in \mathbb{R}^m$

\item\label{As_Securities_vi}  There exists $K_a'>0$ and $K_{\Sigma}' >0$ such that
\begin{eqnarray}\label{As_Securities_eq3}
                |a_t| + |a_x| \leq K_a'
\end{eqnarray}
\begin{eqnarray}\label{As_Securities_eq4}
                |\Sigma_t| + |\Sigma_x| \leq K_{\Sigma}'
\end{eqnarray}

\item\label{As_Securities_vii}  the function $\gamma  : [0,T]\times\mathbf{Z}\to\mathbb{R}^m$ is bounded, continuous and satisfies the growth condition
\begin{eqnarray}
                |\gamma(t,z)-\gamma(s,z)| \leq K_{\gamma}\left(|t-s|\right)
\end{eqnarray}
for some constant $K_{\gamma}>0$. 

\item\label{As_Securities_viii} The vector valued function $\gamma(t,z)$ satisfy:
\begin{eqnarray}\label{as_factorjumps_gamma_integrable}
   \int_{\mathbf{Z}_0} \lvert\gamma(t,z)  \rvert^2 \nu(dz) < \infty,
        \qquad  \forall (t,x) \in [0,T]\times\mathbb{R}^n
\end{eqnarray}

\item\label{As_Securities_ix} We also require
\begin{eqnarray}
                |\Lambda\Sigma'(t,y)-\Lambda\Sigma'(s,x)| \leq K_{\Lambda\Sigma}\left(|t-s|+|y-x|\right)
\end{eqnarray}
for some constant $K_{\Lambda\Sigma}>0$
\newline
\end{enumerate}

\end{assumption}

\begin{remark}
        Note that~\eqref{As_Securities_eq3} and~\eqref{As_Securities_eq4} follow respectively from~\eqref{As_Securities_eq1} and~\eqref{As_Securities_eq2} when $b$ and $\lambda$ are differentiable. 
\end{remark}

We need a further assumption relating the jumps in the asset prices and factors.

\begin{assumption}\label{as_JDRSAM_uncorrelatedjumps}
$ \gamma(t,z)\xi'(t,x,z) = 0\quad\forall (t,x,z) \in [0,T]\times\mathbb{R}^n\times\mathbf{S}.$
\end{assumption}
This assumption implies that there are no simultaneous jumps in the factor
process and the asset price process. This imposes some restriction, but appears  essential in the argument below.
\\

\subsection{Portfolio Dynamics}\label{JDRSAM_theory_portfoliodynamics}

The function $\gamma$ appearing in \eqref{eq_SecurityProcess} is assumed to satisfy the following conditions.
\begin{assumption}\label{as_assetjumps_upanddown_1}
Define
\begin{equation}
    \mathbf{S} := \mathrm{supp}(\nu) \in \mathcal{B}_{\textbf{Z}}
                                                \nonumber
\end{equation}
and
\begin{equation}
    \tilde{\mathbf{S}}
    :=  \mathrm{supp}(\nu \circ \gamma^{-1})
    \in \mathcal{B}\left(\mathbb{R}^m\right)
                                                \nonumber
\end{equation}
where $\mathrm{supp}(\cdot)$ denotes the support of the measure, and let $\prod_{i=1}^{m} [\gamma_{i}^{min}, \gamma_{i}^{max}]$ be the smallest closed hypercube containing $\tilde{\mathbf{S}}$. We assume that $\gamma(t,z) \in \mathbb{R}^m$ satisfies
\begin{eqnarray}
        -1 \leq \gamma_{i}^{min} \leq \gamma_{i}(t,z) \leq \gamma_{i}^{max} < +\infty
        , \qquad i =1,\ldots,m
                                                                                        \nonumber
\end{eqnarray}and
\begin{eqnarray}
        \gamma_{i}^{min} < 0 < \gamma_{i}^{max}
        , \qquad i =1,\ldots,m
                                                                                        \nonumber
\end{eqnarray}
for $i = 1, \ldots, m$.
\\
\end{assumption}

Define the set $\mathcal{J}_0$ as
\begin{equation}\label{def_JDRSAM_setJ}
    \mathcal{J}_0 := \left\{h \in \mathbb{R}^m:  -1-h'\psi < 0 \quad \forall \psi \in \tilde{\mathbf{S}}\right\}
\end{equation}
For a given $z\in\mathbf{S}$, the equation $h'\gamma(t,z)= -1$ describes
a hyperplane in $\mathbb{R}^m$. Under Assumption~\ref{as_assetjumps_upanddown_1},
$\mathcal{J}_0$ is a convex subset of $\mathbb{R}^m$ for all $(t,x) \in [0,T] \times \mathbb{R}^n$.
\newline

Let $\mathcal{G}_t := \sigma((S(s), X(s)), 0 \leq s \leq t)$ be the
sigma-field generated by the security and factor processes up to
time $t$.

\begin{definition}\label{def_JDRSAM_controlprocess_h}
    An $\mathbb{R}^m$-valued control process $h(t)$ is in class $\mathcal{H}_0$ if the following conditions are satisfied:
    \begin{enumerate}[(i)]
        \item $h(t)$ is progressively measurable with respect to
        $\left\{ \mathcal{B}([0,t]) \otimes \mathcal{G}_t\right\}_{t \geq
        0}$ and is c\`adl\`ag;
        \item~\label{def_JDRSAM_controlprocess_h_cond3} $h(t)\in{\cal J}_0 \quad\forall t \textrm{ a.s.}$
    \end{enumerate}
\end{definition}
\noindent We note that under Assumption~\ref{as_assetjumps_upanddown_1}, a control process $h(t)$ satisfying (ii) is bounded.
\\

By the budget equation, the proportion invested in the money market account is equal to $h_0(t)=1-\sum_{i=1}^{m} h_i(t)$. This implies that the wealth $V(t)$ of the investor in response to an investment strategy $h(t) \in \mathcal{H}_0$, follows the dynamics

\begin{equation}\label{eq_JDRSAM_V_dynamics}
   \frac{dV(t)}{V(t^-)}
       = \left(a_0\left(t,X(t)\right)\right)dt
      + h'(t)\hat{a}\left(t,X(t)\right)dt
        + h'(t)\Sigma(t,X(t)) dW_t
        + \int_{\mathbf{Z}}h'(t)\gamma(t,z)\bar{N} (dt,dz),
\end{equation}
with $V(0) = v_0$, the initial endowment, and  $\hat{a} := a - a_0\mathbf{1}$, where $\mathbf{1} \in \mathbb{R}^m$ denotes the $m$-element unit column vector. 
\\

\subsection{Investment Constraints}
We consider $r \in \mathbb{N}$ fixed investment constraints expressed in the form
\begin{equation}\label{eq_JDRSAM_constraints}
        \Upsilon' h(t)  \leq \upsilon
\end{equation}
where $\Upsilon \in \mathbb{R}^m \times \mathbb{R}^r$ is a matrix and $\upsilon \in \mathbb{R}^r$ is a column vector. For the constrained control problem to be sensible,  we need $\Upsilon$ and $\upsilon$ to satisfy the following condition:
\begin{assumption}\label{as_JDRSAM_sensible_constraints}
        The system
        \begin{equation}\label{eq_JDRSAM_constraints_ineqsyst}
                \Upsilon' y  \leq \upsilon
                                                                                                \nonumber
        \end{equation}
        for the variable $y \in \mathbb{R}^m$ admits at least two solutions.
\\
\end{assumption}

%

We define the feasible region $\mathcal{J}$ as
\begin{eqnarray}\label{eq_JDRSAM_feasibleregion}
        \mathcal{J} := \left\{ h \in {\cal J}_0: \Upsilon' h  \leq \upsilon \right\}
\end{eqnarray}
The feasible region $\mathcal{J}$ is a a convex subset of $\mathbb{R}^m$ and as a result of Assumption~\ref{as_JDRSAM_sensible_constraints}, $\mathcal{J}$ has at least one interior point.

\subsection{Problem formulation} The class $\mathcal{A}$ of admissible investment strategies is defined as follows.
\begin{definition}\label{def_JDRSAM_admissible_A}
    A control process $h(t)$ is in class $\mathcal{A}$ if the
    following conditions are satisfied:
    
 \noindent(i) $h\in \mathcal{H}$, where
                \begin{eqnarray}\label{def_JDRSAM_investprocess_constrained}
                \mathcal{H} := \left\{ h(t) \in \mathcal{H}_0: h(t)\in{\cal J}\,\,\forall
        t\in[0,T],\mathrm{ a.s.} \right\}
                \end{eqnarray}

\noindent(ii) $\mathbf{E} \chi^h(T)= 1$ where $\chi^h(t)$ is the Dol\'eans exponential defined for $t\in[0,T]$ by
\begin{eqnarray}\label{eq_JDRSAM_Doleansexp_chi}
    \chi^h(t)
    &:=& \exp \left\{ -\theta \int_{0}^{t} h(s)'\Sigma(s,X(s)) dW_s
    -\frac{1}{2} \theta^2 \int_{0}^{t} h(s)'\Sigma\Sigma'(s,X(s))h(s) ds            \right.
                                                            \nonumber\\
   &&   \left.
        +\int_{0}^{t} \int_{\mathbf{Z}} \ln\left(1-G(s,z,h(s))\right) \tilde{N} (ds,dz)
            \right.
                                                            \nonumber\\
   &&   \left.
        +\int_{0}^{t} \int_{\mathbf{Z}} \left\{\ln\left(1-G(s,z,h(s))\right)+G(s,z,h(s))\right\}\nu(dz)ds
    \right\},
                                                                                                                                                                \nonumber\\
\end{eqnarray}
with
\begin{eqnarray}
    G(t,z,h) &=&
        1-\left(1+h'\gamma(t,z)\right)^{-\theta}
\end{eqnarray}
\end{definition}

Recall that our objective is to maximise the risk-sensitive criterion $J(h,v)$ of \eqref{eq_criterion_J} with $F_T=\ln V(T)$.
From \eqref{eq_JDRSAM_V_dynamics} and the general It\^o formula we find that the term $e^{-\theta\ln V(T)}$ can be expressed as
\begin{eqnarray}\label{eq_JDRSAM_eminthetaVt}
    e^{-\theta \ln V(T)} &=&  v_0^{-\theta}
        \exp \left\{ \theta \int_{0}^{T} g(t,X_t,h(t)) dt \right\} \chi^h(T)
\end{eqnarray}
where
\begin{eqnarray}\label{eq_JDRSAM_g_func_def}
    g(t,x,h)
    &=&\frac{1}{2} \left(\theta+1 \right)h'\Sigma\Sigma'(t,x)h
                        - a_0(t,x)
                        - h'\hat{a}(t,x)
                                                            \nonumber\\
   && +\int_{\mathbf{Z}}  \left\{\frac{1}{\theta}
        \left[\left(1+h'\gamma(t,z)\right)^{-\theta}-1\right]
        +h'\gamma(t,z)\mathit{1}_{\mathbf{Z}_0}(z)
        \right\} \nu(dz)
\end{eqnarray}
and the Dol\'eans exponential $\chi^h(T)$ is given by~\eqref{eq_JDRSAM_Doleansexp_chi}.
\\

\begin{remark}\label{prop_JDRSAM_g_bounded_Lipschitz}
        For a given, fixed $h$, the functional $g$ is bounded and Lipschitz continuous in the state variable $x$. This follows easily by boundedness and Lipschitz continuity of the coefficients $a_0$, $a$, $\Sigma$ and $\gamma$.
\\
\end{remark}

For $h\in{\cal A}$ and $\theta>0$ let $\mathbb{P}^h$ be the measure on
$(\Omega,\mathcal{F}_T)$ defined via the Radon-Nikod\'ym derivative
\begin{equation}\label{eq_JDRSAM_RNder_chi}
    \frac{d\mathbb{P}^h}{d\mathbb{P}}
    = \chi^h(T),
\end{equation}
and let $\mathbf{E}^h$ denote the corresponding expectation. Then from \eqref{eq_JDRSAM_eminthetaVt} we see that the criterion $J$ is given by
\begin{equation}\label{Jhv} J(h)=\ln v_0-\frac{1}{\theta}\ln \mathbf{E}^h\left[\exp\left(\theta\int_0^Tg(t,X_t,h(t))dt\right)\right].\end{equation}
Evidently, the value $v_0$ plays no role in the optimization process. Throughout the rest of the paper we normalize to $v_0=1$. Moreover, under $\mathbb{P}^h$,
\begin{equation}
    W_{t}^{h} = W_t + \theta \int_{0}^{t} \Sigma(s,X(s))'h(s) ds
            \nonumber
\end{equation}
is a standard Brownian motion and the $\mathbb{P}^{h}$-compensated Poisson random measure is given by
\begin{eqnarray}
    \int_{0}^{t}\int_{\mathbf{Z}_0}\tilde{N} ^{h}(ds,dz)
&=&     \int_{0}^{t}\int_{\mathbf{Z}_0}N (ds,dz)
    -   \int_{0}^{t}\int_{\mathbf{Z}_0} \left\{1-G(s,X(s),z,h(s))\right\}\nu(dz)ds
                \nonumber\\
&=&     \int_{0}^{t}\int_{\mathbf{Z}_0}N (ds,dz)
    -   \int_{0}^{t}\int_{\mathbf{Z}_0} \left\{\left(1+h'\gamma(s,X(s),z)\right)^{-\theta}\right\} \nu(dz)ds
                \nonumber
\end{eqnarray}
As a result, under $\mathbb{P}^h$ the factor process $X(s), \; 0 \leq s \leq t$ satisfies the SDE:
\begin{equation}\label{eq_JDRSAM_state_SDE}
   \boxed{ dX(s)= f(s,X(s),h(s))ds  + \Lambda(s,X(s)) dW_{s}^{\theta}
           + \int_{\mathbf{Z}} \xi\left(s,X(s^-),z\right)\tilde{N} ^{h}(ds,dz),\quad X(0)=x_0}
\end{equation}
where
\begin{align}\label{eq_JDRSAM_func_f}
        f(t,x,h)
        :=      b(t,x)
                -       \theta\Lambda\Sigma(t,x)'h
        +       \int_{\mathbf{Z}}\xi(t,x,z)\left[
              \left(1+h'\gamma(t,z)\right)^{-\theta}
                - \mathit{1}_{\mathbf{Z}_0}(z)
         \right]\nu(dz)
\end{align}
and $b$ is the $\mathbb{P}$-measure drift of the factor process (see \eqref{eq_FactorProcess}).

\begin{remark}\label{rk_JDRSAM_f_Lipschitz}
The drift function $f$ is Lipschitz continuous with coefficient $K_f = K_b + \theta K_{\Lambda\Sigma}+K_{\xi}K_0$ where $K_0>0$ is a constant.
For a constant control $h$ the state process $X(t)$ is a Markov process with generator
\begin{eqnarray}\label{eq_JDRSAM_generator_X}
   \mathcal{L}u(t,x)
        &:=&
                f(t,x,h)'Du
                +       \frac{1}{2}\textrm{tr}\left(\Lambda\Lambda'(t,X)D^2 u \right)
                                                                                                                                                                        \nonumber\\
        &&
        +       \int_{\mathbf{Z}}
              \left\{u\left(x+ \xi(t,x,z)\right)
            - u(x)
            - \xi(t,x,z)' Du
      \right\} \nu(dz)ds
\end{eqnarray}
\end{remark}
In summary, we have shown that the risk-sensitive asset allocation problem is equivalent to the stochastic control problem of minimizing the cost criterion
\begin{equation}\boxed{ \tilde{J}(h)=\mathbf{E}^h\left[\exp\left(\theta\int_0^Tg(t,X_t,h(t))dt\right)\right]}
\label{Jtilde}\end{equation}
over the control set ${\cal A}$ for a controlled process $X_t$ satisfying (in `weak solution' form) the jump-diffusion SDE \eqref{eq_JDRSAM_state_SDE}. The remainder of the paper is devoted to solving the stochastic control problem \eqref{eq_JDRSAM_state_SDE},\eqref{Jtilde}.

\section{Dynamic programming and the value function}\label{sec_problem}
We will solve the control problem by studying the Hamilton-Jacobi-Bellman (HJB) equation of dynamic programming, which involves embedding the original problem in a family of problems indexed by time-space points $(s,x)$, the starting time and position of the controlled process $X_t$. The description here is in the same spirit as Bouchard and Touzi \cite{BouTou11}.

For fixed $s\in[0,T]$ we define the filtration $\{{\cal F}^s_t, t\in[s,T]\}$ by
\[ {\cal F}^s_t=\sigma\{W_k(r)-W_k(t), N(A,r)-N(A,t), k=1,\ldots,M, A\in\mathcal{B}_\mathbf{Z},s\leq r\leq t\}\]
and note that ${\cal F}^s_t$ is independent of ${\cal F}_t$. $X(t)$ will denote the solution of \eqref{eq_FactorProcess} on $[s,t]$ with initial condition $X(s)=x$ and $\mathbb{P}_{s,x}$ the measure on ${\cal F}^s_T$ such that $\mathbb{P}_{s,x}[X_s=x]=1$. The class of admissible controls ${\cal A}^s$ is defined analogously to  ${\cal A}$ above with $h$ adapted to ${\cal F}^s_t$, leading to a change of measure on ${\cal F}^s_T$ defined by the Radon-Nikod\'ym derivative
\[ \frac{d\mathbb{P}^h_{s,x}}{d\mathbb{P}_{s,x}}=\chi^h_s(T).\]

We will now introduce the following two auxiliary criterion functions under the
measure $\mathbb{P}^{h}_{s,x}$:

\begin{eqnarray}  \tilde{I}(s,x,h)
        &=& \mathbf{E}_{s,x}^{h}\left[\exp \left\{ \theta \int_{s}^{T} g(t,X_t,h(t)) dt\right\}
                        \right]\label{Itilde}\\
I(s,x,h)&=&-\frac{1}{\theta}\ln \tilde{I}(s,x,h).\label{I}\end{eqnarray}

\begin{remark}\label{rem_criterion_tildeI}
The criterion $\tilde{I}$ defined in~\eqref{Itilde}, which is the cost function for our stochastic control problem,
can be interpreted as a payoff of 1 at the terminal time $T$ `discounted' at a stochastic controlled rate of $-\theta g(\cdot)$ (which is however not necessarily $\geq 0$).
\end{remark}

The corresponding value functions are
\begin{equation}
        \tilde{\Phi}(s,x)=\inf_{h\in{\cal A}^s}\tilde{I}(s,x,h);\qquad \Phi(s,x)=\sup_{h\in{\cal A}^s}I(s,x,h).\label{Phi}
\end{equation}

\begin{lemma} $\tilde{\Phi}(s,x)=\inf_{h\in{\cal A}}\tilde{I}(s,x,h)$. That is, the infimum is unchanged if the class ${\cal A}^s$ is replaced by the larger class ${\cal A}$.\end{lemma}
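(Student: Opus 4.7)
The plan is to prove the two inequalities separately. The easy direction is $\inf_{h\in\mathcal{A}}\tilde{I}(s,x,h)\leq\tilde\Phi(s,x)$: any $h\in\mathcal{A}^s$ (defined only on $[s,T]$ and adapted to $\mathcal{F}^s_t$) can be extended to all of $[0,T]$ by setting $h(t)\equiv h_0$ on $[0,s)$, where $h_0$ is some fixed element of $\mathcal{J}$ (which exists by Assumption~\ref{as_JDRSAM_sensible_constraints}). The extended process lies in $\mathcal{A}$, and since $\tilde{I}(s,x,\cdot)$ depends only on the restriction of the control to $[s,T]$, the cost is unchanged.

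For the harder direction, I would exploit the fact, stated just above the lemma, that $\mathcal{F}^s_T$ is independent of $\mathcal{F}_s$. Fix $h\in\mathcal{A}$. Write a generic outcome as $\omega=(\omega_1,\omega_2)$ with $\omega_1$ carrying the $\mathcal{F}_s$-information and $\omega_2$ the $\mathcal{F}^s_T$-information, and for each fixed $\omega_1$ define the ``frozen'' control $h^{\omega_1}(t,\omega_2):=h(t,\omega_1,\omega_2)$ restricted to $t\in[s,T]$. Because $h$ is progressively measurable, for almost every $\omega_1$ the frozen control $h^{\omega_1}$ is $\mathcal{F}^s_t$-adapted, c\`adl\`ag and still takes values in $\mathcal{J}$, so $h^{\omega_1}\in\mathcal{A}^s$.

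The key observation is that the Dol\'eans exponential $\chi^h_s(T)$ in \eqref{eq_JDRSAM_Doleansexp_chi} and the integral $\int_s^T g(t,X_t,h(t))\,dt$ appearing inside $\tilde{I}(s,x,h)$ are built only from the increments of $W$ and $N$ on $(s,T]$, the solution of \eqref{eq_JDRSAM_state_SDE} with deterministic initial datum $X(s)=x$, and the values of $h$ on $[s,T]$; all of these are $\mathcal{F}^s_T$-measurable functions of $\omega_2$ once $h$ is frozen at $\omega_1$. Therefore, by Fubini and the independence of $\mathcal{F}_s$ and $\mathcal{F}^s_T$,
\begin{equation*}
        \tilde{I}(s,x,h)=\mathbf{E}\bigl[\tilde{I}(s,x,h^{\omega_1})\bigr]\geq \inf_{\omega_1}\tilde{I}(s,x,h^{\omega_1})\geq \tilde\Phi(s,x),
\end{equation*}
and taking the infimum over $h\in\mathcal{A}$ gives the reverse inequality.

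The only delicate step is the Fubini identity in the displayed equation: it requires verifying carefully that (i) for a.e. $\omega_1$ the frozen control defines an admissible process in $\mathcal{A}^s$ (which uses the c\`adl\`ag progressive measurability of $h$ together with a standard measurable-section argument) and (ii) that the change of measure decomposes as claimed, so that the conditional expectation of the cost under $\mathbb{P}^h_{s,x}$ given $\mathcal{F}_s$ equals $\tilde{I}(s,x,h^{\omega_1})$. This second point is the main obstacle, but it follows from inspection of \eqref{eq_JDRSAM_Doleansexp_chi} because every stochastic integral there is taken over $(s,T]$ against $W$, $\tilde N$ and $\nu$, all of which decouple from $\mathcal{F}_s$.
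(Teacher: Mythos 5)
Your argument is correct and is essentially the same as the paper's: the paper's proof is a one-line citation to Remark 2 of Bouchard and Touzi, which is exactly the conditioning-on-$\mathcal{F}_s$ / independence-of-$\mathcal{F}^s_T$ argument you spell out. Your write-up simply makes explicit the freezing of the initial information and the Fubini decomposition that the paper leaves to the reference.
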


\proof This uses exactly the argument of Remark 2, page 958 of Bouchard and Touzi \cite{BouTou11}. We condition on the initial filtration and use the independence of ${\cal F}_s$ and ${\cal F}^s_t$.\hfill\endproof

\subsection{The Risk-Sensitive Control Problems under $\mathbb{P}_{h}$}\label{sec_Optim}

We will show that the value function $\Phi$ defined in~\eqref{Phi} satisfies the HJB PIDE
\begin{equation}\label{eq_JDRSAM_HJBPDE}
    \frac{\partial \Phi}{\partial t}
    + \sup_{h \in \mathcal{J}}
    L^{h}\left(t, x,\Phi,  D\Phi, D^2\Phi \right) = 0
\end{equation}
where $\mathcal{J}$ is defined in~\eqref{eq_JDRSAM_feasibleregion},
\begin{eqnarray}\label{eq_JDRSAM_HJBPDE_operator_L}
    L^{h}\left(t, x, u, p, M \right)
    &=& f(t,x,h)'p
        + \frac{1}{2} \textrm{tr} \left( \Lambda\Lambda'(t,x) M \right)
        - \frac{\theta}{2} p' \Lambda\Lambda'(t,x) p
                                                                                        \nonumber\\
    &&
        - g(t,x,h)
        + \mathcal{I}_{NL}\left[t,x,u,p \right]
\end{eqnarray}
with 
\begin{eqnarray}
        \mathcal{I}_{NL}\left[t,x,u, p\right]
        &=&  \int_{\mathbf{Z}} \left\{
                - \frac{1}{\theta}\left(
          e^{-\theta\left[u\left(t,x+\xi(t,x,z)\right)- u(t,x) \right]} -1
         \right)
        - \xi(t,x,z)' p
        \right\} \nu(dz)
\end{eqnarray}
and subject to the terminal condition (recall our normalization $v_0=1$)
\begin{equation}\label{eq_JDRSAM_HJBPDE_termcond}
        \Phi(T, x) = 0
        ,\qquad x \in \mathbb{R}^n.
\end{equation}
Condition \eqref{as_factorjumps_xi_integrable} ensures that $\mathcal{I}_{NL}$ is well defined, at least for bounded $u$.

For $\tilde{\Phi}$, the corresponding HJB PIDE is
\begin{eqnarray}\label{eq_JDRSAM_exptrans_HJBPDE}
   &&  \frac{\partial \tilde{\Phi}}{\partial t}(t,x)
                + \frac{1}{2} \textrm{tr} \left( \Lambda\Lambda'(t,x) D^2 \tilde{\Phi}(t,x)\right)
                        + H(t,x,\tilde{\Phi},D\tilde{\Phi})
                                                                                                        \nonumber\\
        &&      + \int_{\mathbf{Z}} \left\{
              \tilde{\Phi}\left(t,x+ \xi(t,x,z)\right)
            - \tilde{\Phi}(t,x)
            - \xi(t,x,z)' D\tilde{\Phi}(t,x)
        \right\} \nu(dz)=0
\end{eqnarray}
subject to terminal condition
\begin{eqnarray}\label{eq_JDRSAM_exptrans_HJBPDE_termcond}
        \tilde{\Phi}(T, x) = 1
\end{eqnarray}
where for $r \in \mathbb{R}$, $p \in \mathbb{R}^n$
\begin{eqnarray}\label{eq_JDRSAM_logtrans_H_function}
   H(s,x,r,p)
        &=& \inf_{h \in \mathcal{J}} \left\{
                        f(s,x,h)'p
                        +       \theta g(s,x,h) r
        \right\}
                                                                                                                \nonumber\\
\end{eqnarray}
\begin{remark}\label{rk_JDRSAM_H_Lipschitz}
The function $H$ satisfies a Lipschitz condition as well as the linear growth condition
\begin{eqnarray}
        \lvert H(s,x,r,p) \rvert
        \leq C\left(1 + \lvert p \rvert\right)
        ,       \quad   \forall (s,x) \in Q_0
                                                        \nonumber
\end{eqnarray}
\end{remark}
The value functions $\Phi$ and $\tilde{\Phi}$ are related through the strictly monotone continuous transformation $\tilde{\Phi}(t,x) = \exp \left\{-\theta \Phi(t,x) \right\}$. Thus an admissible (optimal) strategy for the exponentially transformed problem is also admissible (optimal) for the risk-sensitive problem. In the remainder of the article, we will refer to the control problem and HJB PIDE related to the value function $\Phi$ as the \emph{risk sensitive} control problem and the \emph{risk sensitive} HJB PIDE, and to the control problem and HJB PIDE related to the value function $\tilde{\Phi}$  as the \emph{exponentially transformed} control problem and the \emph{exponentially transformed} HJB PIDE.

\subsection{Properties of the Value Function $\tilde{\Phi}$} We start by establishing two \emph{a priori} properties of the value function.
\begin{proposition}\label{prop_JDRSAM_tildePhi_bounded}
There exists $M>0$ such that
\[ 0 < \tilde{\Phi}(t,x) \leq M \qquad \forall (t,x) \in [0,T]\times\mathbb{R}^n.\]
\end{proposition}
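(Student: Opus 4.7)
The plan is to derive explicit uniform bounds on the pathwise quantity $\exp\bigl(\theta\int_s^T g(t,X_t,h(t))\,dt\bigr)$ and then pass these bounds through the expectation $\mathbf{E}^h_{s,x}$ and the infimum over $h\in{\cal A}^s$. Both bounds reduce to uniform bounds on the running cost $g(t,x,h)$.

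For the upper bound, I would exhibit a single admissible control. By Assumption \ref{as_JDRSAM_sensible_constraints} the feasible region $\mathcal{J}$ has non-empty interior, so fix any $h_0$ in this interior; the constant strategy $h(t)\equiv h_0$ lies in $\mathcal{A}^s$. By Remark \ref{prop_JDRSAM_g_bounded_Lipschitz}, applied with this fixed $h_0$, there is a constant $C_0$ with $|g(t,x,h_0)|\leq C_0$ uniformly in $(t,x)$. Hence
\[
\tilde{\Phi}(s,x)\;\leq\;\tilde{I}(s,x,h_0)\;=\;\mathbf{E}^{h_0}_{s,x}\!\left[\exp\!\Bigl(\theta\int_s^T g(t,X_t,h_0)\,dt\Bigr)\right]\;\leq\;\exp(\theta C_0 T)\;=:\;M.
\]

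For the strict positive lower bound, the task is to produce a uniform bound $g(t,x,h)\geq -C_-$ valid for all $(t,x)\in[0,T]\times\mathbb{R}^n$ and \emph{all} $h\in\mathcal{J}$. Inspecting the formula \eqref{eq_JDRSAM_g_func_def} term by term: the quadratic piece $\frac{1}{2}(\theta+1)h'\Sigma\Sigma'(t,x)h$ is non-negative, the piece $-a_0-h'\hat a$ is bounded since $a_0,\hat a$ are bounded (Assumption \ref{As_Securities}) and $h$ is bounded (from Assumption \ref{as_assetjumps_upanddown_1}). For the jump integral I would split on $\mathbf{Z}_0$ versus $\mathbf{Z}\setminus\mathbf{Z}_0$: on $\mathbf{Z}_0$ the integrand equals $\frac{1}{\theta}\bigl[(1+h'\gamma)^{-\theta}-1+\theta h'\gamma\bigr]$, which is non-negative by convexity of $y\mapsto (1+y)^{-\theta}$ for $y>-1$ (the expression is the Bregman residual of this convex function at $y=0$); on $\mathbf{Z}\setminus\mathbf{Z}_0$ the integrand is $\frac{1}{\theta}\bigl[(1+h'\gamma)^{-\theta}-1\bigr]\geq -\frac{1}{\theta}$, and $\nu(\mathbf{Z}\setminus\mathbf{Z}_0)<\infty$, so that piece is uniformly bounded below. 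Combining yields the desired uniform lower bound on $g$, whence for every $h\in{\cal A}^s$
\[
\tilde{I}(s,x,h)\;\geq\;\exp(-\theta C_- T)\;>\;0,
\]
and taking the infimum gives $\tilde{\Phi}(s,x)\geq \exp(-\theta C_- T)>0$.

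The main obstacle is the jump integral in $g$: in particular, recognising the sign on $\mathbf{Z}_0$ via the convexity inequality and exploiting the finiteness of $\nu$ on $\mathbf{Z}\setminus\mathbf{Z}_0$ to absorb the negative contribution there. Everything else is a routine application of the boundedness/Lipschitz hypotheses collected in Section \ref{sec_setting}.
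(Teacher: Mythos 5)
Your proof is correct. The upper bound is essentially the paper's argument: the paper exhibits the single admissible control $h\equiv 0$ and bounds $g(\cdot,\cdot,0)=-a_0$ by $\check a_0=\sup|a_0|$ to get $\tilde{\Phi}\leq e^{\theta\check a_0(T-t)}$; you instead fix an interior point $h_0$ of $\mathcal{J}$ and invoke Remark \ref{prop_JDRSAM_g_bounded_Lipschitz}. Your variant is marginally more careful, since $0\in\mathcal{J}$ is only guaranteed when $\upsilon\geq 0$, whereas Assumption \ref{as_JDRSAM_sensible_constraints} does guarantee an interior point. Where you genuinely diverge is the lower bound. The paper justifies $\tilde{\Phi}>0$ only by appeal to Corollary \ref{coro_JDRSAM_optimcontrol_tilde} (existence of an interior minimizer of the Hamiltonian), which is an indirect route; you instead prove a uniform bound $g(t,x,h)\geq -C_-$ over all $(t,x)$ and all $h\in\mathcal{J}$, by noting that the quadratic term is nonnegative, that $-a_0-h'\hat a$ is bounded (boundedness of the coefficients plus boundedness of $\mathcal{J}_0$ under Assumption \ref{as_assetjumps_upanddown_1}), and that the jump integrand is nonnegative on $\mathbf{Z}_0$ by the convexity inequality $(1+y)^{-\theta}\geq 1-\theta y$ for $y>-1$, while on $\mathbf{Z}\setminus\mathbf{Z}_0$ it is bounded below by $-1/\theta$ on a set of finite $\nu$-measure. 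This pathwise bound immediately gives $\tilde I(s,x,h)\geq e^{-\theta C_-T}$ for every admissible $h$ (using that $\mathbb{P}^h$ is a probability measure), hence a \emph{uniform} positive lower bound on $\tilde{\Phi}$ — strictly stronger than the pointwise positivity claimed in the statement, and more self-contained than the paper's citation. Both routes are valid; yours buys an explicit quantitative lower bound at the cost of the term-by-term analysis of $g$.
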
 
\proof
The strategy of investing only in the money-market account, i.e. taking $h\equiv 0$ is sub-optimal, and hence
\[
    \tilde{\Phi}(t,x)
    \leq  \mathbf{E}_{t,x}^{0}e^{\theta \int_{t}^{T} g(X(s),0)ds}
    = \mathbf{E}_{t,x}^{0} e^{\theta \int_{t}^{T} a_0(s,X(s)ds}
    \leq e^{\theta\check{a}_0(T-t)},\]                                                                     
where $\check{a}_0$ is a bound for $|a_0(t,x)|$ (see Assumption \ref{As_Securities}(\ref{As_Securities_i})).
\newline

Moreover,
\begin{eqnarray}
        \tilde{\Phi}(t,x)
        &=&     \inf_{h \in \mathcal{A}} \mathbf{E}_{t,x}^{h}
                        \left[ \exp \left\{ \theta \int_{t}^{T} g(s,X_s,h(s))ds
                        \right\} \right]
        > 0
                                                                                                \nonumber
\end{eqnarray}
This follows from Corollary~\ref{coro_JDRSAM_optimcontrol_tilde} below: the concave minimization problem admits a unique minimizer which is an interior point of the set $\mathcal{J}$ defined in equation~\eqref{def_JDRSAM_setJ}. This concludes the proof.\hfill\endproof

\begin{proposition}\label{prop_tildePhi_Lipschitz}
        The value function $\tilde{\Phi}$ is Lipschitz continuous in the state variable $x$.
\end{proposition}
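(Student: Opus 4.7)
The plan is to combine a standard flow estimate for the controlled SDE \eqref{eq_JDRSAM_state_SDE} with the Lipschitz continuity of the running cost $g$. Fix $t\in[0,T]$ and $x,y\in\mathbb{R}^n$. For each admissible strategy $h\in\mathcal{A}^t$, realize on a common probability space the two solutions $X^{t,x,h}$ and $X^{t,y,h}$ of \eqref{eq_JDRSAM_state_SDE} with initial conditions $x$ and $y$ respectively, driven by the \emph{same} $\mathbb{P}^h$-Brownian motion $W^\theta$ and the \emph{same} $\mathbb{P}^h$-compensated Poisson random measure $\tilde N^h$. Under Assumption~\ref{as_assetjumps_upanddown_1}, any $h\in\mathcal{J}$ is bounded and $1+h'\gamma(s,z)$ stays uniformly bounded away from zero, so the $\mathbb{P}^h$-compensator $(1+h'\gamma)^{-\theta}\nu(dz)\,ds$ of $\tilde N^h$ is bounded uniformly in $h\in\mathcal{A}^t$.

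The first step is the SDE flow estimate. Subtracting the two equations, applying It\^o's formula to $|X^{t,x,h}(s)-X^{t,y,h}(s)|^2$, and invoking Lipschitz continuity of $f$ (Remark~\ref{rk_JDRSAM_f_Lipschitz}), of $\Lambda$ (Assumption~\ref{As_Factors}(ii)) and of $\xi$ (Assumption~\ref{As_Factors}(v)), together with the Burkholder--Davis--Gundy inequality (condition \eqref{as_factorjumps_xi_sq_int} is what makes BDG applicable to the compensated jump integral) and Gronwall's lemma, yields
\begin{equation*}
\mathbf{E}_{t,x}^{h}\!\left[\sup_{s\in[t,T]}\bigl|X^{t,x,h}(s)-X^{t,y,h}(s)\bigr|^{2}\right]\leq C\,|x-y|^{2},
\end{equation*}
with a constant $C$ depending only on $T$ and the uniform Lipschitz and boundedness constants of $f,\Lambda,\xi,\gamma$, hence independent of $h\in\mathcal{A}^t$.

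The second step transfers this estimate to the cost. By Remark~\ref{prop_JDRSAM_g_bounded_Lipschitz} together with boundedness of $\mathcal{J}$ and of the coefficients $a_0,\hat a,\Sigma,\gamma$, there exist constants $K_g,L_g$ such that $|g(s,x,h)|\leq K_g$ and $|g(s,x,h)-g(s,y,h)|\leq L_g|x-y|$ uniformly in $(s,h)\in[0,T]\times\mathcal{J}$. Applying the elementary inequality $|e^{a}-e^{b}|\leq e^{a\vee b}|a-b|$ inside the expectation defining $\tilde I$, and then the Cauchy--Schwarz inequality together with the flow estimate, gives
\begin{equation*}
\bigl|\tilde I(t,x,h)-\tilde I(t,y,h)\bigr|
\leq \theta\, e^{\theta K_g T}\,L_g\int_{t}^{T}\mathbf{E}_{t,x}^{h}\bigl|X^{t,x,h}(s)-X^{t,y,h}(s)\bigr|\,ds
\leq C'|x-y|,
\end{equation*}
with $C'$ independent of $h$. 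Taking the infimum over $h\in\mathcal{A}^t$ on both sides yields $|\tilde\Phi(t,x)-\tilde\Phi(t,y)|\leq C'|x-y|$.

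The main technical obstacle is obtaining the flow estimate with a constant that does \emph{not} depend on the control, since both the drift $f$ and the jump compensator under $\mathbb{P}^h$ depend on $h$. This is precisely where the structure imposed in Section~\ref{sec_setting} pays off: Assumption~\ref{as_assetjumps_upanddown_1} and the inclusion $h(t)\in\mathcal{J}\subset\mathcal{J}_0$ keep $(1+h'\gamma)^{-\theta}$ uniformly bounded, and Remark~\ref{rk_JDRSAM_f_Lipschitz} gives a uniform Lipschitz constant for $f$. Once uniformity is secured, the remainder is a routine jump-diffusion estimate, so the argument is deferred to the Appendix.
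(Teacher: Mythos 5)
Your architecture (synchronous coupling, a flow estimate uniform over $h$, then transfer to the cost via boundedness and Lipschitz continuity of $g$ and the inequality $|\inf_h F(x,h)-\inf_h F(y,h)|\leq\sup_h|F(x,h)-F(y,h)|$) is the right shape, and your second step is essentially how the paper treats what it calls term $A$ in Appendix~\ref{App_Proof_prop_tildePhi_Lipschitz}. The gap is in the first step. The problem is posed in weak form: the controls in $\mathcal{A}^t$ are adapted to the filtration of the original noise $(W,N)$, the factor process solves the autonomous equation \eqref{eq_FactorProcess} under $\mathbb{P}$, and the cost is $\tilde I(t,x,h)=\mathbf{E}_{t,x}\bigl[\chi^h(T)\exp\bigl(\theta\int_t^T g(s,X_s,h_s)\,ds\bigr)\bigr]$ with $\chi^h$ given by \eqref{eq_JDRSAM_Doleansexp_chi}. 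The exponent of $\chi^h$ contains $\Sigma(s,X(s))$, so both the density $\chi^h$ and the shifted process $W^\theta=W+\theta\int\Sigma(s,X(s))'h\,ds$ depend on the trajectory of $X$, hence on the initial point. There is consequently no single ``$\mathbb{P}^h$-Brownian motion $W^\theta$'' driving both $X^{t,x,h}$ and $X^{t,y,h}$: you obtain a pair $(\mathbb{P}^h_{t,x},W^{\theta,x})$ and a different pair $(\mathbb{P}^h_{t,y},W^{\theta,y})$. You can build an auxiliary space carrying a Brownian motion and a point process with the $h$-compensator and solve \eqref{eq_JDRSAM_state_SDE} strongly from both initial points, but then ``the same control'' on that space is a different functional of the driving noise from the original $\mathcal{F}^t_s$-adapted $h$ (writing $h=\varphi(W,N)$ with $W=W^\theta-\theta\int\Sigma(X)'h\,ds$ yields two distinct control processes for the two initial conditions), so the quantity you bound is no longer $\tilde I(t,y,h)$ for the same $h$. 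Your argument is complete for deterministic or Markov feedback controls, but not for the class $\mathcal{A}^t$ as defined.

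The paper avoids this by staying under $\mathbb{P}$, where the two factor processes are driven by literally the same $(W,N)$ with literally the same $h$; the price is that the increment of the cost splits into a term $A$ (the difference of $\exp(\theta\int g)$ along the coupled paths, which is what you estimate) plus a term $B$ arising from the difference of the Girsanov densities, namely an expectation against $\mathcal{D}_t^{1,2}-1$ where $\mathcal{D}^{1,2}$ is the ratio of the two diffusion exponentials. Controlling $B$ is the bulk of the appendix: it requires a martingale representation of $\mathcal{D}^{1,2}-1$, a Cauchy--Schwarz step, fourth-moment flow estimates of the form $\mathbf{E}|X_1(u)-X_2(u)|^4\leq e^{C(u-t)}|x_1-x_2|^4$, and two further changes of measure to absorb the resulting exponential martingales. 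That contribution is entirely absent from your proof, and it cannot be made to vanish without first justifying the coupling under $\mathbb{P}^h$ --- which is precisely the step that fails. To repair your proof you would either have to restrict to feedback controls and argue that the infimum is unchanged, or reinstate the density difference as the paper does.
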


\begin{proof}
        See Appendix~\ref{App_Proof_prop_tildePhi_Lipschitz}.
\end{proof}

\begin{proposition}\label{prop_JDRSAM_optimcontrol}
        Under (\ref{H11}) and Assumption~\ref{as_JDRSAM_uncorrelatedjumps}, the supremum in~\eqref{eq_JDRSAM_HJBPDE}, \eqref{eq_JDRSAM_HJBPDE_operator_L}  admits a unique Borel measurable maximizer $\hat{h}(t,x,p)$ for $(t,x,p) \in [0,T]\times\mathbb{R}^n\times\mathbb{R}^n$.
\end{proposition}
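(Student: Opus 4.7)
The plan is to isolate the $h$-dependence in $L^{h}$, reduce the problem to a strictly concave maximization over $\mathcal{J}$, establish attainment by a coercivity argument at $\partial\mathcal{J}_0$, and deduce Borel measurability (indeed continuity) of the maximizer from joint continuity of the objective.

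First, I would use Assumption \ref{as_JDRSAM_uncorrelatedjumps}, $\gamma(t,z)\xi'(t,x,z) \equiv 0$, to strip $h$ from the non-local integral in $f(t,x,h)'p$. Since $\gamma$ and $\xi$ have disjoint supports pointwise in $z$, wherever $\xi(t,x,z) \neq 0$ we have $\gamma(t,z)=0$ and hence $(1+h'\gamma(t,z))^{-\theta} = 1$, while wherever $\xi(t,x,z)=0$ the integrand vanishes; thus $\int_{\mathbf{Z}}(\xi'p)\bigl[(1+h'\gamma)^{-\theta} - \mathbf{1}_{\mathbf{Z}_0}\bigr]\nu(dz)$ is independent of $h$. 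Combined with $-g(t,x,h)$, the supremum in \eqref{eq_JDRSAM_HJBPDE} coincides with the supremum over $h\in\mathcal{J}$ of
\[
\phi(h;t,x,p) := -\theta h'\Sigma\Lambda'(t,x)p + h'\hat{a}(t,x) - \tfrac{1}{2}(\theta+1)h'\Sigma\Sigma'(t,x)h - \int_{\mathbf{Z}}\Bigl\{\tfrac{1}{\theta}\bigl[(1+h'\gamma(t,z))^{-\theta}-1\bigr] + h'\gamma(t,z)\mathbf{1}_{\mathbf{Z}_0}(z)\Bigr\}\nu(dz).
\]

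Next I would verify strict concavity and attainment. The quadratic term is strictly concave by \eqref{H11}; for each $z$ the scalar map $u \mapsto -u^{-\theta}$ is concave on $(0,\infty)$, so $h \mapsto -(1+h'\gamma(t,z))^{-\theta}$ is concave on $\mathcal{J}_0$, and integration preserves concavity (the combined integrand is $O(|h'\gamma|^2)$ near $h'\gamma=0$, hence $\nu$-integrable by \eqref{as_factorjumps_gamma_integrable}). The remaining terms are affine, so $\phi$ is strictly concave on $\mathcal{J}$. Because $\gamma_i^{\min} < 0 < \gamma_i^{\max}$, the set $\mathcal{J}_0$ is bounded, hence so is $\mathcal{J}$; as $h$ approaches $\partial\mathcal{J}_0$, some $1+h'\gamma(t,z)\downarrow 0$ drives $(1+h'\gamma)^{-\theta}\uparrow +\infty$ and $\phi\to-\infty$, while the other terms remain bounded on $\overline{\mathcal{J}}$. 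Therefore the supremum is attained at a point of $\mathcal{J}_0\cap\{\Upsilon'h\leq\upsilon\}=\mathcal{J}$ (which has non-empty interior by Assumption \ref{as_JDRSAM_sensible_constraints}), and strict concavity delivers uniqueness of $\hat{h}(t,x,p)$.

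For Borel measurability, the map $(h,t,x,p)\mapsto\phi(h;t,x,p)$ is jointly continuous by continuity and boundedness of $a_0$, $\hat{a}$, $\Sigma$, $\Lambda$, $\gamma$ together with dominated convergence in the $\nu$-integral. The coercivity $\phi\to-\infty$ at $\partial\mathcal{J}_0$ is uniform on compacts of $(t,x,p)$, so on each such compact the maximizers lie in a fixed compact $K\subset\mathcal{J}_0$; Berge's maximum theorem applied to $\phi\restriction K$ yields upper hemicontinuity of the argmax, and single-valuedness promotes this to continuity of $\hat{h}$, whence Borel measurability on $[0,T]\times\mathbb{R}^n\times\mathbb{R}^n$. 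The main obstacle is attainment: $\mathcal{J}$ fails to be closed because $\mathcal{J}_0$ is open (strict positivity of post-jump wealth requires $1+h'\gamma > 0$), so standard compactness is unavailable and one must exploit the blow-up of the integral term at $\partial\mathcal{J}_0$, combined with boundedness of $\mathcal{J}_0$, to confine the supremum to a compact subset; once this coercivity is in hand, the rest is routine convex analysis and measurable selection.
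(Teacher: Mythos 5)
Your reduction is the same as the paper's: you use Assumption~\ref{as_JDRSAM_uncorrelatedjumps} to observe that $\xi$ and $\gamma$ cannot be simultaneously nonzero, so the $\xi'p$-weighted jump term loses its $h$-dependence, and you are left with a strictly concave (by \eqref{H11}) objective over the convex set $\mathcal{J}$; this is exactly the decomposition behind the paper's auxiliary functional $\ell(h;x,p)$, of which your $\phi$ is the negative. Where you genuinely diverge is in how attainment and measurability are handled. The paper asserts that $\ell$ attains its infimum and then invokes Lagrange duality (Luenberger, Thm.~1, \S 8.6) under the Slater-type Assumption~\ref{as_JDRSAM_sensible_constraints} to conclude existence, uniqueness and interiority of the minimizer, followed by a bare appeal to measurable selection. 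You instead prove attainment directly by coercivity: since $\mathcal{J}_0$ is open (so compactness is not available off the shelf), you exploit the blow-up of $\int\frac{1}{\theta}[(1+h'\gamma)^{-\theta}-1]\nu(dz)$ as $h$ approaches $\partial\mathcal{J}_0$, together with boundedness of $\mathcal{J}_0$, to confine the maximizers to a fixed compact subset of $\mathcal{J}_0$, and then get continuity (hence Borel measurability) of $\hat{h}$ from Berge's maximum theorem plus single-valuedness. This is a legitimate alternative and in one respect more careful than the paper, since Lagrange duality presupposes rather than delivers primal attainment; your route also yields the interiority of $\hat h$ in $\mathcal{J}_0$ as a byproduct. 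Two small points to tighten if you write this up: (i) the blow-up of the integral near $\partial\mathcal{J}_0$ needs the observation that any neighbourhood of the critical support point $\psi_0\in\tilde{\mathbf{S}}$ with $1+h'\psi_0\to 0$ has strictly positive, finite $\nu\circ\gamma^{-1}$-measure (finiteness because $\psi_0\neq 0$ and $\int_{\mathbf{Z}_0}|\gamma|^2\nu(dz)<\infty$), and that the integrand is uniformly bounded below so no cancellation occurs; (ii) the boundedness of $\mathcal{J}_0$ and the nonemptiness of the interior of $\mathcal{J}$ are assertions you inherit from the paper's remarks rather than facts you establish, which is acceptable here but worth flagging.
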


\begin{proof}
The supremum in~\eqref{eq_JDRSAM_HJBPDE} can be expressed as
\begin{eqnarray}\label{eq_JDRSAM_supL_deriv}
        && \sup_{h \in \mathcal{J}}  L^{h}\left(t, x, u, p, M \right)
                    \nonumber\\
        &=& \sup_{h \in \mathcal{J}} \left\{
        \left( b(t,x)
            +\int_{\mathbf{Z}} \xi(t,x,z)\left[
                \left( 1+h'\gamma(t,z)\right)^{-\theta}
                    - \mathit{1}_{\mathbf{Z}_0}(z)
                \right]\nu(dz)
            \right)'p
            -\theta h'\Sigma\Lambda(t,x)'p
            \right. \nonumber\\
            &&\left.
            + \frac{1}{2} \textrm{tr} \left( \Lambda\Lambda'(t,x)' M \right)
            - \frac{\theta}{2} p'\Lambda\Lambda'(t,x)' p
            + \mathcal{I}_{NL}\left[t,x,u,p \right]
                \right. \nonumber\\
                &&\left.
            -\frac{1}{2} \left(\theta+1 \right)h'\Sigma\Sigma'(t,x)h
            + a_0(t,x)
           + h'\hat{a}(t,x)
                \right. \nonumber\\
            &&\left.
            - \int_{\mathbf{Z}}  \left\{\frac{1}{\theta}
                \left[\left(1+h'\gamma(t,z)\right)^{-\theta}-1\right]
                +h'\gamma(t,z)\mathit{1}_{\mathbf{Z}_0}(z)
            \right\} \nu(dz)
            \right\}
                    \nonumber\\
        &=&
                        b'(t,x)p
            +   \frac{1}{2} \textrm{tr} \left( \Lambda\Lambda'(t,x) M \right)
            - \frac{\theta}{2} p'\Lambda\Lambda'(t,x)' p
            +   a_0(t,x)
            + \mathcal{I}_{NL}\left[t,x,u,p \right]
                \nonumber\\
        &&
            +\sup_{h \in \mathcal{J}} \left\{
            - \frac{1}{2} \left(\theta+1 \right)h'\Sigma\Sigma'(t,x)'h
            -\theta h'\Sigma\Lambda'(t,x)p
            +h'\hat{a}(t,x)
                \right. \nonumber\\
            &&\left.
            -\frac{1}{\theta}\int_{\mathbf{Z}}\left\{
                        \left(1 - \theta\xi(t,x,z)'p\right)
                                                \left[\left(1+h'\gamma(t,z)\right)^{-\theta}-1\right]
                                        +\theta h'\gamma(t,z)\mathit{1}_{\mathbf{Z}_0}(z)
                    \right\}\nu(dz)
            \right\}
\end{eqnarray}

Define the auxiliary functional
\begin{eqnarray}
        \ell(h;x,p)
        &=&
                \frac{1}{2} \left(\theta+1 \right)h'\Sigma\Sigma'(t,x)h
            +\theta h'\Sigma\Lambda'(t,x)'p
            -h'\hat{a}(t,x)
                                                                                                                        \nonumber\\
        &&
                +\frac{1}{\theta}\int_{\mathbf{Z}}\left\{
                \left(1 - \theta\xi(t,x,z)'p \right)
                                \left[\left(1+h'\gamma(t,z)\right)^{-\theta}-1\right]
                +\theta h'\gamma(t,z)\mathit{1}_{\mathbf{Z}_0}(z)
      \right\}\nu(dz)
                                                                                                                                                \nonumber
\end{eqnarray}
for $h \in \mathbb{R}^m$, $x \in \mathbb{R}^n$, $p \in \mathbb{R}^n$ and $\theta \in (0,\infty)$. Under Assumption (\ref{H11}), for any $p \in \mathbb{R}^n$ the terms
\begin{eqnarray}
    \frac{1}{2} \left(\theta+1 \right)h'\Sigma\Sigma'(t,x)h
    + \theta h'\Sigma\Lambda'(t,x)p
    - h'\hat{a}(t,x)
    + \int_{\mathbf{Z}} h'\gamma(t,z)\mathit{1}_{\mathbf{Z}_0}(z)\nu(dz)
                                        \nonumber
\end{eqnarray}
is strictly convex in $h$ $\forall (t,x,z) \in [0,T]\times\mathbb{R}^n\times\mathbf{Z}$ a.s. $d\nu$. Under Assumption~\ref{as_JDRSAM_uncorrelatedjumps}, the nonlinear jump-related term
\begin{eqnarray}
        \frac{1}{\theta}\int_{\mathbf{Z}}\left\{
        \left(1 - \theta\xi'(t,x,z)p\right)
           \left[\left(1+h'\gamma(t,z)\right)^{-\theta}-1\right]
    \right\}\nu(dz)
                                                                                                                                \nonumber
\end{eqnarray}
simplifies to
\begin{eqnarray}
        \frac{1}{\theta}\int_{\mathbf{Z}}
       \left\{
               \left[\left(1+h'\gamma(t,z)\right)^{-\theta}-1\right]
           \right\}\nu(dz)
                                                                                                                                \nonumber
\end{eqnarray}
which is also convex in $h$ $\forall (t,x,z) \in [0,T]\times\mathbb{R}^n\times\mathbf{Z}$ a.s. $d\nu$.
\\

As a function of the variable $h$, $\ell(h;x,p)$ can be defined more precisely as a mapping from the vector space $\mathbb{R}^m$ into $\mathbb{R}$. Moreover, $\ell$ is continuous in $h$ $\forall h \in \mathbb{R}^m$, twice differentiable and with continuous derivatives. Finally, $f$ attains its infimum, and the infimum is finite.
\\

Looking at the constraints, the matrix $\Upsilon$ defines a mapping from the vector space $\mathbb{R}^m$ into the normed space generated by associating to the constraint vector space $\mathcal{U}$ the Euclidian norm. Under Assumption~\ref{as_JDRSAM_sensible_constraints}, there exists an $h_1$ such that $\Upsilon ' h < \upsilon$.
\\

As a result, the auxiliary constrained optimization problem
\begin{eqnarray}
        \min_{h \in \mathcal{U}}        \ell(h;x,p)
                                                                                                                                        \nonumber
\end{eqnarray}
is a convex programming problem satisfying the assumptions of Lagrange Duality (see for example Theorem 1 in Section 8.6 in~\cite{lu69}). We therefore conclude that the supremum is reached for a unique maximizer $\hat{h}(t,x,p)$, which is an interior point of the set $\mathcal{J}$ defined in equation~\eqref{def_JDRSAM_setJ}, and the supremum, evaluated at $\hat{h}(t,x,p) \in \mathbb{R}^n$, is finite. By measurable selection, $\hat{h}$ can be taken as a Borel measurable function on $[0,T]\times\mathbb{R}^n\times\mathbb{R}^n$.
\\
\end{proof}

\begin{corollary}\label{coro_JDRSAM_optimcontrol_tilde}
        Under (\ref{H11}) and Assumption~\ref{as_JDRSAM_uncorrelatedjumps}, the infimum in~\eqref{eq_JDRSAM_logtrans_H_function} admits a unique Borel measurable minimizer $\check{h}(t,x,r,p)$ for $(t,x,r,p) \in [0,T]\times\mathbb{R}^n\times\mathbb{R}\times\mathbb{R}^n$.
\end{corollary}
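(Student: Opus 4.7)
The plan is to reduce the minimization in \eqref{eq_JDRSAM_logtrans_H_function} directly to the maximization already handled in Proposition \ref{prop_JDRSAM_optimcontrol}. The key observation is that the only $h$-dependent terms in $L^{h}(t,x,u,q,M)$ from \eqref{eq_JDRSAM_HJBPDE_operator_L} are $f(t,x,h)'q - g(t,x,h)$; the remaining contributions $\tfrac12\operatorname{tr}(\Lambda\Lambda'(t,x)M)$, $-\tfrac{\theta}{2} q'\Lambda\Lambda'(t,x)q$ and $\mathcal I_{NL}[t,x,u,q]$ do not involve $h$. Hence the unique Borel-measurable maximizer $\hat h(t,x,q)$ furnished by Proposition \ref{prop_JDRSAM_optimcontrol} is also the maximizer of $h\mapsto f(t,x,h)'q - g(t,x,h)$ on $\mathcal J$.

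Since the corollary will be applied with $r=\tilde\Phi(t,x)$, which is strictly positive by Proposition \ref{prop_JDRSAM_tildePhi_bounded}, I would focus on the regime $r>0$. Setting $q:=-p/(\theta r)\in\mathbb R^n$, the integrand in \eqref{eq_JDRSAM_logtrans_H_function} factors as
\[ f(t,x,h)'p + \theta g(t,x,h)\, r \;=\; -\theta r\,\bigl[f(t,x,h)'q - g(t,x,h)\bigr]. \]
Because $-\theta r<0$, minimising the left-hand side over $h\in\mathcal J$ is equivalent to maximising $f'q - g$ over $h\in\mathcal J$, so the infimum is attained at the unique point
\[ \check h(t,x,r,p) \;:=\; \hat h\!\bigl(t,x,-p/(\theta r)\bigr). \]
Borel measurability in $(t,x,r,p)$ on $\{r>0\}$ is inherited from that of $\hat h$ together with continuity of the change of variables $(r,p)\mapsto -p/(\theta r)$.

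For the boundary case $r\leq 0$, which plays no role in the subsequent analysis, one would instead repeat the direct convex-programming argument of Proposition \ref{prop_JDRSAM_optimcontrol} on the functional $h\mapsto f(t,x,h)'p+\theta g(t,x,h)r$. The only delicate point in any such direct argument is the jump integral, whose $h$-dependent part collects into $\int_{\mathbf{Z}} [\xi(t,x,z)'p + r](1+h'\gamma(t,z))^{-\theta}\nu(dz)$ plus terms linear in $h$. The weight $\xi'p+r$ is not \emph{a priori} of definite sign, and this is the only real obstacle; it is precisely neutralised by Assumption~\ref{as_JDRSAM_uncorrelatedjumps}, since $\gamma(t,z)\xi(t,x,z)'=0$ forces $\nu$-a.e.\ either $\gamma(t,z)=0$ (and then the integrand is constant in $h$) or $\xi(t,x,z)=0$ (and then the weight reduces to $r$), so strict convexity of the functional is controlled purely by the sign of $r$.
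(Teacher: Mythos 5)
Your main argument is correct for $r>0$ and takes a genuinely different route from the paper. The paper offers no written proof of the corollary: it is presented as following from Proposition~\ref{prop_JDRSAM_optimcontrol} by re-running the same Lagrange-duality/convex-programming argument on the functional $h\mapsto f(s,x,h)'p+\theta g(s,x,h)r$ (whose $h$-dependent jump part is again tamed by Assumption~\ref{as_JDRSAM_uncorrelatedjumps}). You instead \emph{reduce} the corollary to the proposition: observing that the only $h$-dependence in $L^h$ is through $f'q-g$, and that for $r>0$ the substitution $q=-p/(\theta r)$ turns the minimisation into that maximisation, you obtain $\check h(t,x,r,p)=\hat h\bigl(t,x,-p/(\theta r)\bigr)$ directly. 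This buys something the paper's implicit argument does not: it makes Lemma~\ref{lem_JDRSAM_hstar} (the identity $\hat h(t,X_t,D\Phi)=\check h(t,X_t,\tilde\Phi,D\tilde\Phi)$, via $D\tilde\Phi=-\theta\tilde\Phi\,D\Phi$) an immediate consequence rather than a separate observation, and it avoids repeating the duality argument. Your bookkeeping of the jump integral, collecting the $h$-dependent part into $\int_{\mathbf Z}[\xi'p+r](1+h'\gamma)^{-\theta}\nu(dz)$ and using $\gamma\xi'=0$ to force $\nu$-a.e.\ either $\gamma=0$ or $\xi=0$, is also correct and matches the simplification the paper performs inside the proof of Proposition~\ref{prop_JDRSAM_optimcontrol}.

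The one genuine weak point is your treatment of $r\le 0$. You cannot ``repeat the direct convex-programming argument'' there: for $r<0$ the dominant quadratic contribution $\tfrac{\theta(\theta+1)}{2}\,r\,h'\Sigma\Sigma'(t,x)h$ is strictly \emph{concave}, and the surviving jump term $r\int(1+h'\gamma)^{-\theta}\nu(dz)$ is concave as well, so the objective is strictly concave on $\mathcal J$ and a unique interior minimiser need not exist (the infimum may be attained only on the boundary, non-uniquely, or not at all since $\mathcal J_0$ is defined by strict inequalities). Your closing sentence half-acknowledges this (``strict convexity \ldots is controlled purely by the sign of $r$'') but does not draw the conclusion that the argument fails there. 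To be fair, this is a defect of the corollary as stated in the paper (it claims uniqueness for all $r\in\mathbb R$), not a defect introduced by you, and it is immaterial to the rest of the paper since the corollary is only ever invoked at $r=\tilde\Phi(t,x)>0$ (Proposition~\ref{prop_JDRSAM_tildePhi_bounded}). You would do better simply to state that the result is proved, and needed, only on $\{r>0\}$.
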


\subsection{Main result}\label{main}
We now come to the main result of this paper.

\begin{theorem}\label{theo_JDRSAM_main_result}
Under Assumptions~\ref{As_Factors}--\ref{as_JDRSAM_sensible_constraints}, the following hold:
\begin{enumerate}[1.]
\item The exponentially transformed value function $\tilde{\Phi}$ defined at \eqref{Phi} is the unique $C^{1,2}\left([0,T]\times \mathbb{R}^n\right)$ solution  of the RS HJB PIDE~\eqref{eq_JDRSAM_exptrans_HJBPDE}-\eqref{eq_JDRSAM_exptrans_HJBPDE_termcond}.

\item The value function $\Phi$, also defined at \eqref{Phi}, is the unique $C^{1,2}\left([0,T]\times \mathbb{R}^n\right)$ solution  of the RS HJB PIDE~\eqref{eq_JDRSAM_HJBPDE}-\eqref{eq_JDRSAM_HJBPDE_termcond}.

\item The asset allocation $h^*(t) = \hat{h}(t,X_t,D\Phi(t,X_t))$, where $\hat{h}$ is the function introduced in Proposition \ref{prop_JDRSAM_optimcontrol}, is optimal in the class ${\cal A}$ of admissible controls.
\end{enumerate}
\end{theorem}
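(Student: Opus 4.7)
The plan is to exploit the transformation $\tilde\Phi = e^{-\theta\Phi}$ which eliminates the quadratic-gradient nonlinearity of \eqref{eq_JDRSAM_HJBPDE} and leaves the PIDE \eqref{eq_JDRSAM_exptrans_HJBPDE}, whose remaining difficulties are the infimum over $h$ in $H$ and the nonlocal integro-differential term. Propositions~\ref{prop_JDRSAM_tildePhi_bounded} and \ref{prop_tildePhi_Lipschitz} already supply a bounded, positive, Lipschitz candidate $\tilde\Phi$. First I would show via the Bellman principle (using the independence of ${\cal F}_s$ and ${\cal F}^s_t$ as in the preceding Lemma) that $\tilde\Phi$ is a continuous viscosity solution of \eqref{eq_JDRSAM_exptrans_HJBPDE}--\eqref{eq_JDRSAM_exptrans_HJBPDE_termcond}.

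The crux, following Pham~\cite{ph98} and Davis-Guo-Wu~\cite{guwu09}, is to decouple the nonlocal and the local nonlinearities. For any bounded Lipschitz $w$ define the frozen source
\[ F_{w}(t,x) := \int_{\mathbf{Z}}\bigl\{w(t,x+\xi(t,x,z)) - w(t,x) - \xi(t,x,z)'Dw(t,x)\mathit{1}_{\mathbf{Z}_0}(z)\bigr\}\nu(dz),\]
which is well defined and bounded thanks to the stronger integrability \eqref{as_factorjumps_xi_integrable}. The key equivalence I would establish is that $\tilde\Phi$ is a viscosity solution of the PIDE \eqref{eq_JDRSAM_exptrans_HJBPDE} iff it is a viscosity solution of the local semilinear parabolic PDE
\[ \partial_t u + \tfrac12\operatorname{tr}(\Lambda\Lambda' D^2 u) + H(t,x,u,Du) + F_{\tilde\Phi}(t,x) = 0,\qquad u(T,x)=1.\]
Uniqueness of viscosity solutions of the PIDE then reduces to standard parabolic comparison \cite{crisli92} via Lipschitzness of $H$ (Remark~\ref{rk_JDRSAM_H_Lipschitz}) and H\"older continuity of $F_{\tilde\Phi}$ (inherited from Lipschitzness of $\tilde\Phi$ and Assumption~\ref{As_Factors}(\ref{As_Factors_v})).

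To produce a classical solution, I would run a policy-improvement iteration. Starting from some $h^{(0)}\in\mathcal{A}$, at step $n$ solve the linear PIDE obtained from \eqref{eq_JDRSAM_exptrans_HJBPDE} by freezing $h=h^{(n)}$: Pham's classical results for linear PIDEs give a unique $C^{1,2}$ solution $u^{(n)}$. Updating $h^{(n+1)}(t,x) := \check h(t,x,u^{(n)}(t,x),Du^{(n)}(t,x))$ via Corollary~\ref{coro_JDRSAM_optimcontrol_tilde} and using the convexity in $h$ established in the proof of Proposition~\ref{prop_JDRSAM_optimcontrol}, the sequence $u^{(n)}$ is monotone decreasing and uniformly bounded (by $0$ and by the bound of Proposition~\ref{prop_JDRSAM_tildePhi_bounded}); a priori Schauder estimates for the linear PIDEs propagate $C^{1,2}$-regularity to the limit $u^\infty$, which satisfies the full PIDE in the classical sense. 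By the uniqueness in the previous paragraph, $u^\infty = \tilde\Phi$, so $\tilde\Phi\in C^{1,2}([0,T]\times\mathbb{R}^n)$. Since $\tilde\Phi>0$, $\Phi=-\tfrac1\theta\ln\tilde\Phi$ inherits this regularity and solves \eqref{eq_JDRSAM_HJBPDE}--\eqref{eq_JDRSAM_HJBPDE_termcond}, proving parts 1 and 2.

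For part 3 I would run the standard verification theorem. Apply It\^o's formula to
\[ Y^h_t := \tilde\Phi(t,X_t)\exp\Bigl(\theta\int_0^t g(s,X_s,h(s))\,ds\Bigr)\]
for arbitrary $h\in\mathcal{A}$, with $X_t$ solving \eqref{eq_JDRSAM_state_SDE} under $\mathbb{P}^h_{s,x}$. The $dt$-coefficient equals the exponential factor times the left-hand side of \eqref{eq_JDRSAM_exptrans_HJBPDE} with the infimum in $H$ replaced by the integrand evaluated at $h(t)$; by definition of the infimum this is $\geq 0$, with equality precisely at $h^*(t)=\hat h(t,X_t,D\Phi(t,X_t))$. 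The martingale terms (stochastic integral against $W^h$ and compensated Poisson integral) have zero expectation thanks to boundedness of $\tilde\Phi$ and $D\tilde\Phi$, \eqref{as_factorjumps_xi_sq_int}, and boundedness of admissible $h$ (Definition~\ref{def_JDRSAM_controlprocess_h}(ii), Assumption~\ref{as_assetjumps_upanddown_1}). Taking expectation and using $\tilde\Phi(T,\cdot)=1$ yields $\tilde\Phi(s,x)\leq \tilde I(s,x,h)$ with equality at $h^*$, so $h^*\in\mathcal{A}$ is optimal. The main obstacle will be the viscosity-to-viscosity equivalence in paragraph two: verifying that test functions touching $\tilde\Phi$ yield matching sub/supersolution inequalities for both the PIDE and the decoupled PDE is delicate because the standard viscosity formulation for PIDEs~\cite{baim08} evaluates the nonlocal term on the test function near the contact point and on $\tilde\Phi$ away from it, and rewriting it as the fixed source $F_{\tilde\Phi}$ requires careful splitting and limiting arguments that rely crucially on the stronger integrability hypothesis \eqref{as_factorjumps_xi_integrable} rather than the minimal $L^2$ condition.
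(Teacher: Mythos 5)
Your overall architecture coincides with the paper's: show $\tilde\Phi$ is a Lipschitz viscosity solution of the PIDE, decouple the nonlocal term into a frozen source to obtain a local semilinear PDE, get uniqueness from parabolic comparison, produce a classical solution by policy improvement, identify it with $\tilde\Phi$, and verify optimality by a martingale argument. There is, however, one genuine gap in the existence step. You run the policy iteration on the \emph{linear PIDE} obtained by freezing $h=h^{(n)}$ in \eqref{eq_JDRSAM_exptrans_HJBPDE} and invoke ``Pham's classical results for linear PIDEs'' for $C^{1,2}$ solvability and Schauder estimates. Pham~\cite{ph98} is a viscosity-solution paper and supplies no such classical theory; classical solvability of linear parabolic integro-differential equations is precisely the heavy machinery (e.g.~\cite{mipr94}) that the decoupling device is designed to avoid, and your monotonicity step would additionally require a maximum principle for the nonlocal operator. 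The repair is already contained in your own second paragraph: freeze the nonlocal term at the \emph{value function}, i.e.\ iterate on the local linear PDE with the fixed source $d_a^{\tilde\Phi}$ as in \eqref{eq_JDRSAM_logtrans_BVP_Phi_tilde_kplus1}. Then the source cancels when consecutive iterates are subtracted, the classical parabolic weak maximum principle (Lemma~\ref{theo_classical_weak_maximum_principle_extensionboundedcost}) gives monotonicity, and the $L^p$ and H\"older estimates of Fleming--Rishel~\cite{fl75} (Appendix E) and Ladyzenskaja--Solonnikov--Uralceva~\cite{la68} give convergence to a $C^{1,2}$ limit, first on cylinders $(0,T)\times\mathscr{B}_R$ and then on all of $(0,T)\times\mathbb{R}^n$; the uniqueness from your comparison step then identifies the limit with $\tilde\Phi$.

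A second, smaller omission concerns part 3: you conclude ``$h^*\in\mathcal{A}$'' from the verification inequality, but admissibility is not automatic. Condition (ii) of Definition~\ref{def_JDRSAM_admissible_A} requires $\mathbf{E}\chi^{h^*}(T)=1$, i.e.\ that the Dol\'eans exponential generated by the feedback control is a true martingale, and without this the measure $\mathbb{P}^{h^*}$ underlying $\tilde I(s,x,h^*)$ is not defined. The paper devotes Proposition~\ref{prop_JDRSAM_hstar_admissible} to exactly this point (via M\'emin~\cite{me79}). Your verification computation itself (It\^o applied to $\tilde\Phi(t,X_t)e^{Z_t}$, nonnegative drift with equality at $h^*$, martingale property of the stochastic integrals from \eqref{as_factorjumps_xi_sq_int} together with Lipschitz continuity and boundedness of $\tilde\Phi$) matches Theorem~\ref{Theo_optimalcontrol}, and your closing caveat about the two notions of viscosity solution for the nonlocal term is exactly the issue resolved by Definition~\ref{def_JDRSAM_viscositysol_testfunc_alt} and Proposition~\ref{prop_viscosity_equivalence}.
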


The detailed argument for the proof of this theorem is given in the next two sections of the paper. Section \ref{classical} establishes the existence of a unique $C^{1,2}$ solution to the HJB equation. Section \ref{optimalcontrol} shows that the function $\hat{h}$ of Proposition \ref{prop_JDRSAM_optimcontrol}  provides an investment strategy that is both admissible and optimal. Combining the results of these two sections, we have the following proof.

\begin{proof}[Proof of Theorem~\ref{theo_JDRSAM_main_result}]

\emph{Existence of a classical ($C^{1,2}$) solution - } by Theorem~\ref{Theo_JDRSAM_existence_PIDE}, the functions $\tilde{\Phi}$ and $\Phi$ are, respectively, the unique $C^{1,2}([0,T]\times \mathbb{R}^n)$ solutions of the HJB PIDE~\eqref{eq_JDRSAM_exptrans_HJBPDE}--\eqref{eq_JDRSAM_exptrans_HJBPDE_termcond} and HJB PIDE~\eqref{eq_JDRSAM_HJBPDE}--\eqref{eq_JDRSAM_HJBPDE_termcond}.

\emph{Existence of an optimal control - } by Proposition~\ref{prop_JDRSAM_optimcontrol}, the supremum in~\eqref{eq_JDRSAM_HJBPDE} and infimum in \eqref{eq_JDRSAM_exptrans_HJBPDE} admit the same unique Borel measurable maximizer/minimizer $h^*(t,X_t)$. By Proposition~\ref{prop_JDRSAM_hstar_admissible}, the control $h^*$ defined by $h^*(t, X(t))$ is admissible, i.e. belongs to the class ${\cal A}$. Theorem \ref{Theo_optimalcontrol} shows by a martingale argument that this control is optimal. 
\hfill\end{proof}

%
%

\section{Existence of a Classical ($C^{1,2}$) Solution}\label{classical}
The objective of this section is to prove that the value functions $\Phi$ and $\tilde{\Phi}$ are smooth. The process involves 6 steps, which we give in outline here and in detail in the six succeeding sections, \S\S4.1-4.6.

\smallskip\noindent{\it Step 1: $\tilde{\Phi}$ is a Lipschitz Continuous Viscosity Solution (VS-PIDE) of~\eqref{eq_JDRSAM_exptrans_HJBPDE}.}
Theorem \ref{theo_JDRSAM_viscositysol} below asserts that $\tilde{\Phi}$ is a (possibly discontinuous) viscosity solution of the PIDE \eqref{eq_JDRSAM_exptrans_HJBPDE}. However, we know \emph{a priori} from Proposition \ref{prop_tildePhi_Lipschitz} that $\tilde{\Phi}$ is Lipschitz.

\smallskip\noindent{\it Step 2: From PIDE to PDE.} 
At this point we invoke Assumption 1(vi) \eqref{as_factorjumps_xi_integrable}, which implies that we can write the non-local term in \eqref{eq_JDRSAM_exptrans_HJBPDE} as 
\[ \int_{\mathbf{Z}}\{\tilde{\Phi}(t,x+\xi(t,x,z))-\tilde{\Phi}(t,x)\}\nu(dz)+\int_{\mathbf{Z}}\xi'(t,x,z)\nu(dz)D\tilde{\Phi}(t,x).\]
Change notation and rewrite the HJB PIDE as a parabolic PDE \textit{\`a la} Pham~\cite{ph98}:
\begin{equation}\label{eq_JDRSAM_exptrans_HJBPDE_parabolic_outline}
                \frac{\partial u}{\partial t}(t,x)
                + \frac{1}{2} \textrm{tr} \left( \Lambda\Lambda'(t,x) D^2 u\right) + H_{a}(t,x,u,Du) + d_{a}^{\tilde{\Phi}}(t,x)= 0
\end{equation}
subject to terminal condition $u(T, x) = 1$ and with
\begin{eqnarray}\label{eq_JDRSAM_logtrans_H_function_parabolic_outline}
   H_{a}(s,x,r,p) &=& \inf_{h \in \mathcal{U}} \left\{
                        f_{a}(s,x,h)'p
                        +       \theta g(s,x,h) r
         \right\}
\end{eqnarray}
for $r \in \mathbb{R}$, $p \in \mathbb{R}^n$ and where
\begin{eqnarray}\label{def_JDRSAM_function_tilde_f_outline}
        f_{a}(s,x,h)
        &:=&    f(s,x,h) - \int_{\mathbf{Z}} \xi(s,x,z) \nu(dz)\nonumber\\
        &=&     b(s,x)-\theta\Lambda\Sigma(s,x)'h
        +       \int_{\mathbf{Z}}\xi(s,x,z)\left[
              \left(1+h'\gamma(s,z)\right)^{-\theta}
               -        \mathit{1}_{\mathbf{Z}_0}(z)
                                        -       1
         \right]\nu(dz),
\end{eqnarray}
and
\begin{eqnarray}\label{def_JDRSAM_function_tilde_d_outline}
        d_{a}^{\tilde{\Phi}}(t,x)
        =       \int_{\mathbf{Z}} \left\{
              \tilde{\Phi}\left(t,x+\xi(t,x,z)\right)
            - \tilde{\Phi}(t,x)
        \right\} \nu(dz).
\end{eqnarray}

\smallskip\noindent{\it Step 3: Viscosity Solution to PDE~\eqref{eq_JDRSAM_exptrans_HJBPDE_parabolic_outline}.} We consider viscosity solutions $u$ of
the semi-linear PDE~\eqref{eq_JDRSAM_exptrans_HJBPDE_parabolic_outline} (always interpreted as an equation for `unknown' $u$ with the last term prespecified, with $\tilde{\Phi}$ defined as in Step 1.) The key point is that \emph{$\tilde{\Phi}$ is a  viscosity solution of the PDE~\eqref{eq_JDRSAM_exptrans_HJBPDE_parabolic_outline}}. Indeed, due to definition \eqref{def_JDRSAM_function_tilde_d_outline},  PIDE~\eqref{eq_JDRSAM_exptrans_HJBPDE} and PDE~\eqref{eq_JDRSAM_exptrans_HJBPDE_parabolic_outline} are in essence the same equation. Hence, if $\tilde{\Phi}$ satisfies the PIDE in the viscosity sense, which from Step 1 we know that it does, then $u=\tilde{\Phi}$ is a viscosity solution of the PDE \eqref{eq_JDRSAM_exptrans_HJBPDE_parabolic_outline}. Note that this last statement depends crucially on the Definition \ref{def_JDRSAM_viscositysol_testfunc_alt} of `viscosity solution' for the PIDE (i.e. no replacement of the solution by a test function in the non-local term.)

\smallskip\noindent{\it Step 4: Uniqueness of the Viscosity Solution to the PDE~\eqref{eq_JDRSAM_exptrans_HJBPDE_parabolic_outline}.} 
If a function $u$ solves the PDE~\eqref{eq_JDRSAM_exptrans_HJBPDE_parabolic_outline} it does not mean that $u$ also solves the PIDE~\eqref{eq_JDRSAM_exptrans_HJBPDE} because the term $d_a$ in the PDE~\eqref{eq_JDRSAM_exptrans_HJBPDE} depends on $\tilde{\Phi}$ regardless of the choice of $u$. Thus, if we were to show the existence of a classical solution $u$ to PDE~\eqref{eq_JDRSAM_exptrans_HJBPDE_parabolic_outline}, we would not be sure that this solution is the value function $\tilde{\Phi}$ unless we can show that PDE~\eqref{eq_JDRSAM_exptrans_HJBPDE_parabolic_outline} admits a unique solution. This only requires applying a ``classical'' comparison result for viscosity solutions (see Theorem 8.2 in Crandall, Ishii and Lions~\cite{crisli92}) provided appropriate conditions on $f_a$ and $d_a^{\tilde{\Phi}}$ are satisfied.

\smallskip\noindent{\it Step 5: Existence of a Classical Solution to the HJB PDE~\eqref{eq_JDRSAM_exptrans_HJBPDE}.}
We use an argument similar to that of Fleming and Rishel~\cite{fl75} (Appendix E) together with a result from Davis, Guo and Wu~\cite{guwu09} to show the existence of a classical solution to the PDE~\eqref{eq_JDRSAM_exptrans_HJBPDE_parabolic_outline} with $d_{a}^{\tilde{\Phi}}(t,x)$ regarded as an autonomous term.

\smallskip\noindent{\it Step 6: Existence of a Classical Solution to the HJB PIDE~\eqref{eq_JDRSAM_exptrans_HJBPDE}.} Combining Steps 4 and 5, we conclude that $\tilde{\Phi}$ and $\Phi$ are respectively a classical ($C^{1,2}$) solution of~\eqref{eq_JDRSAM_exptrans_HJBPDE} and a classical ($C^{1,2}$) solution of~\eqref{eq_JDRSAM_HJBPDE}.

\subsection{$\tilde{\Phi}$ is a Lipschitz Continuous Viscosity Solution of~\eqref{eq_JDRSAM_exptrans_HJBPDE}}

\subsubsection{Preliminary Definitions}

The theory of viscosity solutions applies to elliptical PIDEs of the form
\begin{equation}
    F(t,x,u,Du,D^2u,\mathcal{I}[t,x,u]) = 0
                                            \nonumber
\end{equation}
where $\mathcal{I}[t,x,u]$ is the nonlocal operator, and parabolic PIDEs of the form
\begin{equation}\label{eq_JDRSAM_viscosity_general parabolic_PDE}
    \frac{\partial u}{\partial t} + F(t,x,u,Du,D^2u,\mathcal{I}[t,x,u]) = 0
                                            \nonumber
\end{equation}
for a ``proper'' functional $F(t,x,r,p,A,l)$.

\begin{definition}\label{def_JDRSAM_proper}
A functional $F(t,x,r,p,A,l)$ is said to be \emph{proper} if it satisfies the following two properties:
\begin{enumerate}
    \itemb (degenerate) ellipticity:
    \begin{equation}
        F(t,x,r,p,A,l_1) \leq F(t,x,r,p,B,l_2),
        \qquad B \leq A, l_1 \leq l_2
                                    \nonumber
    \end{equation}
    and
    \itemb monotonicity
    \begin{equation}
        F(t,x,r,p,A,l) \leq F(t,x,s,p,A,l),
        \qquad r \leq s
                                    \nonumber
    \end{equation}
\end{enumerate}

\end{definition}

We now give a definitions of viscosity solutions based on the notion of test function adapted from Barles and Imbert (see Definition 1 in~\cite{baim08}):

\begin{definition}[Viscosity Solution (Test Functions)]\label{def_JDRSAM_viscositysol_testfunc}
A bounded function $u \in USC([0,T]\times\mathbb{R}^n)$ is a \emph{viscosity subsolution} of~\eqref{eq_JDRSAM_exptrans_HJBPDE_viscosity}, if for any bounded test function $\psi \in C^{1,2}([0,T]\times\mathbb{R}^n)$, if $(x,t)$ is a global maximum point of $u-\psi$, then
\begin{eqnarray}
        - \frac{\partial \psi}{\partial t}
        - F\left(t,x,u(t,x),D\psi(t,x),D^2\psi(t,x),\mathcal{I}[t,x,\psi]\right)
        \leq 0
\end{eqnarray}

A bounded function $u \in LSC([0,T]\times\mathbb{R}^n)$ is a \emph{viscosity supersolution} of~\eqref{eq_JDRSAM_exptrans_HJBPDE_viscosity}, if for any bounded test function $\psi \in C^{1,2}([0,T]\times\mathbb{R}^n)$, if $(x,t)$ is a global minimum point of $u-\psi$, then
\begin{eqnarray}
        - \frac{\partial \psi}{\partial t}
        - F\left(t,x,u(t,x),D\psi(t,x),D^2\psi(t,x),\mathcal{I}[t,x,\psi]\right)
        \geq 0
\end{eqnarray}

A bounded function $u$ whose upper semicontinuous and lower semicontinuous envelopes are a viscosity subsolution and a viscosity supersolution of~\eqref{eq_JDRSAM_HJBPDE} is a viscosity solution of~\eqref{eq_JDRSAM_HJBPDE}.

\end{definition}

An equivalent definition of viscosity solutions based on the notion of semijets (see Definition 4 and Proposition 1 in Barles and Imbert~\cite{baim08}) is crucial to extend Ishii's lemma to account for non-local operators, obtain a stability result and also show uniqueness of the viscosity solution. In Section~\ref{sec_classicalsol}, we will consider a third equivalent definition which will enable us to rewrite the PIDE~\eqref{eq_JDRSAM_exptrans_HJBPDE} as a `parabolic' PDE.
\\

\subsubsection{Characterization of the Value Function as a Viscosity Solution}
With regards to our problem, we can express the HJB PIDE~\eqref{eq_JDRSAM_exptrans_HJBPDE} associated with the exponentially transformed problem~\eqref{Phi} as
\begin{eqnarray}\label{eq_JDRSAM_exptrans_HJBPDE_viscosity}
   - \frac{\partial \tilde{\Phi}}{\partial t}(t,x)
        + F_{v}(t,x,\tilde{\Phi},D\tilde{\Phi},D^2\tilde{\Phi},\mathcal{I}[t,x,\tilde{\Phi}])
   &=& 0
\end{eqnarray}
subject to terminal condition~\eqref{eq_JDRSAM_exptrans_HJBPDE_termcond}, where
\begin{eqnarray}\label{eq_JDRSAM_exptrans_functional_F}
F_{v}(t,x,\tilde{\Phi},D\tilde{\Phi},D^2\tilde{\Phi},\mathcal{I}[t,x,\tilde{\Phi}])
   &:=&
        H_{v}(t,x,\tilde{\Phi},D\tilde{\Phi})
        -       \frac{1}{2} \textrm{tr} \left( \Lambda\Lambda'(t,x) D^2 \tilde{\Phi}(t,x)\right)
        -       \mathcal{I}[t,x,\tilde{\Phi}]
                                                                                                        \nonumber
\end{eqnarray}
\begin{eqnarray}\label{eq_JDRSAM_nonlocaloperator}
        \mathcal{I}[t,x,\tilde{\Phi}]
        &:=& \int_{\mathbf{Z}} \left\{
              \tilde{\Phi}\left(t,x+ \xi(t,x,z)\right)
            - \tilde{\Phi}(t,x)
            - \xi(t,x,z)' D\tilde{\Phi}(t,x)\mathit{1}_{\mathbf{Z}_0}
        \right\} \nu(dz)
\end{eqnarray}
\begin{eqnarray}\label{eq_JDRSAM_logtrans_H_tilde_function}
   H_{v}(s,x,r,p)
        &:=&     - H(s,x,r,p)
                                                                                                        \nonumber\\
        &=& \sup_{h \in \mathcal{A}} \left\{
                -       f_{v}(s,x,h)'p
                        -       \theta g(s,x,h) r
        \right\}
                                                                                                                \nonumber
\end{eqnarray}
for $r \in \mathbb{R}$, $p \in \mathbb{R}^n$, and where
\begin{eqnarray}\label{eq_JDRSAM_func_fvisc}
        f_{v}(t,x,h)
        &:=&    f(t,x,h)
                -       \int_{\mathbf{Z}\backslash\mathbf{Z}_{\delta}}\xi(t,x,z)\nu(dz)
                                                                                                                \nonumber\\
        &=&     b(t,x)
                -       \theta\Lambda\Sigma(t,x)'h(s)
        +       \int_{\mathbf{Z}}\xi(t,x,z)\left[
              \left(1+h'\gamma(t,z)\right)^{-\theta}
                - 1
         \right]\nu(dz)
\end{eqnarray}
with $f$ defined in~\eqref{eq_JDRSAM_func_f}.
\\

Under Assumption~\ref{As_Factors} (\ref{As_Factors_iii}), the functional $F$ satisfies the ellipticity property. Although $F$ is not generally monotone, $g(s,x,h)$ is bounded in $x$ (see Remark~\ref{prop_JDRSAM_g_bounded_Lipschitz}). In this case, a standard discounting technique can be used to circumvent the absence of monotonicity (see for example Proposition II.9.1 in~\cite{flso06} for the general idea and the proof of Theorem~\ref{theo_JDRSAM_comparison_parabolicPDE} below for an application in our case).

\begin{lemma}\label{lem_func_fv_Lipschitz}
        The functional $f_v$ is Lipschitz continuous in $t$ and $x$
\begin{eqnarray}
                |f_v(t,y)-f_v(s,x)| \leq C_f\left(|t-s|+|y-x|\right)
\end{eqnarray}
for some constant $C_f>0$.
\end{lemma}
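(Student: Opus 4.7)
The plan is to decompose $f_v(t,x,h)$ into its three constituent pieces and estimate each separately, then combine by the triangle inequality. Throughout I note that it suffices to obtain constants depending only on the model parameters and a bound for $|h|$; admissible controls are uniformly bounded because $h(t)\in\mathcal{J}\subset\mathcal{J}_0$, and $\mathcal{J}_0$ is bounded as observed after Definition~\ref{def_JDRSAM_controlprocess_h}.

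First piece: $b(t,x)$ is Lipschitz in $(t,x)$ with constant $K_b$ by Assumption~\ref{As_Factors}(\ref{As_Factors_i}). Second piece: by Assumption~\ref{As_Securities}(\ref{As_Securities_ix}), $\Lambda\Sigma'$ is Lipschitz with constant $K_{\Lambda\Sigma}$, so
\[
\bigl|\theta\Lambda\Sigma(t,y)'h - \theta\Lambda\Sigma(s,x)'h\bigr|
\le \theta|h|\,K_{\Lambda\Sigma}\bigl(|t-s|+|y-x|\bigr),
\]
which is controlled uniformly in $h\in\mathcal{J}$ by the boundedness of $\mathcal{J}$.

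Third piece: the jump-compensator integral $\int_{\mathbf{Z}}\xi(t,x,z)\bigl[(1+h'\gamma(t,z))^{-\theta}-1\bigr]\nu(dz)$. The cleanest route is to invoke Assumption~\ref{as_JDRSAM_uncorrelatedjumps}: the outer-product identity $\gamma(t,z)\xi'(t,x,z)=0$ on $\mathbf{S}=\mathrm{supp}(\nu)$ forces, pointwise in $z$, either $\gamma(t,z)=0$ (in which case $(1+h'\gamma(t,z))^{-\theta}-1=0$) or $\xi(t,x,z)=0$. In either case the integrand vanishes $\nu$-a.e., so the jump term is identically zero and contributes nothing to the Lipschitz modulus. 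Combining the three bounds with the triangle inequality yields
\[
|f_v(t,y,h)-f_v(s,x,h)|\le C_f\bigl(|t-s|+|y-x|\bigr),
\]
with $C_f=K_b+\theta\bar{h}K_{\Lambda\Sigma}$ for a uniform bound $\bar h$ on $|h|$ over $\mathcal{J}$.

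There is essentially no obstacle: the only genuinely non-routine decision is how to handle the jump integral. Without Assumption~\ref{as_JDRSAM_uncorrelatedjumps} one would have to split the integration domain into $\mathbf{Z}_0$ and $\mathbf{Z}\setminus\mathbf{Z}_0$, use Assumption~\ref{As_Factors}(\ref{As_Factors_vi}) to integrate $|\xi|$ against $\nu$ on $\mathbf{Z}_0$, Taylor-expand $(1+h'\gamma)^{-\theta}-1$ (which is bounded and $O(|\gamma|)$ on $\mathbf{Z}_0$ thanks to $h\in\mathcal{J}_0$ keeping $1+h'\gamma$ bounded away from $0$), and use $\nu(\mathbf{Z}\setminus\mathbf{Z}_0)<\infty$ on the tail; but with Assumption~\ref{as_JDRSAM_uncorrelatedjumps} in force this detour is unnecessary.
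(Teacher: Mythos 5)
Your proof is correct and follows essentially the same route as the paper: Lipschitz continuity of $b$ and $\Lambda\Sigma'$ from Assumptions~\ref{As_Factors}(\ref{As_Factors_i}) and~\ref{As_Securities}(\ref{As_Securities_ix}), plus the observation that Assumption~\ref{as_JDRSAM_uncorrelatedjumps} makes the jump integral vanish identically. Your spelled-out justification that $\gamma(t,z)\xi'(t,x,z)=0$ forces the integrand to vanish $\nu$-a.e., and your explicit uniform-in-$h$ bound via the boundedness of $\mathcal{J}$, are details the paper leaves implicit but are entirely in the same spirit.
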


\begin{proof}
        By Assumption~\ref{As_Factors} (\ref{As_Factors_i}) and Assumption~\ref{As_Securities} (\ref{As_Securities_ix})  $b(t,x)$ and $\Lambda\Sigma(t,x)'$ are Lipschitz continuous in $t$ and $x$. Moreover, by Assumption~\ref{as_JDRSAM_uncorrelatedjumps},
\begin{eqnarray}
        \int_{\mathbf{Z}}\xi(t,x,z)\left[
              \left(1+h'\gamma(t,z)\right)^{-\theta}
                - 1
         \right]\nu(dz)
        = 0
\end{eqnarray}
and the result follows.
\end{proof}


\begin{theorem}\label{theo_JDRSAM_viscositysol}
$\tilde{\Phi}$ is a Lipschitz continuous viscosity solution of the RS HJB PIDE~\eqref{eq_JDRSAM_exptrans_HJBPDE} on $[0,T] \times \mathbb{R}^n$, subject to terminal condition~\eqref{eq_JDRSAM_exptrans_HJBPDE_termcond}.
\end{theorem}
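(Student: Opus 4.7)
The plan is to derive the two viscosity inequalities directly from a dynamic programming principle (DPP) for $\tilde{\Phi}$; Lipschitz continuity is already in hand from Proposition~\ref{prop_tildePhi_Lipschitz}. The first step will be to prove the DPP: for every $(s,x)\in[0,T]\times\mathbb{R}^n$ and every $\{\mathcal{F}^s_t\}$-stopping time $\tau$ valued in $[s,T]$,
\begin{equation}\label{eq_DPP_proposal}
\tilde{\Phi}(s,x) = \inf_{h \in \mathcal{A}^s} \mathbf{E}_{s,x}^h\left[\exp\left(\theta \int_s^\tau g(u, X_u, h(u))\, du\right) \tilde{\Phi}(\tau, X_\tau)\right].
\end{equation}
Because $\tilde{I}$ has the multiplicative structure of~\eqref{Itilde} and $\tilde{\Phi}$ is continuous, \eqref{eq_DPP_proposal} will follow from a standard measurable-selection/concatenation argument combined with the Markov property of $X$ under $\mathbb{P}^h_{s,x}$ (cf.\ \O ksendal--Sulem~\cite{oksu05}).

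To establish the supersolution property, take $\psi \in C^{1,2}([0,T]\times\mathbb{R}^n)$ bounded such that $\tilde{\Phi}-\psi$ attains a global minimum at $(t_0,x_0)$ with $\tilde{\Phi}(t_0,x_0)=\psi(t_0,x_0)$. Fix any constant $h\in\mathcal{J}$ (nonempty by Assumption~\ref{as_JDRSAM_sensible_constraints}); apply \eqref{eq_DPP_proposal} with $\tau=t_0+\delta$ and use $\tilde{\Phi}\geq\psi$ together with positivity of the exponential to obtain
\[
\psi(t_0,x_0) \leq \mathbf{E}^h_{t_0,x_0}\left[\exp\!\left(\theta\!\int_{t_0}^{t_0+\delta}\! g(u,X_u,h)\,du\right)\psi(t_0+\delta, X_{t_0+\delta})\right].
\]
Applying It\^o's formula for jump-diffusions to $s\mapsto \exp(\theta\int_{t_0}^s g(u,X_u,h)du)\,\psi(s,X_s)$ under $\mathbb{P}^h$ and taking expectations, the local-martingale increments vanish thanks to \eqref{as_factorjumps_xi_integrable}--\eqref{as_factorjumps_xi_sq_int} and the boundedness of $\psi$ and $D\psi$. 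Dividing by $\delta$ and letting $\delta\downarrow 0$, dominated convergence yields
\[
0 \leq \partial_t\psi + \tfrac12\operatorname{tr}(\Lambda\Lambda' D^2\psi) + \mathcal{I}[t_0,x_0,\psi] + f(t_0,x_0,h)'D\psi + \theta g(t_0,x_0,h)\psi(t_0,x_0)
\]
at $(t_0,x_0)$. Taking the infimum over $h\in\mathcal{J}$ produces the supersolution inequality for~\eqref{eq_JDRSAM_exptrans_HJBPDE} in the sense of Definition~\ref{def_JDRSAM_viscositysol_testfunc}.

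The subsolution inequality I would handle by contradiction. At a global maximum $(t_0,x_0)$ of $\tilde{\Phi}-\psi$ with $\tilde{\Phi}(t_0,x_0)=\psi(t_0,x_0)$, suppose the reverse strict inequality held in a neighbourhood. For each $\varepsilon>0$ pick an $\varepsilon$-optimal $h^\varepsilon\in\mathcal{A}^{t_0}$ in~\eqref{eq_DPP_proposal} with $\tau=t_0+\delta$; using $\tilde{\Phi}\leq\psi$,
\[
\psi(t_0,x_0) \geq \mathbf{E}^{h^\varepsilon}\!\left[\exp\!\left(\theta\!\int_{t_0}^{t_0+\delta}\! g(u,X_u,h^\varepsilon(u))\,du\right)\psi(t_0+\delta,X_{t_0+\delta})\right] - \varepsilon.
\]
Expanding the right-hand side by It\^o and using the assumed strict inequality uniformly in $h\in\mathcal{J}$ over a small spatial ball produces a positive lower bound of order $\delta$ on the right-hand side, which contradicts this DPP bound once $\varepsilon=o(\delta)$. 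The terminal condition $\tilde{\Phi}(T,x)=1$ is immediate from the definition of $\tilde{\Phi}$.

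The principal obstacle will be to prove the DPP~\eqref{eq_DPP_proposal} cleanly in the present setting, where the change of measure $\chi^h$ is itself state-dependent: one must check that the concatenation of admissible controls at the stopping time $\tau$ remains in $\mathcal{A}$, that measurable selection respects the closed convex feasibility set $\mathcal{J}$, and that $\chi^h$ is a true (not merely local) martingale, which is guaranteed by Definition~\ref{def_JDRSAM_admissible_A}(ii) together with the boundedness of $h\in\mathcal{J}$ from Assumption~\ref{as_assetjumps_upanddown_1}. Once the DPP is in place, the remaining steps are routine applications of It\^o's formula for jump-diffusions and dominated convergence, with \eqref{as_factorjumps_xi_integrable} ensuring that the non-local operator $\mathcal{I}[t,x,\psi]$ is well defined on bounded $C^{1,2}$ test functions.
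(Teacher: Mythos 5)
Your overall strategy---a dynamic programming principle plus It\^o's formula, with an arbitrary fixed control for one viscosity inequality and $\varepsilon$-optimal controls and a contradiction for the other---is exactly the strategy of the paper's proof, but you have attached each argument to the wrong inequality, and in consequence both of your inequality chains break down. At a global \emph{minimum} of $\tilde{\Phi}-\psi$ you have $\tilde{\Phi}\geq\psi$, so positivity of the exponential gives $\mathbf{E}^h\bigl[e^{\theta\int g}\tilde{\Phi}(\tau,X_\tau)\bigr]\geq\mathbf{E}^h\bigl[e^{\theta\int g}\psi(\tau,X_\tau)\bigr]$, while the DPP with a fixed $h$ gives $\tilde{\Phi}(t_0,x_0)\leq\mathbf{E}^h\bigl[e^{\theta\int g}\tilde{\Phi}(\tau,X_\tau)\bigr]$; these two inequalities point in opposite directions and do not combine to give your claimed bound $\psi(t_0,x_0)\leq\mathbf{E}^h\bigl[e^{\theta\int g}\psi(\tau,X_\tau)\bigr]$. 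Moreover, even if that bound held, the conclusion ``$0\leq\partial_t\psi+\mathcal{L}^h\psi+\theta g\psi$ for every $h$'' yields $\partial_t\psi+\inf_h\{\cdots\}\geq 0$, which is the \emph{sub}solution inequality, not the supersolution inequality $\partial_t\psi+\inf_h\{\cdots\}\leq 0$ required at a minimum: a family of lower bounds of the form ``$\geq 0$'' can never produce ``$\leq 0$'' by taking an infimum. The mirror-image mismatch breaks your subsolution step as well: at a global maximum $\tilde{\Phi}\leq\psi$, so the $\varepsilon$-optimality bound $\tilde{\Phi}(t_0,x_0)\geq\mathbf{E}^{h^\varepsilon}\bigl[e^{\theta\int g}\tilde{\Phi}(\tau,X_\tau)\bigr]-\varepsilon$ cannot be transferred to $\psi$, because there $\mathbf{E}\bigl[e^{\theta\int g}\tilde{\Phi}\bigr]\leq\mathbf{E}\bigl[e^{\theta\int g}\psi\bigr]$ is again the wrong direction.

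The repair is to swap the two arguments, which is precisely what the paper does: the fixed-control argument is run at a global \emph{maximum} of $\tilde{\Phi}-\psi$, where the chain $\psi(t_0,x_0)=\tilde{\Phi}(t_0,x_0)\leq\mathbf{E}^h\bigl[e^{\theta\int g}\tilde{\Phi}(\tau,X_\tau)\bigr]\leq\mathbf{E}^h\bigl[e^{\theta\int g}\psi(\tau,X_\tau)\bigr]$ is valid, and it yields the subsolution property; the $\varepsilon$-optimal-control contradiction argument is run at a global \emph{minimum} and yields the supersolution property. One further remark on what you call the principal obstacle: the paper does not construct the DPP from scratch by concatenation and measurable selection; it invokes the weak dynamic programming principle of Bouchard and Touzi \cite{BouTou11} and upgrades it to the classical form \eqref{DPP} using the \emph{a priori} Lipschitz continuity of $\tilde{\Phi}$ from Proposition~\ref{prop_tildePhi_Lipschitz}. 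You may find that route considerably cleaner than verifying directly that concatenation at a stopping time preserves membership of $\mathcal{A}$ and that $\chi^h$ remains a true martingale under pasting.
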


\begin{proof} We first show that $\tilde{\Phi}$ is both a, possibly discontinuous, viscosity subsolution and viscosity supersolution.
 
\noindent\textbf{Step 1: Viscosity Subsolution.}
Let $(t_0,x_0) \in Q := [0,t]\times\mathbb{R}^n$ and $u \in C^{1,2}(Q)$ satisfy
\begin{equation}
    0 = (\tilde{\Phi}^* - u)(t_0,x_0) = \max_{(x,t) \in Q} (\tilde{\Phi}^*(t,x) - u(t,x))
\end{equation}
and hence, on $Q$,
\begin{equation}\label{eq_JDRSAM_theoviscositysol_step2_ineq1}
    \tilde{\Phi} \leq \tilde{\Phi}^* \leq u.
\end{equation}
Let $(t_k,x_k)$ be a sequence in $Q$ such that
\begin{equation}
    \lim_{k \to \infty} (t_k,x_k) = (t_0,x_0)
                \nonumber
\end{equation}
\begin{equation}
    \lim_{k \to \infty} \tilde{\Phi}(t_k,x_k) = \tilde{\Phi}^*(t_0,x_0)
                \nonumber
\end{equation}
and define the sequence $\left\{\psi\right\}_k$ as $\psi_k :=
\tilde{\Phi}(t_k,x_k) - u(t_k,x_k)$. Since $u$ is of class $C^{1,2}$, $\lim_{k \to \infty} \psi_k
= 0$.
\\

Fix $h \in J$ and consider a constant control $\hat{h} = h$. Denote
by $X^k$ the state process with initial data $X_{t_k}^{k} = x_k$
and, for $k>0$, define the stopping time
\begin{equation}
    \tau_k := \inf\left\{
        s > t_k : (s-t_k,X_s^k-x_k) \notin
        [0,\delta_k)
        \times \alpha \mathscr{B}_n \right\}
                                    \nonumber
\end{equation}
for a given constant $\alpha > 0$ and where $\mathscr{B}_n$ is the
unit ball in $\mathbb{R}^n$ and
\begin{equation}
    \delta_k := \sqrt{\psi_k}\left(1-\mathit{1}_{\left\{0\right\}}(\psi_k)\right)
        + k^{-1}\mathit{1}_{\left\{0\right\}}(\psi_k)
                            \nonumber
\end{equation}
From the definition of $\tau_k$, we see that $\lim_{k \to \infty}
\tau_k = t_0$.
\\

The value function $\tilde{\Phi}$ satisfies the Dynamic Programming Principle, by Theorem 3.5 of Bouchard and Touzi \cite{BouTou11}, so that 
\begin{equation}\label{DPP}
    \tilde{\Phi}(t_k,x_k)
            \leq \mathbf{E}_{t_k,x_k} \left[ \exp \left\{ \theta \int_{t_k}^{\tau_k}
                g(s,X_s,\hat{h}_s) ds \right\} \tilde{\Phi}(\tau_k, X_{\tau_k}^{k})\right]     
\end{equation}
where $\mathbf{E}_{t_k,x_k} \left[ \cdot \right]$ represents the expectation under the measure $\mathbb{P}$ given initial data $(t_k,x_k)$. See Remark \ref{DPP-remark} for further clarification of this point.
\\

By inequality~\eqref{eq_JDRSAM_theoviscositysol_step2_ineq1},
\begin{eqnarray}
    \tilde{\Phi}(t_k,x_k)
            &\leq& \mathbf{E}_{t_k,x_k} \left[ \exp \left\{ \theta \int_{t_k}^{\tau_k}
                g(s,X_s,\hat{h}_s) ds \right\} u(\tau_k, X_{\tau_k}^{k})\right]
                                                                                \nonumber
\end{eqnarray}
and hence by definition of $\psi_k$,
\begin{eqnarray}
    u(t_k,x_k) + \psi_k
            &\leq& \mathbf{E}_{t_k,x_k} \left[ \exp \left\{ \theta \int_{t_k}^{\tau_k}
                g(s,X_s,\hat{h}_s) ds \right\} u(\tau_k, X_{\tau_k}^{k})\right]
                                                                                \nonumber
\end{eqnarray}
i.e.
\begin{eqnarray}
    \psi_k
    &\leq& \mathbf{E}_{t_k,x_k} \left[ \exp \left\{ \theta \int_{t_k}^{\tau_k}
            g(s,X_s,\hat{h}_s) ds \right\} u(\tau_k, X_{\tau_k}^{k})\right] - u(t_k,x_k)
                                                                                \nonumber
\end{eqnarray}

Define $Z(t_k) = \theta \int_{t_k}^{\tau_k}
g(s,X_s,\hat{h}_s) ds$, then
\begin{equation}
    d\left(e^{Z_s}\right) := \theta g(s,X_s,\hat{h}_s)e^{Z_s}ds
                                                                    \nonumber
\end{equation}
Also, by It\^o,
\begin{eqnarray}
    du_s
    &=& \left\{
        \frac{\partial u}{\partial s} + \mathcal{L}u \right\} ds
    +   Du'\Lambda(s) dW_{s}
                                                                    \nonumber\\
    &&
    +   \int_{\mathbf{Z}}\left\{
            u\left(s,X(s^-)+\xi(s,X(s^-),z)\right) - u\left(s,X(s^-)\right)
        \right\}\tilde{N}_{\textbf{p}}(ds,dz)
                                                                    \nonumber
\end{eqnarray}
for $s \in \left[t_k, \tau_k \right]$ and where $\mathcal{L}$ is the generator of $X(t)$ given in~\eqref{eq_JDRSAM_generator_X}.
\\

By the It\^o product rule, and since $dZ_s \cdot u_s = 0$, we get
\begin{equation}
    d\left(u_s e^{Z_s}\right) = u_s d\left(e^{Z_s}\right) + e^{Z_s} du_s
                                        \nonumber
\end{equation}
and hence for $t \in [t_k, \tau_k]$
\begin{eqnarray}
    u(t,X_t^k)e^{Z_t} &=&
        u(t_k,x_k)e^{Z_{t_k}}
        + \theta\int_{t_k}^{t} u(s,X_s^k)g(s,X_s^k,\hat{h}_s)e^{Z_s}ds
                                                    \nonumber\\
    &&  + \int_{t_k}^{t}
            \left( \frac{\partial u}{\partial s}(s,X_s^k)
            + \mathcal{L}u(s,X_s^k) e^{Z_s} \right)ds
        + \int_{t_k}^{t} Du'\Lambda(s) dW_{s}
                                                    \nonumber\\
    &&  + \int_{t_k}^{t}\int_{\mathbf{Z}}\left\{
            u\left(s,X^k(s^-)+\xi(s,X^k(s^-),z)\right)
                                - u\left(s,X^k(s^-)\right)
        \right\}\tilde{N}_{\textbf{p}}(dt,dz)
                                                                    \nonumber
\end{eqnarray}

Noting that $u(t_k,x_k)e^{Z_{t_k}} = u(t_k,x_k)$ and taking the expectation with respect to the initial data $(t_k,x_k)$, we get
\begin{eqnarray}
    && \mathbf{E}_{t_k,x_k} \left[ u(t,X_t)e^{Z_t} \right]
                                                    \nonumber\\
    &=&  u(t_k,x_k)e^{Z_{t_k}}
        + \mathbf{E}_{t_k,x_k} \left[
          \int_{t_k}^{t}
                \left(\frac{\partial u}{\partial s}(s,X_s) + \mathcal{L}u(s,X_s)
                +\theta u(s,X_s)g(s,X_s,\hat{h}_s)\right)e^{Z_s}ds
          \right]
                                                    \nonumber
\end{eqnarray}

In particular, for $t = \tau_k$,
\begin{eqnarray}
    \psi_k
    &\leq& \mathbf{E}_{t_k,x_k} \left[ u(\tau_k,X_{\tau_k})e^{Z_{\tau_k}} \right]
             - u(t_k,x_k)e^{Z_{t_k}}
                                                    \nonumber\\
    &=&  \mathbf{E}_{t_k,x_k} \left[
          \int_{t_k}^{\tau_k}
                \left(\frac{\partial u}{\partial s}(s,X_s) + \mathcal{L}u(s,X_s)
                +\theta u(s,X_s)g(s,X_s,\hat{h}_s)\right)e^{Z_s}ds
          \right]
                                                    \nonumber
\end{eqnarray}
and thus
\begin{eqnarray}
    \frac{\psi_k}{\delta_k}
    &\leq& \frac{1}{\delta_k}\left(
             \mathbf{E}_{t_k,x_k,} \left[ u(\tau_k,X_{\tau_k})e^{Z_{\tau_k}} \right]
             - u(t_k,x_k)e^{Z_{t_k}}\right)
                                                    \nonumber\\
    &=&  \frac{1}{\delta_k}\left(
          \mathbf{E}_{t_k,x_k} \left[
          \int_{t_k}^{\tau_k}
                \left(\frac{\partial u}{\partial s}(s,X_s) + \mathcal{L}u(s,X_s)
                +\theta u(s,X_s)g(s,X_s,\hat{h}_s)\right)e^{Z_s}ds
          \right] \right)
                                                    \nonumber
\end{eqnarray}

As $k \to \infty$, $t_k \to t_0$, $\tau_k \to t_0$, $\frac{\psi_k}{\delta_k} \to 0$ and
\begin{eqnarray}
    && \frac{1}{\delta_k}\left(
          \mathbf{E}_{t_k,x_k} \left[
          \int_{t_k}^{t}
                \left(\frac{\partial u}{\partial s}(s,X_s) + \mathcal{L}u(s,X_s)
                +\theta u(s,X_s)g(s,X_s,\hat{h}_s)\right)e^{Z_s}ds
          \right] \right)
                                                    \nonumber\\
    &\to&
        \frac{\partial u}{\partial s}(s,X_s) + \mathcal{L}u(s,X_s)
        +\theta u(s,X_s)g(s,X_s,\hat{h}_s)
                                                    \nonumber
\end{eqnarray}
a.s. by the Bounded Convergence Theorem, since the random variable
\begin{equation}
    \frac{1}{\delta_k} \int_{t_k}^{t} \left(
        \frac{\partial u}{\partial s}(s,X_s)
        + \mathcal{L}u(s,X_s)
        +\theta u(s,X_s)g(s,X_s,\hat{h}_s)
    \right)e^{Z_s}ds
                                                    \nonumber
\end{equation}
is bounded for large enough $k$.
\\

Hence, we conclude that since $\hat{h}_s$ is arbitrary,
\begin{equation}
    \frac{\partial u}{\partial s}(s,X_s) + \mathcal{L}u(s,X_s)
    +\theta u(s,X_s)g(s,X_s,\hat{h}_s)
    \geq 0
                                                    \nonumber
\end{equation}
i.e.
\begin{equation}
    -\frac{\partial u}{\partial s}(s,X_s) - \mathcal{L}u(s,X_s)
    -\theta u(s,X_s)g(s,X_s,\hat{h}_s)
    \leq 0
                                                    \nonumber
\end{equation}

This argument proves that $V$ is a (discontinuous) viscosity subsolution of the PDE~\eqref{eq_JDRSAM_exptrans_HJBPDE} on $[0,t) \times \mathbb{R}^n$ subject to terminal condition $\tilde{\Phi}(T, x) = e^{\theta g_T(x;T)}$.
\\

\noindent\textbf{Step 2: Viscosity Supersolution}\\

This step in the proof is a slight adaptation of the proof for classical control problems in Touzi~\cite{to02}. Let $(t_0,x_0) \in Q$ and $u \in C^{1,2}(Q)$ satisfy
\begin{equation}\label{eq_JDRSAM_theoviscositysol_step3_ineq1}
    0 = (\tilde{\Phi}_* - u)(t_0,x_0) < (\tilde{\Phi}_* - u)(t,x) \textrm{ for } Q\backslash{(t_0,x_0)}
\end{equation}

We intend to prove that at $(t_0,x_0)$
\begin{equation}
   \frac{\partial u}{\partial t}(t,x)
    + \inf_{h \in \mathcal{H}}\left\{
        \mathcal{L}^{h} u(t,x) - \theta g(t,x,h)
    \right\}
    \leq 0
                            \nonumber
\end{equation}
by contradiction. Thus, assume that
\begin{equation}\label{eq_JDRSAM_theoviscositysol_step3_contradictionPDE}
   \frac{\partial u}{\partial t}(t,x)
    + \inf_{h \in \mathcal{H}}\left\{
        \mathcal{L}^{h} u(t,x) - \theta g(t,x,h)
    \right\}
    > 0
\end{equation}
at $(t_0,x_0)$.

Since $\mathcal{L}^h u$ is continuous, there exists an open neighbourhood $\mathcal{N}_{\delta}$ of $(t_0,x_0)$ defined for $\delta > 0$ as
\begin{equation}
    \mathcal{N}_{\delta} := \left\{(t,x): (t-t_0,x-x_0) \in (-\delta,\delta) \times \delta \mathscr{B}_n, \textrm{ and~\eqref{eq_JDRSAM_theoviscositysol_step3_contradictionPDE} holds}\right\}
\end{equation}
\\

Note that by~\eqref{eq_JDRSAM_theoviscositysol_step3_ineq1} and since $\tilde{\Phi} > \tilde{\Phi}_{*} > u $,
\begin{equation}
    \min_{Q \backslash \mathcal{N}_{\delta}}\left(\tilde{\Phi} - u \right) > 0
                \nonumber
\end{equation}
\\

For $\rho >0$, consider the set $J^{\rho}$ of $\rho$-optimal controls $h^{\rho}$ satisfying
\begin{equation}\label{eq_JDRSAM_visc_HJBderivation_step3_def_epsilonoptimal}
    \tilde{I}(t_0,x_0,h^{\rho}) \leq \tilde{\Phi}(t_0,x_0) + \rho
\end{equation}
\\

Also, let $\epsilon > 0$, $\epsilon \leq \gamma$ be such that
\begin{equation}\label{eq_JDRSAM_visc_HJBderivation_step3_epsilon}
   \min_{Q \backslash \mathcal{N}_{\delta}}
        \left(\tilde{\Phi} - u \right)
        \geq 3\epsilon e^{-\delta \theta M_{\delta}}
    > 0
\end{equation}
where $M_{\delta}$ is defined as
\begin{equation}
    M_{\delta} := \max_{(t,x) \in \mathcal{N}_{\delta}^J, h \in J^{\rho}}\left(-g(x,h),0\right)
                                        \nonumber
\end{equation}
for
\begin{equation}
    \mathcal{N}_{\delta}^J := \left\{(t,x): (t-t_0,x-x_0) \in (-\delta,\delta) \times (\zeta+\delta) \mathscr{B}_n \right\}
\end{equation}
and
\begin{equation}
    \zeta := \max_{z \in \mathbb{Z}} \| \xi(z)\|
                                        \nonumber
\end{equation}
Note that $M_{\delta} < \infty$ by boundedness of $g$ (see Property~\ref{prop_JDRSAM_g_bounded_Lipschitz}).
\\

Now let $(t_k,x_k)$ be a sequence in $\mathcal{N}_{\delta}$ such that
\begin{equation}
    \lim_{k \to \infty} (t_k,x_k) = (t_0,x_0)
                \nonumber
\end{equation}
and
\begin{equation}
    \lim_{k \to \infty} \tilde{\Phi}(t_k,x_k) = \tilde{\Phi}_*(t_0,x_0)
                \nonumber
\end{equation}
Since $(\tilde{\Phi}-u)(t_k,x_k) \to 0$, we can assume that the sequence $(t_k,x_k)$ satisfies
\begin{equation}\label{eq_JDRSAM_theoviscositysol_step3_ineq2}
    \lvert (\tilde{\Phi}-u)(t_k,x_k) \rvert
    \leq  \epsilon,
    \qquad \textrm{for } k \geq 1
\end{equation}
for $\epsilon$ defined by~\eqref{eq_JDRSAM_visc_HJBderivation_step3_epsilon}
\\

Consider the $\epsilon$-optimal control $h_k^{\epsilon}$, denote by $\tilde{X}_k^\epsilon$ the controlled process defined by the control process $h_k^{\epsilon}$ and introduce the stopping time
\begin{equation}
    \tau_k := \inf\left\{s>\tau_k : (s,\tilde{X}_k^\epsilon(s)) \notin \mathcal{N}_{\delta} \right\}
                            \nonumber
\end{equation}
Note that since we assumed that $-\infty \leq \xi_{i}^{\textrm{min}} \leq \xi_{i} \leq \xi_{i}^{\textrm{max}} < \infty$ for $i = 1, \ldots, n$ and since $\nu$ is assumed to be bounded then $X(\tau)$ is also finite and in particular,
\begin{equation}\label{eq_JDRSAM_theoviscositysol_step3_ineq3}
    (\tilde{\Phi}-u)(\tau_k,\tilde{X}_k^\epsilon(\tau_k))
    \geq (\tilde{\Phi}_{*} - u)(\tau_k,\tilde{X}_k^\epsilon(\tau_k))
    \geq 3\epsilon e^{-\delta \theta M_{\delta}}
\end{equation}
\\

Choose $\mathcal{N}_{\delta}^{J}$ so that $(\tau,\tilde{X}^\epsilon(\tau)) \in \mathcal{N}_{\delta}^{J}$. In particular, since $X^\epsilon(\tau)$ is finite then $\mathcal{N}_{\delta}^{J}$  can be defined to be a strict subset of $Q$ and we can effectively use the local boundedness of $g$ to establish $M_{\delta}$.
\\

Let $Z(t_k) = \theta \int_{t_k}^{\bar{\tau}_k} g(s,\tilde{X}_s^{\epsilon},h_s^{\epsilon}) ds$, since $\tilde{\Phi} \geq \tilde{\Phi}_{*}$ and by~\eqref{eq_JDRSAM_theoviscositysol_step3_ineq2} and~\eqref{eq_JDRSAM_theoviscositysol_step3_ineq3},
\begin{eqnarray}
    &&  \tilde{\Phi}(\tau_k,\tilde{X}_k^\epsilon(\tau_k))e^{Z(\tau_k)}
        -\tilde{\Phi}(t_k,x_k)e^{Z(t_k)}
                                        \nonumber\\
    &\geq&
        u(\tau_k,\tilde{X}_k^\epsilon(\tau_k))e^{Z(\tau_k)}
        -\tilde{\Phi}(t_k,x_k)e^{Z(t_k)}
        + 3\epsilon e^{-\delta \theta M_{\delta}}e^{Z(\tau_k)}
        - \epsilon
                                \nonumber\\
    &\geq&
        \int_{t_k}^{\tau_k} d\left(u(s,\tilde{X}_k^\epsilon(s))e^{Z_s}\right)
        + 2\epsilon
                                \nonumber
\end{eqnarray}
i.e.
\begin{eqnarray}
   \tilde{\Phi}(t_k,x_k)
    &\leq&
        \tilde{\Phi}(\tau_k,\tilde{X}_k^\epsilon(\tau_k))e^{Z(\tau_k)}
        - \int_{t_k}^{\tau_k} d\left(u(s,\tilde{X}_k^\epsilon(s))e^{Z_s}\right)
        - 2\epsilon
                            \nonumber
\end{eqnarray}

Taking expectation with respect to the initial data $(t_k,x_k)$,
\begin{eqnarray}
   \tilde{\Phi}(t_k,x_k)
    &\leq&
        \mathbf{E}_{t_k,x_k}\left[
            \tilde{\Phi}(\tau_k,\tilde{X}_k^\epsilon(\tau_k))e^{Z(\tau_k)}
            - \int_{t_k}^{\tau_k} d\left(u(s,\tilde{X}_k^\epsilon(s))e^{Z_s}\right)
        \right]
        - 2\epsilon
                            \nonumber
\end{eqnarray}

Note that by the It\^o product rule,
\begin{eqnarray}
    &&   d\left(u(s,\tilde{X}_k^\epsilon(s))e^{Z_s}\right)
                                    \nonumber\\
    &=& u_s d\left(e^{Z_s}\right) + e^{Z_s} du_s
                                    \nonumber\\
    &=& \frac{\partial u}{\partial t}(t,x)+ \mathcal{L}^{h} u(t,x)
            + \theta g(t,x,h)
                                    \nonumber
\end{eqnarray}
Since we assumed that
\begin{equation}
   -\frac{\partial u}{\partial t}(t,x)- \mathcal{L}^{h} u(t,x)
    -\theta g(t,x,h) < 0
                            \nonumber
\end{equation}
then
\begin{equation}
    -\int_{t_k}^{\tau_k} d\left(u(s,\tilde{X}_k^\epsilon(s))e^{z_s}\right)
     < 0
                            \nonumber
\end{equation}
and therefore
\begin{eqnarray}
    \tilde{\Phi}(t_k,x_k)
    &\leq&
        \mathbf{E}_{t_k,x_k}\left[
            \tilde{\Phi}(\tau_k,\tilde{X}_k^\epsilon(\tau_k))e^{Z(\tau_k)}
            - \int_{t_k}^{\tau_k} d\left(u(s,\tilde{X}_k^\epsilon(s))e^{Z_s}\right)
        \right]
        - 2\epsilon
                            \nonumber\\
    &\leq&  -2\epsilon + \mathbf{E}
        \left[ \exp \left\{ \theta \int_{t_k}^{\tau_k} g(X_s,h_k^{\epsilon}(s))
        ds\right\}\tilde{\Phi}(\tau_k,\tilde{X}_k^\epsilon(\tau_k))\right]
                            \nonumber\\
    &\leq&  -2\epsilon + \tilde{I}(t_k,x_k,h_k^{\epsilon})
                            \nonumber\\
    &\leq&  \tilde{\Phi}(t_k,x_k) - \epsilon
                            \nonumber
\end{eqnarray}
where the third inequality follows from the Dynamic Programming Principle and the last inequality follows from the definition of $\epsilon$-optimal controls\footnote{For $\epsilon >0$, the set $J^{\epsilon}$ of $\epsilon$-optimal controls $h^{\epsilon}$ is the set of control satisfying
\begin{equation}
    \tilde{I}(t_0,x_0,h^{\epsilon}) \leq \tilde{\Phi}(t_0,x_0) + \epsilon
                                                                                                                                \nonumber
\end{equation}}.
\\

Hence, equation~\eqref{eq_JDRSAM_theoviscositysol_step3_contradictionPDE},
\begin{equation}
   \frac{\partial u}{\partial t}(t,x)
    + \inf_{h \in \mathcal{H}}\left\{
        \mathcal{L}^{h} u(t,x) - \theta g(t,x,h)
    \right\}
    > 0
                                            \nonumber
\end{equation}
is false and we have shown that
\begin{equation}
   \frac{\partial u}{\partial t}(t,x)
    + \inf_{h \in \mathcal{H}}\left\{
        \mathcal{L}^{h} u(t,x) - \theta g(t,x,h)
    \right\}
    \leq 0
                                            \nonumber
\end{equation}
\\

This argument therefore proves that $\tilde{\Phi}$ is a (discontinuous) viscosity supersolution of the PDE~\eqref{eq_JDRSAM_exptrans_HJBPDE} on $[0,t) \times \mathbb{R}^n$ subject to terminal condition $\tilde{\Phi}(T, x) = e^{\theta g_T(x;T)}$. 
\\

\noindent\textbf{Step 3: Viscosity Solution.} Since $\tilde{\Phi}$ is both a (discontinuous) viscosity subsolution and a supersolution of~\eqref{eq_JDRSAM_exptrans_HJBPDE}, it is a (discontinuous) viscosity solution. But we already know from Proposition \ref{prop_tildePhi_Lipschitz} that $\tilde{\Phi}$ is Lipschitz continuous.
\hfill\end{proof}
\\

\begin{remark}\label{DPP-remark}
        Theorem 3.5 of Bouchard and Touzi~\cite{BouTou11} provides the Dynamic Programming Principle (DPP) \eqref{DPP} we needed above to show the existence of a viscosity solution. They state it in `weak' form, meaning that on the right-hand side the value function is replaced by its upper semi-continuous envelope. However, in our case we know \emph{a priori} that the value function is continuous, so it is valid to state the DPP in its classic form as in \eqref{DPP}. We can check that Conditions A1--A4 of the theorem are valid in our `weak solutions' formulation of the control problem.

In Section 5 of \cite{BouTou11}, Bouchard and Touzi directly address the DPP question and characterization of optimality by viscosity solutions of the HJB equation for a class of controlled Markov jump-diffusions. We cannot use their results (Proposition 5.4 and Corollary 5.6) directly as the formulation is subtly different and the conditions exclude infinite-activity jumps, although the core of their argument does not depend on the nature of the jumps. As stated in Remark 5.1 of \cite{BouTou11} their assumption is necessary to match the strictest set of assumptions required by Barles and Imbert~\cite{baim08}.  
It is widely appreciated that the heart of viscosity solution theory lies in the uniqueness theorems and, as seen for example in  Barles and Imbert~\cite{baim08}, to prove the necessary comparison theorems stronger conditions are generally required than those needed for existence. Our strategy is to by-pass this question entirely by taking a route that only requires uniqueness of viscosity solutions for PDEs---where a large literature exists---rather than PIDEs where results are sparser.
\end{remark}
\\

\begin{corollary}\label{theo_JDRSAM_viscositysol_phi}
$\Phi$ is a continuous viscosity solution of the RS HJB PIDE~\eqref{eq_JDRSAM_HJBPDE} on $[0,T] \times \mathbb{R}^n$, subject to terminal condition~\eqref{eq_JDRSAM_HJBPDE_termcond}.
\end{corollary}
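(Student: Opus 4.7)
The plan is to exploit the explicit relationship $\tilde{\Phi}(t,x)=\exp(-\theta\Phi(t,x))$, i.e.\ $\Phi=-\tfrac{1}{\theta}\ln\tilde{\Phi}$, which is a $C^{\infty}$ strictly decreasing bijection between $(0,\infty)$ and $\mathbb{R}$. By Proposition~\ref{prop_JDRSAM_tildePhi_bounded} the value function $\tilde{\Phi}$ is strictly positive, and by Proposition~\ref{prop_tildePhi_Lipschitz} it is Lipschitz continuous, hence locally bounded away from $0$. Composition with $-\tfrac{1}{\theta}\ln(\cdot)$ preserves continuity locally, so $\Phi$ is continuous on $[0,T]\times\mathbb{R}^n$. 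The terminal condition $\tilde{\Phi}(T,x)=1$ immediately gives $\Phi(T,x)=0$, which matches~\eqref{eq_JDRSAM_HJBPDE_termcond}.

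To transfer the viscosity property from $\tilde{\Phi}$ to $\Phi$, I would use the standard test-function change of variables. Given $\phi\in C^{1,2}([0,T]\times\mathbb{R}^n)$ bounded, define $\psi:=e^{-\theta\phi}\in C^{1,2}$. Because $u\mapsto e^{-\theta u}$ is strictly decreasing, a global maximum of $\Phi-\phi$ at $(t_0,x_0)$ is a global \emph{minimum} of $\tilde{\Phi}-\psi$ at $(t_0,x_0)$, and conversely. Thus the subsolution property for $\Phi$'s PIDE~\eqref{eq_JDRSAM_HJBPDE} will follow from the supersolution property of $\tilde{\Phi}$ for~\eqref{eq_JDRSAM_exptrans_HJBPDE} (established in Theorem~\ref{theo_JDRSAM_viscositysol}), and vice versa.

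The computation is then a chain-rule calculation at $(t_0,x_0)$. Setting $\psi_0:=\psi(t_0,x_0)=\tilde{\Phi}(t_0,x_0)>0$, one has $\partial_t\psi=-\theta\psi_0\,\partial_t\phi$, $D\psi=-\theta\psi_0\,D\phi$ and $D^2\psi=\theta^2\psi_0\,D\phi\,D\phi'-\theta\psi_0\,D^2\phi$. Substituting these into the viscosity inequality for $\tilde{\Phi}$ and dividing by $-\theta\psi_0<0$ (which flips the inequality) produces every term of the operator $L^h$ in~\eqref{eq_JDRSAM_HJBPDE_operator_L}: the quadratic term $-\tfrac{\theta}{2}\,D\phi'\Lambda\Lambda' D\phi$ arises from the $\theta^2\psi_0\,D\phi\,D\phi'$ part of $D^2\psi$, while $H(t,x,r,p)=\inf_{h\in\mathcal{J}}\{f'p+\theta g\,r\}$ evaluated at $(r,p)=(\psi_0,-\theta\psi_0 D\phi)$ becomes $-\theta\psi_0\sup_{h\in\mathcal{J}}\{f'D\phi-g\}$, yielding exactly the optimization appearing in~\eqref{eq_JDRSAM_HJBPDE}.

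The only non-trivial obstacle is the non-local term. Using the convention of Definition~\ref{def_JDRSAM_viscositysol_testfunc_alt} (as emphasized in Step~3 of the outline), in which the actual solution rather than the test function appears inside $\mathcal{I}[\,\cdot\,]$, the integrand is $\tilde{\Phi}(t,x+\xi)-\tilde{\Phi}(t,x)-\xi'D\psi(t,x)$ at the extremal point. Dividing by $-\theta\psi_0=-\theta\tilde{\Phi}(t_0,x_0)$ converts the ratio $\tilde{\Phi}(t,x+\xi)/\tilde{\Phi}(t_0,x_0)$ into $\exp\{-\theta[\Phi(t,x+\xi)-\Phi(t,x)]\}$ and the $\xi'D\psi$ contribution into $\xi'D\phi$, producing precisely $\mathcal{I}_{NL}[t,x,\Phi,D\phi]$. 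Combining the two viscosity inequalities (sub and super) established this way yields the claim that $\Phi$ is a continuous viscosity solution of~\eqref{eq_JDRSAM_HJBPDE}--\eqref{eq_JDRSAM_HJBPDE_termcond}.
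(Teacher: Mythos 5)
Your proposal is correct and follows essentially the same route as the paper: the paper simply invokes the standard change-of-variable property for viscosity solutions (Proposition 2.2 in Touzi \cite{to02}), justified by the boundedness and strict positivity of $\tilde{\Phi}$ and the strict monotonicity of $x\mapsto e^{-\theta x}$, whereas you carry out that chain-rule computation explicitly. The only implicit step worth stating is the usual normalization $\phi(t_0,x_0)=\Phi(t_0,x_0)$ at the touching point, without which a maximum of $\Phi-\phi$ does not directly become a minimum of $\tilde{\Phi}-\psi$.
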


\begin{proof} This follows from the change of variable property (see for example Proposition 2.2 in Touzi~\cite{to02}) which applies here because 
(a) the value function $\tilde{\Phi}$ is bounded by Proposition~\ref{prop_JDRSAM_tildePhi_bounded}, and (b) the function $\varphi(x) = e^{-\theta x}$ is of class $C_1(\mathbb{R})$ and $\frac{d\varphi}{dx} < 0$. \hfill\end{proof}

\subsection{From PIDE to PDE}
We define the functional $f_{a}$, the non-local function $\check{\mathcal{I}}$ and the functional $d_a$ respectively as:
\begin{eqnarray}\label{def_JDRSAM_function_tilde_f}
        f_{a}(x,h)
        &:=&    f(x,h)
        - \int_{\mathbf{Z}} \xi(t,x,z) \nu(dz)
                                                                                                                                                \nonumber\\
        &=&     b(t,x)
                -       \theta\Lambda\Sigma(t,x)'h(s)
        +       \int_{\mathbf{Z}}\xi(t,x,z)\left[
              \left(1+h'\gamma(t,z)\right)^{-\theta}
               -        \mathit{1}_{\mathbf{Z}_0}(z)
                                        -       1
         \right]\nu(dz)
                                                                                                                                                \nonumber\\
\end{eqnarray}
where $f$ is defined in~\eqref{eq_JDRSAM_func_f},
\begin{eqnarray}\label{def_JDRSAM_function_tilde_d}
        d_{a}^{\tilde{\Phi}}(t,x)
        = \check{\mathcal{I}}\left[t,x,\tilde{\Phi}(t,x)\right]
        =       \int_{\mathbf{Z}} \left\{
              \tilde{\Phi}\left(t,x+\xi(t,x,z)\right)
            - \tilde{\Phi}(t,x)
        \right\} \nu(dz)
\end{eqnarray}

\begin{remark}
Under Assumption~\ref{as_JDRSAM_uncorrelatedjumps},
\begin{eqnarray}
        \int_{\mathbf{Z}}\xi(t,x,z)\left[
       \left(1+h'\gamma(t,z)\right)^{-\theta}
        -       \mathit{1}_{\mathbf{Z}_0}(z)
                        -       1
    \right]\nu(dz)
        =
        - \int_{\mathbf{Z}_0}\xi(t,x,z)\nu(dz)
\end{eqnarray}
and therefore
\begin{eqnarray}
        f_{a}(x,h)
        &=&     b(t,x)
                -       \theta\Lambda\Sigma(t,x)'h(s)
        -       \int_{\mathbf{Z}_0} \xi(t,x,z)\nu(dz)
\end{eqnarray}
\\
\end{remark}

Under Assumption~\eqref{as_factorjumps_xi_integrable}, $f_a$ is well defined. With this notation, we can express the risk-sensitive integro-differential HJB PIDE in terms of an equivalent parabolic PDE \eqref{eq_JDRSAM_exptrans_HJBPDE_parabolic_outline} as stated in Step 2 of our outline proof above.

The functions $f_a$ and $\check{\mathcal{I}}$ have the following properties:
 
\begin{lemma}\label{lem_JDRSAM_tildef_Lipshcitz_cond2}
      The function $f_a$ is Lipschitz continuous and bounded.
\end{lemma}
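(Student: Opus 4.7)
The plan is to exploit Assumption~\ref{as_JDRSAM_uncorrelatedjumps} to reduce $f_a$ to a sum of three simple terms, and then to verify boundedness and Lipschitz continuity of each term separately using the standing hypotheses already collected in Assumptions~\ref{As_Factors} and~\ref{As_Securities}.

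First, I will apply Assumption~\ref{as_JDRSAM_uncorrelatedjumps}. Since $\gamma(t,z)\xi'(t,x,z)=0$ for all $(t,x,z)$, at any point $z$ where $\xi(t,x,z)\neq 0$ we must have $\gamma(t,z)=0$, hence $h'\gamma(t,z)=0$ and $(1+h'\gamma(t,z))^{-\theta}=1$. The bracket in the definition \eqref{def_JDRSAM_function_tilde_f} of $f_a$ therefore collapses to $-\mathbf{1}_{\mathbf{Z}_0}(z)$, yielding the representation stated in the remark preceding the lemma,
\[
f_{a}(t,x,h) = b(t,x) - \theta\,\Lambda\Sigma'(t,x)\,h - \int_{\mathbf{Z}_0}\xi(t,x,z)\,\nu(dz).
\]
This form is much easier to analyse than the original expression because it removes the nonlinear dependence of the non-local term on $h$.

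Next, for boundedness I will treat the three terms one at a time. The drift $b$ is bounded by Assumption~\ref{As_Factors}(\ref{As_Factors_i}). For the second term, $\Lambda$ and $\Sigma$ are bounded by Assumptions~\ref{As_Factors}(\ref{As_Factors_ii}) and~\ref{As_Securities}(\ref{As_Securities_iv}), while $h\in\mathcal{J}\subseteq\mathcal{J}_0$ is bounded by the remark following Definition~\ref{def_JDRSAM_controlprocess_h}, hence $\Lambda\Sigma' h$ is uniformly bounded. The non-local integral is controlled by Assumption~\ref{As_Factors}(\ref{As_Factors_vi}), specifically \eqref{as_factorjumps_xi_integrable}, combined with the boundedness and Lipschitz continuity of $\xi$ (Assumption~\ref{As_Factors}(\ref{As_Factors_v})), which yield a bound uniform in $(t,x)$.

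For Lipschitz continuity, I will again handle the three terms separately. The function $b$ is Lipschitz by Assumption~\ref{As_Factors}(\ref{As_Factors_i}). The product $\Lambda\Sigma'$ is Lipschitz in $(t,x)$ by Assumption~\ref{As_Securities}(\ref{As_Securities_ix}), and multiplication by the bounded control $h$ preserves Lipschitz continuity. For the integral term, uniform Lipschitz continuity of $\xi$ in $(t,x)$ (with constant $K_\xi$ independent of $z$, by Assumption~\ref{As_Factors}(\ref{As_Factors_v})) gives
\[
\Bigl|\int_{\mathbf{Z}_0}\bigl(\xi(t,x,z)-\xi(s,y,z)\bigr)\nu(dz)\Bigr|
\le K_\xi\bigl(|t-s|+|x-y|\bigr)\,\nu(\mathbf{Z}_0),
\]
and combining the three estimates produces the claimed single Lipschitz constant for $f_a$.

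The one delicate point, and the main obstacle, is verifying that the constant for the non-local term is finite and independent of $(t,x,h)$: one needs the coefficient $\nu(\mathbf{Z}_0)$-times-$K_\xi$ in the Lipschitz bound, and the supremum of $\int_{\mathbf{Z}_0}|\xi(t,x,z)|\nu(dz)$ in the boundedness bound, to be finite. Under the standing hypotheses \eqref{as_factorjumps_xi_integrable} and the boundedness of $\xi$, together with the implicit finiteness of $\nu(\mathbf{Z}_0)$ that is consistent with the framework of the paper (cf.\ Davis--Guo--Wu \cite{guwu09}), both constants are finite, which closes the argument.
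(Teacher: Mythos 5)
Your overall route coincides with the paper's: invoke Assumption~\ref{as_JDRSAM_uncorrelatedjumps} to collapse $f_a$ to $b(t,x)-\theta\Lambda\Sigma'(t,x)h-\int_{\mathbf{Z}_0}\xi(t,x,z)\nu(dz)$ (this is exactly the remark preceding the lemma), then treat the three terms separately via Assumptions~\ref{As_Factors} and~\ref{As_Securities}; the paper's own proof is essentially a one-line citation of these same facts. Your reduction via $\gamma\xi'=0$ is correct, and the drift and diffusion terms are handled correctly.

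The gap is in the non-local term, and you half-see it yourself. Your Lipschitz estimate carries the constant $K_\xi\,\nu(\mathbf{Z}_0)$, and you close the argument by asserting the ``implicit finiteness of $\nu(\mathbf{Z}_0)$''. That is not consistent with the paper's setup: $\mathbf{Z}_0$ is chosen so that $\nu(\mathbf{Z}\backslash\mathbf{Z}_0)<\infty$, i.e.\ $\mathbf{Z}_0$ is precisely the set on which $\nu$ is permitted to have infinite mass (the small, possibly infinite-activity jumps; see also Remark~\ref{DPP-remark}, where the authors stress that they wish to accommodate infinite activity). When $\nu(\mathbf{Z}_0)=\infty$ your bound is vacuous. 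What is actually needed is an integrated Lipschitz condition on $\xi$, e.g.\ $|\xi(t,x,z)-\xi(s,y,z)|\le\rho(z)\left(|t-s|+|x-y|\right)$ with $\int_{\mathbf{Z}_0}\rho\,d\nu<\infty$; the $z$-uniform constant $K_\xi$ of Assumption~\ref{As_Factors}(\ref{As_Factors_v}) does not integrate over a set of infinite $\nu$-measure. A similar caveat applies to promoting the pointwise finiteness in \eqref{as_factorjumps_xi_integrable} to a bound uniform in $(t,x)$. To be fair, the paper's terse proof does not spell this out either, but it leans only on boundedness of $\int_{\mathbf{Z}_0}\xi\,d\nu$ and never on $\nu(\mathbf{Z}_0)$; your explicit appeal to $\nu(\mathbf{Z}_0)<\infty$ is the step that would fail.
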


\begin{proof}
        This follows from the continuity and boundedness of $f_a$ and from the boundedness of $\int_{\mathbf{Z}_0} \xi(t,x,z) \nu(dz)$, using Assumptions 1(v) and 1(vi).
\end{proof}

\begin{lemma}\label{lem_JDRSAM_d_bounded_cond2}
     The function $\check{\mathcal{I}}\left[t,x,\tilde{\Phi}(t,x)\right]$ is continuous.
\end{lemma}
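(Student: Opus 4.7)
The plan is to establish joint continuity of $(t,x)\mapsto\check{\mathcal{I}}[t,x,\tilde{\Phi}(t,x)]=\int_{\mathbf{Z}}G(t,x,z)\,\nu(dz)$, where $G(t,x,z):=\tilde{\Phi}(t,x+\xi(t,x,z))-\tilde{\Phi}(t,x)$, by a dominated-convergence argument applied separately on $\mathbf{Z}_0$ and on $\mathbf{Z}\setminus\mathbf{Z}_0$. For a sequence $(t_n,x_n)\to(t_*,x_*)$, the pointwise convergence $G(t_n,x_n,z)\to G(t_*,x_*,z)$ for every $z$ is immediate from the joint continuity of $\tilde{\Phi}$ (Lipschitz in $x$ by Proposition~\ref{prop_tildePhi_Lipschitz}, continuous in $t$ as noted in Remark~\ref{DPP-remark}) together with the Lipschitz continuity of $\xi$ in $(t,x)$ from Assumption~\ref{As_Factors}(\ref{As_Factors_v}).

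On $\mathbf{Z}\setminus\mathbf{Z}_0$ we have $\nu(\mathbf{Z}\setminus\mathbf{Z}_0)<\infty$ and $|G|\leq 2\|\tilde{\Phi}\|_\infty$ by Proposition~\ref{prop_JDRSAM_tildePhi_bounded}, so DCT applies directly. On $\mathbf{Z}_0$, Lipschitz continuity of $\tilde{\Phi}$ in $x$ with constant $K_{\tilde{\Phi}}$ delivers the sharp bound $|G(t,x,z)|\leq K_{\tilde{\Phi}}\,|\xi(t,x,z)|$, whose right-hand side is $\nu$-integrable on $\mathbf{Z}_0$ for each $(t,x)$ by~\eqref{as_factorjumps_xi_integrable}. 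Combined with the Lipschitz estimate $|\xi(t_n,x_n,z)|\leq|\xi(t_*,x_*,z)|+K_\xi(|t_n-t_*|+|x_n-x_*|)$, this yields the domination needed to pass to the limit.

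The main obstacle is that the constant additive term in the Lipschitz bound on $\xi$ is not $\nu$-integrable on $\mathbf{Z}_0$ when $\nu(\mathbf{Z}_0)=\infty$, so a single global dominating function is unavailable. I plan to circumvent this by decomposing $\mathbf{Z}_0=A_\eta\cup(\mathbf{Z}_0\setminus A_\eta)$ with $A_\eta:=\{z\in\mathbf{Z}_0:|\xi(t_*,x_*,z)|>\eta\}$, which has finite $\nu$-measure by Markov's inequality applied to~\eqref{as_factorjumps_xi_integrable}. Standard DCT with a uniform bound controls the integral over $A_\eta$, while on $\mathbf{Z}_0\setminus A_\eta$ the sharper inequality $|G|\leq K_{\tilde{\Phi}}|\xi|$ together with the fact that $\int_{\mathbf{Z}_0\setminus A_\eta}|\xi(t_*,x_*,z)|\,\nu(dz)\to 0$ as $\eta\to 0$ (DCT applied to the $\nu$-integrable function $|\xi(t_*,x_*,\cdot)|$) deliver the remaining estimate after sending $n\to\infty$ and then $\eta\to 0$, establishing joint continuity.
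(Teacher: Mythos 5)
Your overall framework (split $\mathbf{Z}$ into $\mathbf{Z}\setminus\mathbf{Z}_0$ and $\mathbf{Z}_0$, use the uniform bound from Proposition~\ref{prop_JDRSAM_tildePhi_bounded} on the finite-measure piece and the Lipschitz bound $|G|\leq K_{\tilde{\Phi}}|\xi|$ on $\mathbf{Z}_0$) is the natural one, and you correctly identify the crux: Assumption~\ref{As_Factors}(\ref{As_Factors_v}) only gives a \emph{constant} Lipschitz modulus $K_\xi$, which is not $\nu$-integrable on $\mathbf{Z}_0$ when $\nu(\mathbf{Z}_0)=\infty$. The paper itself gives no details here --- it simply cites Lemma 3.2 of Davis, Guo and Wu~\cite{guwu09} --- so a self-contained argument would be welcome. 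Unfortunately your truncation does not close the gap.

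The problem is the term $\int_{\mathbf{Z}_0\setminus A_\eta}|G(t_n,x_n,z)|\,\nu(dz)$, which must be made small uniformly in $n$ (or at least for all large $n$) before you send $\eta\to 0$. The only estimate available is
\[
\int_{\mathbf{Z}_0\setminus A_\eta}|G(t_n,x_n,z)|\,\nu(dz)\;\leq\;K_{\tilde{\Phi}}\int_{\mathbf{Z}_0\setminus A_\eta}|\xi(t_*,x_*,z)|\,\nu(dz)\;+\;K_{\tilde{\Phi}}K_\xi\bigl(|t_n-t_*|+|x_n-x_*|\bigr)\,\nu(\mathbf{Z}_0\setminus A_\eta),
\]
and $\nu(\mathbf{Z}_0\setminus A_\eta)$ is precisely the part of $\mathbf{Z}_0$ that may carry infinite mass: Markov's inequality controls $\nu(A_\eta)$, not the measure of its complement. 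So for every finite $n$ the right-hand side is $+\infty$, and ``sending $n\to\infty$ first'' does not help, since interchanging the limit with the integral over $\mathbf{Z}_0\setminus A_\eta$ (reverse Fatou) requires exactly the integrable dominating function that is missing. The smallness of $\int_{\mathbf{Z}_0\setminus A_\eta}|\xi(t_*,x_*,\cdot)|\,d\nu$ simply does not transfer from $(t_*,x_*)$ to $(t_n,x_n)$ under a constant modulus: one can construct $\xi$ satisfying Assumption~\ref{As_Factors}(\ref{As_Factors_v})--(\ref{As_Factors_vi}) literally (take $\nu$ a counting measure and $\xi(\cdot,\cdot,k)$ a family of uniformly Lipschitz tent functions whose supports accumulate at a point) for which $x\mapsto\int_{\mathbf{Z}_0}|\xi(t,x,z)|\,\nu(dz)$ is discontinuous, so no argument using only these hypotheses can succeed. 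What is actually needed --- and what holds in the Davis--Guo--Wu setting the paper invokes --- is a uniform-integrability-type condition, e.g.\ a $z$-dependent modulus $|\xi(t,y,z)-\xi(s,x,z)|\leq\rho(z)\left(|t-s|+|y-x|\right)$ with $\rho\in L^1(\nu)$, or a majorant $|\xi(t,x,z)|\leq\bar{\rho}(z)$ with $\bar{\rho}\in L^1(\mathbf{Z}_0,\nu)$ uniformly in $(t,x)$ on compacts. Under either of these your dominated-convergence argument goes through verbatim and the $A_\eta$ decomposition becomes unnecessary; without one of them, this step fails.
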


\begin{proof}
The proof of continuity follows along the lines of Lemma 3.2 in Davis, Guo and Wu~\cite{guwu09}, and relies on the Lipschitz continuity of $\tilde{\Phi}$ (Proposition~\ref{prop_tildePhi_Lipschitz}) and on Assumption 1(vi).

\end{proof}

\subsection{Viscosity Solution to the PDE~\eqref{eq_JDRSAM_exptrans_HJBPDE_parabolic_outline}}

Using the notation introduced in the previous step, we can express the risk-sensitive integro-differential HJB PDE as:
\begin{eqnarray}\label{eq_JDRSAM_exptrans_HJBPDE_viscosity2}
        \frac{\partial \tilde{\Phi}}{\partial t}(t,x)
                + \frac{1}{2} \textrm{tr} \left( \Lambda\Lambda'(t,x) D^2 \tilde{\Phi}(t,x)\right)
                        + H_{a}(t,x,\tilde{\Phi},D\tilde{\Phi})
                        + \check{\mathcal{I}}\tilde{\Phi}(t,x)
    &=& 0
\end{eqnarray}
subject to terminal condition $\tilde{\Phi}(T, x) = 1$.
\newline

\begin{remark}\label{rk_JDRSAM_Ha_Lipschitz}
By Remark~\ref{rk_JDRSAM_H_Lipschitz}, the function $H_a$ satisfies a Lipschitz condition as well as the linear growth condition
\begin{eqnarray}
        \lvert H_{a}(s,x,r,p) \rvert
        \leq C_a\left(1 + \lvert p \rvert\right)
        ,       \quad   \forall (s,x) \in Q_0
                                                        \nonumber
\end{eqnarray}
\end{remark}

We now present an alternative definition first suggested by Pham~\cite{ph98} and then formalized in the context of impulse control by Davis, Guo and Wu~\cite{guwu09}. In this definition, the integro-differential operator is evaluated using the actual solution.
\newline

\begin{definition}[Viscosity Solution (Test Functions in the Local Terms Only)]\label{def_JDRSAM_viscositysol_testfunc_alt}
A bounded function $u \in USC([0,T]\times\mathbb{R}^n)$ is a \emph{viscosity subsolution} of~\eqref{eq_JDRSAM_exptrans_HJBPDE_viscosity}, if for any bounded test function $\psi \in C^{1,2}([0,T]\times\mathbb{R}^n)$, if $(x,t)$ is a global maximum point of $u-\psi$, then
\begin{eqnarray}
        - \frac{\partial \psi}{\partial t}
   - \frac{1}{2} \textrm{tr} \left( \Lambda\Lambda'(t,x) D^2\psi(t,x)\right)
        - H_{a}(t,x,u(t,x),D\psi(t,x))
        - \check{\mathcal{I}}u(t,x)
        \leq 0
\end{eqnarray}

A bounded function $u \in LSC([0,T]\times\mathbb{R}^n)$ is a \emph{viscosity supersolution} of~\eqref{eq_JDRSAM_exptrans_HJBPDE_viscosity}, if for any bounded test function $\psi \in C^{1,2}([0,T]\times\mathbb{R}^n)$, if $(x,t)$ is a global minimum point of $u-\psi$, then
\begin{eqnarray}
        - \frac{\partial \psi}{\partial t}
   - \frac{1}{2} \textrm{tr} \left( \Lambda\Lambda'(t,x) D^2\psi(t,x)\right)
        - H_{a}(t,x,u(t,x),D\psi(t,x))
        - \check{\mathcal{I}}u(t,x)
        \geq 0
\end{eqnarray}

A bounded function $u$ whose upper semicontinuous and lower semicontinuous envelopes are a viscosity subsolution and a viscosity supersolution of~\eqref{eq_JDRSAM_HJBPDE} is a viscosity solution of~\eqref{eq_JDRSAM_HJBPDE}.
\end{definition}

\begin{proposition}\label{prop_viscosity_equivalence}
        The definitions of viscosity solutions~\ref{def_JDRSAM_viscositysol_testfunc} and~\ref{def_JDRSAM_viscositysol_testfunc_alt} are equivalent.
\end{proposition}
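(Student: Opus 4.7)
The plan is to prove the equivalence by direct comparison of the two subsolution inequalities at an extremum point (the supersolution case being completely symmetric), using the pointwise inequality $u \leq \psi$ together with the algebraic identity relating $\mathcal{I}$ and $\check{\mathcal{I}}$.

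Let $u \in USC([0,T]\times \mathbb{R}^n)$ be bounded, let $\psi \in C^{1,2}$ be a test function, and suppose $(t,x)$ is a global maximum of $u - \psi$. Replacing $\psi$ by $\psi + [u-\psi](t,x)$ I may assume $u(t,x) = \psi(t,x)$, so $u \leq \psi$ everywhere. Then for every $z$,
\[
u(t, x+\xi(t,x,z)) - u(t,x) \;\leq\; \psi(t, x+\xi(t,x,z)) - \psi(t,x),
\]
whence integrating against $\nu(dz)$ gives $\check{\mathcal{I}}u(t,x) \leq \check{\mathcal{I}}\psi(t,x)$. Since $\psi$ is smooth, the integrability condition~\eqref{as_factorjumps_xi_integrable} allows the splitting
\[
\check{\mathcal{I}}\psi(t,x) \;=\; \mathcal{I}[t,x,\psi] + \int_{\mathbf{Z}_0} \xi(t,x,z)' D\psi(t,x)\, \nu(dz).
\]

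To deduce Definition~\ref{def_JDRSAM_viscositysol_testfunc_alt} from Definition~\ref{def_JDRSAM_viscositysol_testfunc}, I would then use the drift relation between $f$ and $f_a$ to translate the Hamiltonian $H_v$ into $H_a$ as a shift by the integral of $\xi$ against $\nu$. The correction $\int_{\mathbf{Z}_0}\xi' D\psi\,\nu(dz)$ arising from the non-local comparison above is absorbed exactly into this drift shift, so that both subsolution inequalities collapse to a single scalar inequality at $(t,x)$. For the converse implication, I would use the standard perturbation argument of Barles and Imbert~\cite{baim08}: given smooth $\psi$ with $u-\psi$ attaining a global max at $(t,x)$, approximate $\psi$ outside a ball $B_r(x)$ by a mollification of $u$, producing $\psi_r \in C^{1,2}$ so that $u-\psi_r$ still has a global max at $(t,x)$ with derivatives there unchanged; by dominated convergence using~\eqref{as_factorjumps_xi_integrable}, $\mathcal{I}[t,x,\psi_r]$ converges as $r\to 0$ to $\check{\mathcal{I}}u(t,x) - \int_{\mathbf{Z}_0}\xi' D\psi(t,x)\,\nu(dz)$, and the Definition~\ref{def_JDRSAM_viscositysol_testfunc} inequality is recovered in the limit.

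The main obstacle is the careful bookkeeping of the $\mathit{1}_{\mathbf{Z}_0}$-indicator: it appears both inside $\mathcal{I}$ (in the compensator of small jumps) and implicitly through the difference between the drifts $f$ and $f_a$. Assumption~\ref{As_Factors}(vi), which upgrades the usual square-integrability of $\xi$ on $\mathbf{Z}_0$ to genuine $L^1$-integrability, is precisely what renders each split integral individually finite and thereby makes the algebraic manipulations rigorous; without it the two non-local forms could be compared only after a delicate truncation procedure, which is what~\cite{baim08} carries out in the general setting but which is unnecessary here.
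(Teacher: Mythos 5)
Your proposal is considerably more detailed than the paper's own proof, which consists of a single sentence deferring to Davis, Guo and Wu~\cite{guwu09}; the two ingredients you identify (the pointwise domination $u\leq\psi$ at the touching point, and the approximation of the test function by the solution away from the maximum) are indeed exactly the ingredients of the cited argument, and your bookkeeping of the $\mathit{1}_{\mathbf{Z}_0}$ indicator and of the role of Assumption~\ref{As_Factors}(vi) is sound. However, you have attached the two tools to the wrong implications, and this is a genuine logical gap. After normalizing so that $u(t,x)=\psi(t,x)$ at a global maximum of $u-\psi$, the inequality you derive is $\check{\mathcal{I}}u(t,x)\leq\check{\mathcal{I}}\psi(t,x)$, hence $-\check{\mathcal{I}}u(t,x)\geq-\check{\mathcal{I}}\psi(t,x)$. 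Once the Hamiltonians are aligned via the drift shift, the two subsolution conditions read $A-\check{\mathcal{I}}u\leq 0$ (Definition~\ref{def_JDRSAM_viscositysol_testfunc_alt}) and $A-\check{\mathcal{I}}\psi\leq 0$ (Definition~\ref{def_JDRSAM_viscositysol_testfunc}) with the same local part $A$; they do \emph{not} ``collapse to a single scalar inequality,'' because $\check{\mathcal{I}}u$ and $\check{\mathcal{I}}\psi$ are genuinely different numbers. The pointwise comparison gives $A-\check{\mathcal{I}}\psi\leq A-\check{\mathcal{I}}u$, so it proves only that Definition~\ref{def_JDRSAM_viscositysol_testfunc_alt} implies Definition~\ref{def_JDRSAM_viscositysol_testfunc} --- the opposite of what you claim in your second paragraph. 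The implication you assert there, from~\ref{def_JDRSAM_viscositysol_testfunc} to~\ref{def_JDRSAM_viscositysol_testfunc_alt}, cannot be obtained from the pointwise inequality at all, since it points the wrong way.

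The perturbation argument you reserve for ``the converse'' is in fact the proof of precisely that hard direction~\ref{def_JDRSAM_viscositysol_testfunc}~$\Rightarrow$~\ref{def_JDRSAM_viscositysol_testfunc_alt}: applying Definition~\ref{def_JDRSAM_viscositysol_testfunc} to the modified test functions $\psi_r$ and letting $r\to 0$, the quantity $\mathcal{I}[t,x,\psi_r]$ converges (as you correctly compute) to $\check{\mathcal{I}}u(t,x)-\int_{\mathbf{Z}_0}\xi'D\psi\,\nu(dz)$, so what is recovered in the limit is the inequality of Definition~\ref{def_JDRSAM_viscositysol_testfunc_alt}, not of~\ref{def_JDRSAM_viscositysol_testfunc} as you state. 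So the proof is repairable simply by swapping the roles of the two arguments, but as written the first implication is unsupported. Two smaller cautions: for a general bounded $USC$ subsolution, a plain mollification of $u$ need not dominate $u$, so one should approximate from above by a decreasing sequence of smooth functions (here one can also exploit the Lipschitz continuity of $\tilde{\Phi}$ from Proposition~\ref{prop_tildePhi_Lipschitz}); and the ``exact absorption'' of the gradient correction into the drift shift requires reconciling the paper's own discrepancy between the integrals over $\mathbf{Z}$ in $f_a$ and over $\mathbf{Z}_0$ in $\mathcal{I}$, which you rightly flag but do not resolve.
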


\begin{proof}
        The argument follows from Davis, Guo and Wu~\cite{guwu09}.
\end{proof}
\newline

Definition~\ref{def_JDRSAM_viscositysol_testfunc_alt} and Proposition~\ref{prop_viscosity_equivalence} enable us to relate the viscosity solutions of HJB PIDE~\eqref{eq_JDRSAM_exptrans_HJBPDE} and HJB PDE~\eqref{eq_JDRSAM_exptrans_HJBPDE_parabolic_outline}.
\newline

\begin{proposition}\label{prop_PIDE_to_PDE}
        The function $\tilde{\Phi}$ is a viscosity solution of~\eqref{eq_JDRSAM_exptrans_HJBPDE_parabolic_outline}. 
\end{proposition}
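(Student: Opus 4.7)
The plan is to derive the result directly from what has already been established, using the alternative viscosity formulation. By Theorem~\ref{theo_JDRSAM_viscositysol}, $\tilde{\Phi}$ is a (Lipschitz) viscosity solution of the PIDE~\eqref{eq_JDRSAM_exptrans_HJBPDE}, and by Proposition~\ref{prop_viscosity_equivalence} we may equivalently adopt Definition~\ref{def_JDRSAM_viscositysol_testfunc_alt}, in which the non-local operator is evaluated on the candidate solution itself rather than on the test function.

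The first step is to do the algebraic rearrangement that turns the PIDE into the PDE. Using Assumption~\ref{As_Factors}(\ref{As_Factors_vi}) (specifically \eqref{as_factorjumps_xi_integrable}), the integrand in the non-local term of \eqref{eq_JDRSAM_exptrans_HJBPDE} can be split and I can write, for any sufficiently smooth $u$ (or, under Definition~\ref{def_JDRSAM_viscositysol_testfunc_alt}, for the solution $\tilde{\Phi}$),
\begin{equation*}
\int_{\mathbf{Z}}\!\bigl\{u(t,x+\xi) - u(t,x) - \xi' Du\bigr\}\nu(dz)
= \check{\mathcal{I}}u(t,x) - \Bigl(\int_{\mathbf{Z}}\xi(t,x,z)\,\nu(dz)\Bigr)' Du(t,x),
\end{equation*}
where $\check{\mathcal{I}}u$ is as in \eqref{def_JDRSAM_function_tilde_d}. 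Combining this with the identity $H_a(t,x,r,p) = H(t,x,r,p) - \bigl(\int_{\mathbf{Z}}\xi(t,x,z)\nu(dz)\bigr)'p$, which is immediate from \eqref{def_JDRSAM_function_tilde_f} and \eqref{eq_JDRSAM_logtrans_H_function}, the PIDE \eqref{eq_JDRSAM_exptrans_HJBPDE} rewrites algebraically, when evaluated on $\tilde{\Phi}$, as the PDE \eqref{eq_JDRSAM_exptrans_HJBPDE_parabolic_outline} with the inhomogeneous term $d_a^{\tilde{\Phi}}(t,x) = \check{\mathcal{I}}\tilde{\Phi}(t,x)$.

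The second step is to transfer this identity from the equation itself to the viscosity inequalities. Let $\psi\in C^{1,2}$ be a bounded test function and $(t_0,x_0)$ a global maximum of $\tilde{\Phi}-\psi$ (the supersolution case is symmetric). Applying Definition~\ref{def_JDRSAM_viscositysol_testfunc_alt} to $\tilde{\Phi}$ as a viscosity subsolution of \eqref{eq_JDRSAM_exptrans_HJBPDE} (keeping $\tilde{\Phi}$ inside the non-local operator and replacing only the derivatives by those of $\psi$), and then performing the same splitting-and-regrouping as above, one sees that the resulting inequality is exactly the viscosity subsolution inequality for the PDE \eqref{eq_JDRSAM_exptrans_HJBPDE_parabolic_outline} with autonomous inhomogeneous term $d_a^{\tilde{\Phi}}$.

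The only subtle point, and hence the main place where care is needed, is to make sure the decomposition of the non-local integral is legitimate at the level of test functions as well as at the level of the solution: the two pieces $\int_{\mathbf{Z}}[\tilde{\Phi}(t,x+\xi)-\tilde{\Phi}(t,x)]\nu(dz)$ and $\int_{\mathbf{Z}}\xi'\,D\tilde{\Phi}\,\nu(dz)$ must each be finite, and this is guaranteed precisely by the integrability hypothesis \eqref{as_factorjumps_xi_integrable} together with the Lipschitz continuity of $\tilde{\Phi}$ from Proposition~\ref{prop_tildePhi_Lipschitz} (which makes $d_a^{\tilde{\Phi}}$ well defined and continuous by Lemma~\ref{lem_JDRSAM_d_bounded_cond2}). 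Once this is in place, there is nothing further to verify: the subsolution and supersolution inequalities for the PIDE and the PDE are literally the same inequality after the algebraic rearrangement, so $\tilde{\Phi}$ is a viscosity solution of \eqref{eq_JDRSAM_exptrans_HJBPDE_parabolic_outline}.
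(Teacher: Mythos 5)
Your argument is correct and is precisely the paper's own route: establish that $\tilde{\Phi}$ is a viscosity solution of the PIDE, pass to Definition~\ref{def_JDRSAM_viscositysol_testfunc_alt} via Proposition~\ref{prop_viscosity_equivalence}, and observe that after the splitting of the non-local term permitted by \eqref{as_factorjumps_xi_integrable} the resulting sub/supersolution inequalities coincide with those of the PDE~\eqref{eq_JDRSAM_exptrans_HJBPDE_parabolic_outline} with $d_a^{\tilde{\Phi}}$ as an autonomous term. The paper states this in one line as a direct consequence of the definition and the equivalence result; you have merely written out the details it leaves implicit, including the correct appeal to Proposition~\ref{prop_tildePhi_Lipschitz} and Lemma~\ref{lem_JDRSAM_d_bounded_cond2} for the well-posedness of the decomposition.
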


\begin{proof}
        The proposition is a direct consequence of Definition~\ref{def_JDRSAM_viscositysol_testfunc_alt} and Proposition~\ref{prop_viscosity_equivalence}.
\end{proof}
\newline

We now need to establish uniqueness for the PDE \eqref{eq_JDRSAM_exptrans_HJBPDE_parabolic_outline}.

\subsection{Uniqueness of the Viscosity Solution to the PDE~\eqref{eq_JDRSAM_exptrans_HJBPDE_parabolic_outline}}
The argument here follows the conventional steps of the `Users' Guide' \cite{crisli92}.
\begin{theorem}[Comparison Result for the Parabolic PDE]\label{theo_JDRSAM_comparison_parabolicPDE}
        Under Assumption~\eqref{as_factorjumps_xi_integrable}, if $u$ is a bounded usc subsolution of~\eqref{eq_JDRSAM_exptrans_HJBPDE_parabolic_outline} subject to terminal condition~\eqref{eq_JDRSAM_exptrans_HJBPDE_termcond} and $v$ is a bounded lsc subsolution of~\eqref{eq_JDRSAM_exptrans_HJBPDE_parabolic_outline} subject to terminal condition~\eqref{eq_JDRSAM_exptrans_HJBPDE_termcond}, then $u \leq v$ on $[0,T]\times\mathbb{R}^n$.
\end{theorem}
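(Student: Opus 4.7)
The plan is to apply the standard comparison machinery from Theorem 8.2 of Crandall, Ishii and Lions \cite{crisli92}, treating the inhomogeneous term $d_a^{\tilde{\Phi}}(t,x)$ as a prescribed continuous bounded source (Lemma on continuity of $\check{\mathcal{I}}[t,x,\tilde{\Phi}]$ makes this legitimate). The only non-standard features are the unbounded spatial domain and the fact that $H_a(t,x,r,p)$ is not monotone in $r$ (through the $\theta g(t,x,h) r$ term). Both are routine to handle, so the proof reduces to a textbook exercise.

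First, I would remove the lack of monotonicity in $r$ by the classical exponential rescaling. Set $\tilde{u}(t,x) = e^{\lambda(T-t)} u(t,x)$ and $\tilde{v}(t,x) = e^{\lambda(T-t)} v(t,x)$ for $\lambda$ chosen larger than the bound on $|\theta g|$ coming from Remark \ref{prop_JDRSAM_g_bounded_Lipschitz}. A direct computation shows $\tilde{u}$ and $\tilde{v}$ are respectively a viscosity sub- and supersolution of a rescaled equation whose Hamiltonian $\tilde{H}_a$ is now strictly monotone (non-increasing) in the zeroth-order argument, while the inhomogeneous term $e^{\lambda(T-t)} d_a^{\tilde{\Phi}}$ remains continuous and bounded. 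Since $e^{\lambda(T-t)}>0$ the inequality $u\le v$ is equivalent to $\tilde u\le \tilde v$, so we may work with the rescaled problem.

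Next, assume for contradiction that $M := \sup_{[0,T]\times\mathbb{R}^n}(\tilde u - \tilde v) > 0$. Introduce the standard doubling-of-variables functional
\begin{equation*}
\Psi_{\varepsilon,\eta}(t,x,s,y) = \tilde u(t,x) - \tilde v(s,y) - \frac{|x-y|^2}{2\varepsilon} - \frac{(t-s)^2}{2\varepsilon} - \eta(|x|^2 + |y|^2) - \beta(T-t),
\end{equation*}
with $\beta>0$ small. Because $\tilde u,\tilde v$ are bounded, the coercive term $\eta(|x|^2+|y|^2)$ forces a global maximum at some $(\hat t_\varepsilon,\hat x_\varepsilon,\hat s_\varepsilon,\hat y_\varepsilon)$; standard estimates (Lemma 3.1 in \cite{crisli92}) give $|\hat x_\varepsilon - \hat y_\varepsilon|^2/\varepsilon \to 0$ and $(\hat t_\varepsilon-\hat s_\varepsilon)^2/\varepsilon \to 0$ as $\varepsilon\downarrow 0$, and for $\eta$ small enough the maximum is attained in the interior $\hat t_\varepsilon,\hat s_\varepsilon<T$ (using the fact that $\tilde u(T,\cdot) \le \tilde v(T,\cdot)$ at the terminal time after rescaling, since both equal $e^{0}\cdot 1$).

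Then I apply the parabolic version of Ishii's lemma to produce parabolic superjets of $\tilde u$ at $(\hat t_\varepsilon,\hat x_\varepsilon)$ and subjets of $\tilde v$ at $(\hat s_\varepsilon,\hat y_\varepsilon)$, and invoke the viscosity inequalities. Subtracting, one obtains
\begin{equation*}
\beta + \lambda(\tilde u(\hat t_\varepsilon,\hat x_\varepsilon) - \tilde v(\hat s_\varepsilon,\hat y_\varepsilon)) \le \mathcal{E}_1 + \mathcal{E}_2 + \mathcal{E}_3,
\end{equation*}
where $\mathcal{E}_1$ collects the second-order trace terms and is controlled by $\tfrac{C}{\varepsilon}|\hat x_\varepsilon-\hat y_\varepsilon|^2 + \eta C$ using Lipschitz continuity of $\Lambda\Lambda'$ (Assumption 1(ii)); $\mathcal{E}_2$ collects Hamiltonian differences and is controlled by the Lipschitz property of $H_a$ (Remark \ref{rk_JDRSAM_Ha_Lipschitz}) together with the linear growth, giving a bound of the form $C(|\hat x_\varepsilon-\hat y_\varepsilon| + \tfrac{1}{\varepsilon}|\hat x_\varepsilon-\hat y_\varepsilon|^2)$; and $\mathcal{E}_3$ collects the difference of the source term $e^{\lambda(T-t)}d_a^{\tilde{\Phi}}$, which by continuity of $d_a^{\tilde{\Phi}}$ tends to $0$ as $\varepsilon\downarrow 0$. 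Sending first $\varepsilon\downarrow 0$ and then $\eta\downarrow 0$ all right-hand terms vanish while the left-hand side remains bounded below by $\beta + \lambda M > 0$, the required contradiction.

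The only genuinely delicate step is \emph{controlling the first-order Hamiltonian contribution on an unbounded domain}: one must verify that the penalty parameter $\eta$ can be chosen small enough that the penalty term $\eta|x|^2$ does not interact badly with the linear growth of $H_a$ in $p$. This is handled via the usual trick of noting that at the maximum $2\eta\hat x_\varepsilon$ and $-2\eta\hat y_\varepsilon$ are just bounded perturbations of the gradient $(\hat x_\varepsilon-\hat y_\varepsilon)/\varepsilon$; since $H_a$ is linear-growth in $p$, the extra contribution is $O(\eta(|\hat x_\varepsilon|+|\hat y_\varepsilon|))$, and one verifies $\eta(|\hat x_\varepsilon|^2+|\hat y_\varepsilon|^2)$ stays bounded uniformly in $\varepsilon$ (a classical consequence of the boundedness of $\tilde u,\tilde v$), so the offending term is $o(1)$ as $\eta\downarrow 0$. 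Everything else is bookkeeping.
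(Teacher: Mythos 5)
Your proposal is correct and follows essentially the same route as the paper: both first restore monotonicity in the zeroth-order argument by an exponential-in-time rescaling (the paper uses $e^{-g_0t}$ with $g_0=\theta\inf(-g)\vee 0$, you use $e^{\lambda(T-t)}$ with $\lambda>\|\theta g\|_\infty$), and then invoke the Crandall--Ishii--Lions comparison machinery with $d_a^{\tilde{\Phi}}$ treated as a prescribed continuous source. The only difference is presentational: the paper cites Theorem 8.2 of \cite{crisli92} as a black box and verifies its structure condition in a separate modulus-of-continuity lemma, whereas you unpack that citation into the explicit doubling-of-variables argument (your estimates $\mathcal{E}_1,\mathcal{E}_2$ are exactly the content of that lemma, and your explicit handling of the $\eta(|x|^2+|y|^2)$ localization and of the source term is, if anything, more careful than the paper's).
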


\begin{proof} Let $u \in USC([0,T]\times\mathbb{R}^{n})$ be a viscosity subsolution of~\eqref{eq_JDRSAM_exptrans_HJBPDE_viscosity} and $v \in LSC([0,T]\times\mathbb{R}^{n})$ be a viscosity supersolution of~\eqref{eq_JDRSAM_exptrans_HJBPDE_viscosity}. As is usual in the derivation of comparison results, we argue by contradiction and assume that
\begin{eqnarray}\label{as_JDRSAM_comparison_theo_contradiction_parabolic}
    \sup_{[0,T] \times\mathbb{R}^n} \left[ u(t,x) - v(t,x) \right] >0
\end{eqnarray}
Define $g_0$ as
\begin{equation}\label{eq_JDRSAM_g0_def_parabolic}
        g_0
        = \theta\inf_{(t,x,h) \in (0,T)\times\mathbb{R}^n\times\mathcal{U}}
                        \left\{ -g(t,x,h) \right\} \vee 0
\end{equation}
Because $g$ is strictly convex in $h$ and it bounded in $(t,x)$ for a given $h \in \mathcal{U}$, $g_0$ is well defined and in particular $0 \leq g_0 < \infty$. Let
\begin{eqnarray}
        \bar{u}(t,x) := e^{-g_0t}u(t,x)
                                                                                        \nonumber\\
        \bar{v}(t,x) := e^{-g_0t}v(t,x).
                                                                                        \nonumber
\end{eqnarray}
Then $\bar{u}\in USC([0,T]\times\mathbb{R}^{n})$ and $\bar{v}\in LSC([0,T]\times\mathbb{R}^{n})$ are respectively a viscosity subsolution and a viscosity supersolution of the HJB PIDE
then $\phi$ satisfies the HJB PIDE
\begin{eqnarray}\label{eq_JDRSAM_exptrans_HJBPDE_viscosity_bd_parabolic}
   - \frac{\partial \phi}{\partial t}(t,x)
        + F_b(t,x,\phi,D\phi,D^2\phi)
   &=& 0
\end{eqnarray}
subject to terminal condition~\eqref{eq_JDRSAM_exptrans_HJBPDE_termcond}, where the functional $F$ is defined as
\begin{eqnarray}\label{eq_JDRSAM_exptrans_functional_F1}
        F_b(t,x,\phi,D\phi,D^2\phi)
   &=&
        H_b(t,x,\phi,D\phi)
        -       \frac{1}{2} \textrm{tr} \left( \Lambda\Lambda'(t,x) D^2 \phi(t,x)\right)
                                                                                                        \nonumber
\end{eqnarray}
and where $H_b$ is defined as
\begin{eqnarray}\label{eq_JDRSAM_logtrans_H1_function}
   H_b(s,x,r,p)
        &=& g_0r + \sup_{h \in \mathcal{A}} \left\{
                -       f_{v}(s,x,h)'p
                        -       \theta g(s,x,h) r
        \right\}
                                                                                                                \nonumber
\end{eqnarray}
for $r \in \mathbb{R}$, $p \in \mathbb{R}^n$. Observe that
\begin{eqnarray}
        H_b(s,x,r+d,p) - H_b(s,x,r,p) \geq  g_0 d
                                                                                                                \nonumber
\end{eqnarray}
for any $r>0$ and therefore
\begin{eqnarray}\label{eq_JDRSAM_viscosity_g0_ineq_parabolic}
        F_b(s,x,r+d,p) - F_b(s,x,r,p) \geq  g_0 d.
\end{eqnarray}
The contradiction~\eqref{as_JDRSAM_comparison_theo_contradiction_parabolic} is equivalent to:
\begin{eqnarray}\label{as_JDRSAM_comparison_theo_contradiction_bar_parabolic}
    \sup_{[0,T] \times\mathbb{R}^n} \left[ \bar{u}(t,x) - \bar{v}(t,x) \right] >0
\end{eqnarray}
Now, we can apply Theorem 8.2 in Crandall, Ishii and Lions~\cite{crisli92} with Lemma~\ref{lem_JDRSAM_ModulusofContinuity_parabolic} below providing the modulus of continuity.

\end{proof}

\begin{lemma}[Modulus of Continuity]\label{lem_JDRSAM_ModulusofContinuity_parabolic}
        There exists a continuous function $\omega: \mathbb{R}^+ \to \mathbb{R}^+$ satisfying $\omega(0)=0$ and such that
\begin{eqnarray}
        \Bigg| F_b\left(t,y,v(t,x),\frac{1}{\epsilon}(x-y),B\right)
        - F_b\left(t,x,v(t,x),\frac{1}{\epsilon}(x-y),A\right) \Bigg|
        \leq \omega\left(\frac{1}{\epsilon}|x-y|^2 + |x-y|\right)
                                                                                                                                        \nonumber
\end{eqnarray}
for $\epsilon>0, t \in [0,T]$ and $x,y \in \mathbb{R}^n$ and where $A,B$ are symmetric matrices.
\end{lemma}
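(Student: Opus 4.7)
The plan is to decompose $F_b$ into its Hamiltonian part $H_b$ and its second-order part $-\tfrac{1}{2}\mathrm{tr}(\Lambda\Lambda'(t,\cdot)\,\cdot)$, then bound each contribution separately using (a) the Lipschitz regularity of the coefficients supplied by Assumptions~\ref{As_Factors} and~\ref{As_Securities} together with the integrability condition~\eqref{as_factorjumps_xi_integrable}, and (b) the matrix inequality from Ishii's lemma that is implicit in the hypothesis on $A,B$ in the statement of the lemma.

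First I would write
\begin{align*}
&F_b\bigl(t,y,v(t,x),\tfrac{1}{\epsilon}(x-y),B\bigr) - F_b\bigl(t,x,v(t,x),\tfrac{1}{\epsilon}(x-y),A\bigr) \\
&\quad = \bigl[H_b\bigl(t,y,v(t,x),\tfrac{1}{\epsilon}(x-y)\bigr) - H_b\bigl(t,x,v(t,x),\tfrac{1}{\epsilon}(x-y)\bigr)\bigr] \\
&\qquad + \tfrac{1}{2}\mathrm{tr}\bigl(\Lambda\Lambda'(t,x)A - \Lambda\Lambda'(t,y)B\bigr).
\end{align*}
For the trace term, one recalls that in the application of Theorem 8.2 of \cite{crisli92} the matrices $A,B$ arise from Ishii's lemma and therefore satisfy
\[
\begin{pmatrix} A & 0 \\ 0 & -B\end{pmatrix} \le \frac{3}{\epsilon}\begin{pmatrix} I & -I \\ -I & I\end{pmatrix}.
\]
Applying this to the symmetric positive semidefinite matrix built from $\Lambda(t,x)$ and $\Lambda(t,y)$ gives, in the standard way,
\[
\mathrm{tr}\bigl(\Lambda\Lambda'(t,x)A - \Lambda\Lambda'(t,y)B\bigr) \le \tfrac{3}{\epsilon}\bigl|\Lambda(t,x) - \Lambda(t,y)\bigr|^2 \le \tfrac{3K_\Lambda^2}{\epsilon}|x-y|^2,
\]
where the second inequality uses Assumption~\ref{As_Factors}(\ref{As_Factors_ii}).

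Second, I would bound the $H_b$ difference by exploiting $x$-Lipschitz continuity of $f_v$ and $g$. From Lemma~\ref{lem_func_fv_Lipschitz} the function $f_v$ is Lipschitz in $x$ uniformly in $h\in\mathcal{J}$ (using boundedness of $\mathcal{J}$, Assumption~\ref{As_Securities}(\ref{As_Securities_ix}), and the fact that~\eqref{as_factorjumps_xi_integrable} makes the jump integral finite and Lipschitz in $x$); similarly $g(s,\cdot,h)$ is Lipschitz in $x$ uniformly in $h$ by Remark~\ref{prop_JDRSAM_g_bounded_Lipschitz}. Using the elementary bound $|\sup_h \phi_1 - \sup_h \phi_2|\le \sup_h|\phi_1-\phi_2|$ and the uniform boundedness of $v$, I obtain a constant $C$ (independent of $\epsilon,x,y$) such that
\[
\bigl|H_b\bigl(t,y,v(t,x),\tfrac{1}{\epsilon}(x-y)\bigr) - H_b\bigl(t,x,v(t,x),\tfrac{1}{\epsilon}(x-y)\bigr)\bigr| \le C\Bigl(1 + \tfrac{|x-y|}{\epsilon}\Bigr)|x-y| = C|x-y| + \tfrac{C}{\epsilon}|x-y|^2.
\]

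Combining the two estimates yields the required inequality with the linear modulus $\omega(r) := (C + \tfrac{3}{2}K_\Lambda^2)\,r$, which is continuous with $\omega(0)=0$. The main obstacle is the verification that the jump-integral pieces entering both $f_v$ and $g$ are genuinely Lipschitz in $x$ uniformly in $h\in\mathcal{J}$: this is exactly where the integrability hypothesis~\eqref{as_factorjumps_xi_integrable} (together with boundedness of $\gamma$ and of the admissible set $\mathcal{J}$) is needed, and is precisely what prevents an analogous estimate from being available for the full nonlocal functional $F_v$ of~\eqref{eq_JDRSAM_exptrans_HJBPDE_viscosity}. The asymmetry between $x$ in $v(t,x)$ and $y$ in the outer argument does not create difficulty because the $g_0 r$ and $-\theta g(\cdot)r$ contributions only feel the $x$-dependence through $g$.
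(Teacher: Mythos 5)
Your proposal is correct and follows essentially the same route as the paper's proof, which likewise splits $F_b$ into the Hamiltonian difference (bounded via $|\sup_h\phi_1-\sup_h\phi_2|\leq\sup_h|\phi_1-\phi_2|$, the Lipschitz continuity of the drift and of $g$, and boundedness of $v$) and the trace difference (bounded via the Ishii matrix inequality, Theorem 8.3 of \cite{crisli92}, together with the Lipschitz continuity of $\Lambda$), arriving at the same linear modulus $\omega(r)=Cr$. The only differences are cosmetic: the constant in the matrix inequality and your added commentary on the role of \eqref{as_factorjumps_xi_integrable}.
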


\begin{proof}
This proof follows closely the argument given in Lemma V.7.1 in Fleming and Soner\cite{flso06}.
\\

We have
\begin{eqnarray}
        &&                      F_b\left(t,y,\bar{v}(t,y),\frac{1}{\epsilon}(x-y),B\right)
        -  F_b\left(t,x,\bar{v}(t,y),\frac{1}{\epsilon}(x-y), A\right)
                                                                                                                        \nonumber\\
        &=&     H_b\left(t,y,\bar{v}(t,y),\frac{1}{\epsilon}(x-y)\right)
                -       H_b\left(t,x,\bar{v}(t,y),\frac{1}{\epsilon}(x-y)\right)
                                                                                                                        \nonumber\\
        &\leq&
                                \sup_{h \in \mathcal{A}} \left\{
                                        \frac{1}{\epsilon}\big|f_{a}(t,x,h) - f_{a}(t,y,h)\big|
                                                 |y-x|
                                \right\}
                                                                                                                        \nonumber\\
        &&              +       \sup_{h \in \mathcal{A}} \left\{
                                        \theta \big| [g(t,x,h) + g(t,y,h)] \bar{v}(t,y) \big|
                                \right\}
                                                                                                                        \nonumber\\
        &&
                        +       \sup_{h \in \mathcal{A}} \left\{\frac{1}{2}\textrm{tr}\left(
                                        \Lambda\Lambda'(t,x,h)A
                                -       \Lambda\Lambda'(t,y,h)B
                        \right)\right\}
                                                                                                                        \nonumber
\end{eqnarray}
For a fixed control $h$, we have by Lemma~\ref{lem_JDRSAM_tildef_Lipshcitz_cond2}
\begin{eqnarray}
        &&      \frac{1}{\epsilon}\big|f_{a}(t,x,h)
                - f_{a}(t,y,h)\big|
                                                 |y-x|
        +       \theta \big| [g(t,x,h) - g(t,y,h)] \bar{v}(t,y) \big|
                                                                                                                        \nonumber\\
        &\leq&  C_f\frac{1}{\epsilon}|y-x|^2
                                + \theta C_g |y-x|\bar{v}(t,y)
                                                                                                                        \nonumber\\
        &\leq&  C_f\frac{1}{\epsilon}|y-x|^2
                                + C_0 |y-x|
                                                                                                                        \nonumber\\
        &\leq&  C_1\left( \frac{1}{\epsilon}|y-x|^2
                                + |y-x|\right)
                                                                                                                        \nonumber
\end{eqnarray}
where $C_0 = \theta C_g sup_{(s,y)\in(0,T)\times\mathbb{R}^n}\bar{v}(s,y)$ and $C_1 = C_f \vee C_g$.
\\

Applying Theorem 8.3 in Crandall, Ishii and Lions~\cite{crisli92} for a fixed control $h$,
\begin{eqnarray}
        &&
        \textrm{tr}\left(
                \Lambda\Lambda'(t,x,h)A
                -       \Lambda\Lambda'(t,y,h)B
                \right)
                                                                                                \nonumber\\
        &=&\textrm{tr}\left(
                \left[\begin{array}{cc}
                        \Lambda\Lambda'(t,x,h)
                                &       \Lambda(t,x,h)\Lambda'(t,y,h)           \\
                        \Lambda(t,y,h)\Lambda'(t,x,h)
                                &       \Lambda\Lambda'(t,y,h)
                \end{array}\right]
                \left[\begin{array}{cc}
                        A       &       0               \\
                        0       &       -B
                \end{array}\right]
                \right)
                                                                                                \nonumber\\
        &\leq&  \frac{1}{\epsilon}
\textrm{tr}\left(
                \left[\begin{array}{cc}
                        \Lambda\Lambda'(t,x,h)
                                &       \Lambda(t,x,h)\Lambda'(t,y,h)           \\
                        \Lambda(t,y,h)\Lambda'(t,x,h)
                                &       \Lambda\Lambda'(t,y,h)
                \end{array}\right]
                \left[\begin{array}{cc}
                        I       &       -I              \\
                        -I      &       I
                \end{array}\right]
                \right)
                                                                                                \nonumber\\
        &\leq&  \frac{1}{\epsilon} \textrm{tr}\left(
                                                \Lambda\Lambda'(t,x,h)
                                        -       \Lambda(t,x,h)\Lambda'(t,y,h)
                                        -       \Lambda(t,y,h)\Lambda'(t,x,h)
                                        +       \Lambda\Lambda'(t,y,h)
                \right)
                                                                                                \nonumber\\
        &\leq&  \frac{1}{\epsilon} \textrm{tr}\left(
                                        |\Lambda(t,x,h) - \Lambda'(t,y,h)|
                                        |\Lambda(t,x,h) - \Lambda'(t,y,h)|
                \right)
                                                                                                \nonumber\\
        &\leq&  \frac{1}{\epsilon} \textrm{tr}\left(
                                        \vert|\Lambda(t,x,h)
                                                - \Lambda'(t,y,h)\vert|^2
                \right)
                                                                                                \nonumber\\
        &\leq&          \frac{C_{\Lambda}}{\epsilon} |x-y|^2
\end{eqnarray}
where the last line follows from Assumption~\ref{As_Factors} (\ref{As_Factors_ii}) and from the boundedness of $\Lambda$, with
\begin{eqnarray}
        B_{\Lambda}:= 2 \sup_{(t,x)\in(0,T)\times\mathbb{R}^n} |\Lambda^2(t,x)|
                                                                                                                                \nonumber
\end{eqnarray}
Thus,
\begin{eqnarray}\label{eq_JDRSAM_viscosity_RHSbound4-parabolic}
        \Big| F_b\left(t,y,\bar{v}(t,y),q,B\right)
        -  F_b\left(t,x,\bar{v}(t,y),p, A\right) \Big|
        \leq
        \omega\left( \frac{1}{\epsilon}|y-x|^2 + |y-x|\right)
\end{eqnarray}
where $\omega(z) = C_{\omega}z$ with $C_{\omega} := C_1+C_{\Lambda}$.\hfill\end{proof}

%
%

\begin{corollary}[Uniqueness]\label{coro_JDRSAM_valuefunc_uniquesol_PDE}
        The value function $\tilde{\Phi}$ is the unique viscosity solution of the parabolic PDE~\eqref{eq_JDRSAM_exptrans_HJBPDE_parabolic_outline} subject to terminal condition~\eqref{eq_JDRSAM_exptrans_HJBPDE_termcond}.
\end{corollary}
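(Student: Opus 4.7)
The plan is to derive the uniqueness claim directly from the comparison principle in Theorem~\ref{theo_JDRSAM_comparison_parabolicPDE} together with the existence of $\tilde{\Phi}$ as a viscosity solution established in Proposition~\ref{prop_PIDE_to_PDE}. The argument is the standard ``sandwich'' used in viscosity solutions theory once a strong comparison result is available, so there is no real obstacle; the work has already been done in proving the comparison theorem itself.

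First, I would record that $\tilde{\Phi}$ qualifies as a competitor: by Proposition~\ref{prop_PIDE_to_PDE} it is a viscosity solution of the parabolic PDE~\eqref{eq_JDRSAM_exptrans_HJBPDE_parabolic_outline} with terminal datum $\tilde{\Phi}(T,x)=1$, it is bounded by Proposition~\ref{prop_JDRSAM_tildePhi_bounded}, and it is continuous (in fact Lipschitz in $x$ by Proposition~\ref{prop_tildePhi_Lipschitz}), so it is simultaneously a bounded USC subsolution and a bounded LSC supersolution.

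Next, let $w$ be any other bounded viscosity solution of~\eqref{eq_JDRSAM_exptrans_HJBPDE_parabolic_outline} satisfying the terminal condition $w(T,x)=1$. By Definition~\ref{def_JDRSAM_viscositysol_testfunc_alt}, the upper semicontinuous envelope $w^{*}$ is a bounded USC viscosity subsolution and the lower semicontinuous envelope $w_{*}$ is a bounded LSC viscosity supersolution, both matching the terminal datum. Applying Theorem~\ref{theo_JDRSAM_comparison_parabolicPDE} with $u=w^{*}$ and $v=\tilde{\Phi}$ gives $w^{*}\leq \tilde{\Phi}$ on $[0,T]\times\mathbb{R}^n$, while applying it with $u=\tilde{\Phi}$ and $v=w_{*}$ gives $\tilde{\Phi}\leq w_{*}$. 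Chaining these inequalities with $w_{*}\leq w\leq w^{*}$ forces $w=w_{*}=w^{*}=\tilde{\Phi}$, which proves uniqueness.

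The only subtleties to check are bookkeeping: that the terminal condition is preserved by passage to semicontinuous envelopes (immediate, since $w(T,\cdot)=1$ is continuous), and that Theorem~\ref{theo_JDRSAM_comparison_parabolicPDE} is being applied in both directions with the correct roles of sub- and supersolution. Both are routine, so the corollary follows.
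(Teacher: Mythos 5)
Your argument is correct and is essentially the paper's own proof: the paper's Corollary~\ref{coro_JDRSAM_valuefunc_uniquesol_PDE} simply invokes the comparison result of Theorem~\ref{theo_JDRSAM_comparison_parabolicPDE} together with the fact that $\tilde{\Phi}$ is a viscosity solution (Proposition~\ref{prop_PIDE_to_PDE}), which is exactly the sandwich argument you spell out in detail. Your version just makes explicit the two applications of the comparison theorem and the passage to semicontinuous envelopes that the paper leaves implicit.
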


\begin{proof}
        Uniqueness of the viscosity solution follows from Theorem~\ref{theo_JDRSAM_comparison_parabolicPDE}. As we noted that  $\tilde{\Phi}$ is a viscosity solution of the parabolic PDE~\eqref{eq_JDRSAM_exptrans_HJBPDE_parabolic_outline}, it is therefore the unique solution to the boundary value problem.
\hfill\end{proof}

\subsection{Existence of a Classical Solution to the HJB PDE~\eqref{eq_JDRSAM_exptrans_HJBPDE_parabolic_outline}}\label{sec_classicalsol}

Now that we have been able to rewrite the HJB PIDE as a parabolic PDE, we have access to the literature addressing the existence of a strong solution to the HJB PDE, such as Fleming and Rishel~\cite{fl75} or Krylov~\cite{kr80} and~\cite{kr87}. The crucial point in this argument is that this new PDE has a unique viscosity solution which also solves the initial PIDE. From there, we only need to prove existence of a classical solution to the PDE in order to show that the value functions $\Phi$ is also of class $C^{1,2}((0,T)\times \mathbb{R}^n)$.
\\

The following ideas and notations are related to the treatment found in Ladyzenskaja, Solonnikov and Uralceva~\cite{la68} of linear parabolic partial differential equations. The relevant results are summarized in Appendix E of Fleming and Rishel \cite{fl75}. They concern PDEs of the form
\begin{eqnarray}\label{eq_JDRSAM_generic_linear_PDE}
        \frac{\partial \psi}{\partial t}
        + \frac{1}{2} \textrm{tr}
                    \left( a(t,x) D^2 \psi\right)
                +       b'(t,x) D\psi
        + \theta c(t,x) \psi
        + d(t,x)
    =   0
\end{eqnarray}
on a set $Q = (0,T) \times G$ and with boundary condition
\begin{eqnarray}
    \psi(t,x) &=& \Psi_T(x)    \quad x \in G
                            \nonumber\\
    \psi(t,x) &=& \Psi(t,x)    \quad (t,x) \in (0,T) \times \partial G
                            \nonumber
\end{eqnarray}
The set $G$ is open and is such that $\partial G$ is a compact manifold of class $C^2$. Denote by
\begin{itemize}
        \itemb $\partial^*Q$ the boundary of Q, i.e.
                \begin{equation}
                \partial^*Q := \left( \left\{ T \right\} \times G \right)
                                                                \cup \left( (0,T) \times \partial G \right)
                            \nonumber
                \end{equation}
        \itemb $L^{p}(K)$ the space of $p$-th power integrable functions on $K \subset Q$;
        \itemb $\lVert \cdot \rVert_{p,K}$ the norm in $L^{p}(K)$.
\end{itemize}
Also, denote by $\mathscr{H}^{p}(Q), 1 < p < \infty$ the space of all functions $\psi$ such that for $\psi(t,x)$ and all its generalized partial derivatives $\frac{\partial \psi}{\partial t}$, $\frac{\partial \psi}{\partial x_i}$, $\frac{\partial^2 \psi}{\partial x_ix_j}$, $i,j = 1,\ldots, n$  are in $L^{p}(K)$. We associate with this space the Sobolev-type norm:
\begin{eqnarray}
        \lVert \psi \rVert_{p,K}^{(2)}
        :=      \lVert \psi \rVert_{p,K}
        +       \Big\lVert \frac{\partial \psi}{\partial t} \Big\rVert_{p,K}
        +       \sum_{i=1}^n \Big\lVert \frac{\partial \psi}{\partial x_i} \Big\rVert_{p,K}
        +       \sum_{i,j=1}^n \Big\lVert \frac{\partial^2 \psi}{\partial x_i x_j} \Big\rVert_{p,K}
\end{eqnarray}
We will introduce additional notation and concepts as required in the proofs.
\\

Recall from the outline of Section 4 that~\eqref{eq_JDRSAM_exptrans_HJBPDE_parabolic_outline} is treated as a PDE with an autonomous term $d_a^{\tilde{\Phi}}$.
\newline

%
%
%
%
\begin{theorem}[Existence of a Classical Solution to the HJB PDE~\eqref{eq_JDRSAM_exptrans_HJBPDE_parabolic_outline}]\label{Theo_JDRSAM_existence}
        Under Assumption~\ref{as_factorjumps_xi_integrable} the RS HJB PDE~\eqref{eq_JDRSAM_exptrans_HJBPDE_parabolic_outline} with terminal condition $\tilde{u}(T,x) = 1$ has a solution $\tilde{u} \in C^{1,2}\left((0,T)\times\mathbb{R}^n\right)$ with $\tilde{u}$ continuous in $[0,T]\times\mathbb{R}^n$.
\end{theorem}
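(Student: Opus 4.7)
The plan is to treat $d_a^{\tilde{\Phi}}(t,x)$ as a fixed data term and adapt the policy-improvement construction of Fleming and Rishel~\cite{fl75} (Appendix~E) combined with a localisation-and-passage-to-the-limit in the spatial variable. The structural features enabling this approach are: uniform parabolicity of $\Lambda\Lambda'$ (Assumption~\ref{As_Factors}(\ref{As_Factors_iii})); Lipschitz continuity with linear growth in $p$ of $H_a$ (Remark~\ref{rk_JDRSAM_Ha_Lipschitz}); boundedness and Lipschitz continuity of $f_a$ (Lemma~\ref{lem_JDRSAM_tildef_Lipshcitz_cond2}); boundedness of $g$ (Remark~\ref{prop_JDRSAM_g_bounded_Lipschitz}); and continuity and boundedness of $d_a^{\tilde{\Phi}}$ (Lemma~\ref{lem_JDRSAM_d_bounded_cond2}). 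Because the PDE is only semilinear (the second-order term is linear in $D^2u$ with coefficients independent of $u$), all nonlinearity is absorbed into $H_a$.

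First I would localise by fixing $R>0$ and posing the Dirichlet--Cauchy problem on $Q_R := (0,T)\times B_R$ with terminal data $u(T,x)=1$ on $B_R$ and lateral data $u(t,x)=\tilde{\Phi}(t,x)$ on $(0,T)\times\partial B_R$; the latter is admissible since $\tilde{\Phi}$ is continuous by Proposition~\ref{prop_tildePhi_Lipschitz}. On $Q_R$ I would iterate as follows: given a bounded Borel control $h^{(k)}$, solve the linear parabolic PDE
\[
\frac{\partial u^{(k+1)}}{\partial t} + \tfrac{1}{2} \textrm{tr}\bigl(\Lambda\Lambda'(t,x) D^2 u^{(k+1)}\bigr) + f_a(t,x,h^{(k)})' D u^{(k+1)} + \theta g(t,x,h^{(k)}) u^{(k+1)} + d_a^{\tilde{\Phi}}(t,x) = 0
\]
with the same boundary/terminal data. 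By Theorem~III.7.4 of Ladyzenskaja et al.~\cite{la68} (as invoked in Appendix~E of~\cite{fl75}) this admits a unique strong solution $u^{(k+1)}\in \mathscr{H}^p(Q_R)$ for every $p\in(1,\infty)$; Sobolev embedding for $p$ sufficiently large yields $u^{(k+1)}\in C^{1+\alpha/2,2+\alpha}_{\mathrm{loc}}(Q_R)$ for some $\alpha\in(0,1)$. Update $h^{(k+1)}(t,x)=\check{h}\bigl(t,x,u^{(k+1)}(t,x),Du^{(k+1)}(t,x)\bigr)$ via Corollary~\ref{coro_JDRSAM_optimcontrol_tilde}.

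Maximum-principle comparison for the linear equations gives an $L^\infty$ bound on $\{u^{(k)}\}$ uniform in $k$; since $h^{(k)}\in\mathcal{J}$ (which is bounded), the coefficients $f_a(\cdot,h^{(k)})$ and $g(\cdot,h^{(k)})$ are uniformly bounded, so the $\mathscr{H}^p(Q_R)$ norms of $u^{(k)}$ are uniformly bounded. A compactness argument then extracts a subsequence converging in $C^{1,2}_{\mathrm{loc}}(Q_R)$ to a classical solution $u_R\in C^{1,2}(Q_R)\cap C(\overline{Q_R})$ of the HJB PDE~\eqref{eq_JDRSAM_exptrans_HJBPDE_parabolic_outline} on $Q_R$; the nonlinear term $H_a(t,x,u^{(k+1)},Du^{(k+1)})$ passes to the limit because the pointwise optimiser $\check{h}$ is continuous in $(u,Du)$ and the minimum of the continuous function of $h$ over the compact set $\mathcal{J}$ is continuous. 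A comparison principle on $Q_R$ (the bounded-domain analogue of Theorem~\ref{theo_JDRSAM_comparison_parabolicPDE} accommodating lateral data) identifies $u_R$ with $\tilde{\Phi}|_{Q_R}$, and since $R$ is arbitrary we conclude $\tilde{\Phi}\in C^{1,2}((0,T)\times\mathbb{R}^n)$, with continuity up to $t=T$ inherited from H\"older regularity of $\mathscr{H}^p$ solutions up to the parabolic boundary.

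The principal obstacle is the uniform $\mathscr{H}^p$ estimate on the iterates, which is needed to pass to the limit through the nonlinear Hamiltonian: this relies on the constants in Theorem~III.7.4 depending only on the ellipticity modulus, the $L^\infty$ and Lipschitz bounds of the coefficients, and the domain geometry---all uniform in $k$. A secondary subtlety is the compatibility of the lateral boundary data inherited from $\tilde{\Phi}$ with the strong-solution theory; this is where the regularity result from Davis--Guo--Wu~\cite{guwu09} referenced in the outline supplies the needed boundary estimates. The $\Phi$ part of the theorem then follows by the change-of-variable property used in Corollary~\ref{theo_JDRSAM_viscositysol_phi}, since $\tilde{\Phi}=e^{-\theta\Phi}$ is strictly positive and bounded by Proposition~\ref{prop_JDRSAM_tildePhi_bounded}.
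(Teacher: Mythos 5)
Your overall architecture matches the paper's: treat $d_a^{\tilde{\Phi}}$ as autonomous data, run the Fleming--Rishel policy-improvement iteration on cylinders $Q_R$, and pass to the limit. However, there are two genuine gaps. First, the regularity you claim for the iterates is too strong. Sobolev embedding of $\mathscr{H}^p(Q_R)$ for $p>n+2$ gives $u^{(k+1)}\in\mathscr{C}^{1,\alpha}$ (H\"older bounds on $u^{(k+1)}$ and $Du^{(k+1)}$ only), \emph{not} $C^{1+\alpha/2,2+\alpha}_{\mathrm{loc}}$; and no Schauder estimate is available at the iterate level because the frozen coefficients $f_a(t,x,h^{(k)}(t,x))$ and $g(t,x,h^{(k)}(t,x))$ are merely Borel measurable in $(t,x)$ ($h^{(k)}$ comes from a measurable selection). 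Consequently the uniform $\mathscr{H}^p$ bound cannot yield a subsequence converging in $C^{1,2}_{\mathrm{loc}}$, which is the crux of your compactness step. The paper's proof instead combines monotonicity of the sequence (via the weak maximum principle applied to $W^k=u^{k+1}-u^k$) and uniform $C^{1,\alpha}$ convergence with only \emph{weak} $L^p$ convergence of $\partial_t u^k$ and $D^2u^k$, identifies the limit equation almost everywhere, and then upgrades to $C^{1,2}$ by a Schauder bootstrap on the \emph{limit} equation, rewritten as $\partial_t v+\tfrac12\mathrm{tr}(\Lambda\Lambda'D^2v)+F=0$ with $F=H_a(t,x,u,Du)+d_a^{\tilde{\Phi}}$; this requires establishing H\"older continuity of $F$ (the Davis--Guo--Wu argument, using the Lipschitz continuity of $\tilde{\Phi}$ from Proposition~\ref{prop_tildePhi_Lipschitz}), a step your proposal omits entirely.

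Second, your choice of lateral data and your identification step do not go through as stated. Prescribing $u=\tilde{\Phi}$ on $(0,T)\times\partial B_R$ is incompatible with the $\mathscr{H}^p$ existence theory you invoke: estimate (E.8) requires $\lVert\Psi\rVert^{(2)}_{p,\partial^*Q_R}<\infty$, i.e.\ boundary data with two spatial derivatives in $L^p$, and $\tilde{\Phi}$ is at this stage only known to be Lipschitz. Moreover, the ``bounded-domain analogue of Theorem~\ref{theo_JDRSAM_comparison_parabolicPDE} accommodating lateral data'' that you use to conclude $u_R=\tilde{\Phi}|_{Q_R}$ is not proved anywhere and would need its own argument. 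The paper avoids both problems by taking an \emph{arbitrary} smooth function $\psi\in C^{1,2}(\overline{Q_R})$ as lateral data, performing a second limit $R\to\infty$ with cutoff functions $\alpha_l$ and Ascoli's theorem to produce a global classical solution $\tilde{u}$ on $(0,T)\times\mathbb{R}^n$, and only then identifying $\tilde{u}$ with $\tilde{\Phi}$ via the whole-space viscosity comparison result (Corollary~\ref{coro_JDRSAM_valuefunc_uniquesol_PDE}) --- an identification which, note, belongs to Theorem~\ref{Theo_JDRSAM_existence_PIDE} rather than to the statement under review, which asserts only existence of \emph{some} classical solution $\tilde{u}$.
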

\newline

\begin{proof}
                The proof follows closely the argument of Fleming and Rishel~\cite{fl75} (Theorem VI.6.2 and Appendix E). 
\newline

\textbf{Step 1: Approximation in policy space - bounded space domain}\\
Consider the following auxiliary problem: fix $R>0$ and let $\mathscr{B}_R= \left\{x \in \mathbb{R}^n: |x|<R \right\}$. The HJB PDE for this auxiliary problem can be expressed as
\begin{eqnarray}\label{as_JDRSAM_auxiliaryPDE_tildePhi}
        &&      \frac{\partial u}{\partial t}
        + \frac{1}{2} \textrm{tr}
                    \left( \Lambda\Lambda'(t,x) D^2 u\right)
                +       H_{a}(t,x,u,Du)
                + d_{a}^{\tilde{\Phi}}(t,x)
    =   0
                                                \nonumber\\
        &&      \qquad \forall (t,x) \in Q_R := (0,T) \times \mathscr{B}_R
\end{eqnarray}
where
\begin{eqnarray}
   H_{a}(s,x,r,q) &=&
                \inf_{h \in \mathcal{J}} \left\{
        f_{a}(t,x,h)' q + \theta g(t,x,h) r
    \right\}
                                                                                                \nonumber
\end{eqnarray}
for $p \in \mathbb{R}^n$ and subject to boundary conditions
\begin{eqnarray}
        u(t, x) &=& \Psi(t,x)
        \qquad \forall (t,x) \in \partial^* Q_R := \left( (0,T) \times \partial \mathscr{B}_R \right) \cup \left( \left\{T\right\} \times \mathscr{B}_R \right)
                                                \nonumber
\end{eqnarray}
with
\begin{itemize}
\itemb $\Psi(T, x) = 1 \; \forall x \in  \mathscr{B}_R$;
\itemb $\Psi(t, x) =  \psi(t,x) \forall (t,x) \in (0,T) \times \partial \mathscr{B}_R$, where $\psi$ is some function of class $C^{1,2}(\overline{Q_R})$.
\end{itemize}
Define a sequence of functions $u^1$, $u^2$,... $u^k$,... on $\overline{Q_R}=[0,T]\times\overline{\mathscr{B}_R}$ and of bounded measurable feedback control laws $h^{0}$, $h^{1}$,... $h^{k}$,... where $h^{0}$ is an arbitrary control. $u^{k+1}$ solves the boundary value problem:
\begin{eqnarray}\label{eq_JDRSAM_logtrans_BVP_Phi_tilde_kplus1}
        \frac{\partial u^{k+1}}{\partial t}
        + \frac{1}{2} \textrm{tr}
                    \left( \Lambda\Lambda'(t,x) D^2 u^{k+1}\right)
                +       f_{a}(t,x,h^{k})' Du^{k+1} + \theta g(t,x,h^{k}) u^{k+1}
                + d_{a}^{\tilde{\Phi}}(t,x)
    =   0                                                                                       \nonumber\\
\end{eqnarray}
and subject to boundary conditions
\begin{eqnarray}
        u(t, x) &=& \Psi(t,x)
        \qquad \forall (t,x) \in \partial^* Q_R := \left( (0,T) \times \partial \mathscr{B}_R \right) \cup \left( \left\{T\right\} \times \mathscr{B}_R \right)
                                                \nonumber
\end{eqnarray}
Moreover, for almost all $(t,x) \in Q_R$, $k=1,2, \ldots$, we define $h^{k}$ by the prescription
\begin{eqnarray}\label{eq_JDRSAM_logtrans_PolicyImprovement_def_h_k}
    h^{k} = \textrm{Argmin}_{h \in \mathcal{J}}
        \left\{
                    f_{a}(t,x,h)' Du^{k}
        +   \theta g(t,x,h) u^{k}
        \right\}
\end{eqnarray}
so that
\begin{eqnarray}\label{eq_JDRSAM_logtrans_BVP_Phi_tilde_kplus1_identity}
        f_{a}(t,x,h^{k})' Du^{k}
        + \theta g(t,x,h^{k}) u^{k}
    &=& \inf_{h \in \mathcal{J}} \left\{
                    f_{a}(t,x,h)' Du^{k}
        +   \theta g(t,x,h) u^{k}
        \right\}
                                                \nonumber\\
    &=& H_{a}(t,x,u^{k},Du^{k})
\end{eqnarray}

Note that the boundary value problem~\eqref{eq_JDRSAM_logtrans_BVP_Phi_tilde_kplus1} is a special case of the generic problem introduced earlier in equation~\eqref{eq_JDRSAM_generic_linear_PDE} with
\begin{eqnarray}
    a(t,x)  &=& \Lambda\Lambda'(t,x)
                                                \nonumber\\
    b(t,x)  &=& f_{a}(t,x,h^{k})
                                                \nonumber\\
    c(t,x)  &=& g(t,x,h^{k})
                                                \nonumber\\
    d(t,x)  &=& d_{a}^{\tilde{\Phi}}(t,x)
                                                \nonumber
\end{eqnarray}
Moreover, since $\mathscr{B}_R$ is bounded and $\mathcal{J}$ is compact, all of these functions are also bounded. In particular $\lVert d_a^{\tilde{\Phi}} \rVert_{p,\mathscr{B}_R} < \infty$ for any $p>0$ because $\tilde{\Phi}$ is bounded and $d_a^{\tilde{\Phi}}$ is a continuous function of $\tilde{\Phi}$ by Lemma~\ref{lem_JDRSAM_d_bounded_cond2}. Thus, based on standard results on parabolic Partial Differential Equations (see for example Appendix E in Fleming and Rishel~\cite{fl75} and Chapter IV in Ladyzenskaja, Solonnikov and Uralceva~\cite{la68}), the boundary value problem~\eqref{eq_JDRSAM_logtrans_BVP_Phi_tilde_kplus1} admits a unique solution in $\mathscr{H}^{p}(Q_R)$ for any $p > 0$ depending only on the properties of $d$ and on the boundary conditions. Indeed, by estimate (E.8) in Appendix E of Fleming and Rishel \cite{fl75},
\begin{eqnarray}\label{eq_Sobolev_2_norm}
        \lVert u^k \rVert_{p,Q_R}^{(2)}
        \leq M_R^1 \left(
                \lVert d_a^{\tilde{\Phi}}(t,x) \rVert_{p,Q_R}
        +       \lVert \Psi \rVert_{p,\partial*Q_R}^{(2)}
        \right)
\end{eqnarray}
for some constant $M_R^1$ depending on $R$. The terms $\lVert d_a^{\tilde{\Phi}}(t,x) \rVert_{p,Q_R}$ and $\lVert \Psi \rVert_{p,\partial*Q_R}^{(2)}$ are finite which implies that $\lVert u^k \rVert_{p,Q_R}^{(2)}$ is finite. Thus the sequence $(u^k)_{k \in \mathbb{N}}$ is bounded.
\newline

Moreover, if we pick $p > \frac{n+2}{2}$, the boundedness of $ \lVert u^k \rVert_{p,Q_R}^{(2)}$ resulting from estimate~\eqref{eq_Sobolev_2_norm} implies that the H\"older norm 
\begin{eqnarray}
        \lvert u^k \rvert_{Q_R}^{\alpha}
    &=& \sup_{(t,x)\in Q_R} \lvert u^k (t,x) \rvert
    +   \sup_{\begin{array}{c} (x,y)\in \overline{G} \\ 0 \leq t \leq T \end{array}}
            \frac{\lvert u^k(t,x)- u^k(t,y) \rvert}{\lvert x - y \rvert^{\alpha}}
                                            \nonumber\\
    &+& \sup_{\begin{array}{c} x \in \overline{G} \\ 0 \leq s,t \leq T \end{array}}
            \frac{\lvert u^k(s,x)- u^k(t,x) \rvert}{\lvert s - t \rvert^{\alpha/2}}
                                            \nonumber
\end{eqnarray}
is bounded for some $0<\alpha<1$, proving that $u^k$ is continuous on $\overline{Q_{R}}$ (see Ladyzenskaja, Solonnikov and Uralceva~\cite{la68}), p.89). Define the H\"older norm $\lvert u^k \rvert_{Q_R}^{1,\alpha}$ by 
 \begin{eqnarray}
        \lvert u^k \rvert_{Q_R}^{1,\alpha}
    =   \lvert u^k \rvert_{Q_R}^{\alpha}
    +   \sum_{i=1}^{n} \lvert u_{x_i}^k \rvert_{Q_R}^{\alpha}
                                            \nonumber
\end{eqnarray}
For $p > n+2$, we have the following estimate (see equation (E.9) in Appendix E of Fleming and Rishel \cite{fl75} ) 
\begin{eqnarray}\label{eq_JDRSAM_logtrans_BVP_bound0}
            \lvert u^k \rvert_{Q_R}^{1,\alpha}
    \leq    M_R^2 \lVert u^k \rVert_{p,Q_R}^{(2)}
\end{eqnarray}
for some constant $M_R^2$ (depending on $R$) and with
\begin{equation}
    \alpha = 1 -\frac{n+2}{p}
                                            \nonumber
\end{equation}
The H\"older norm $\lvert u^k \rvert_{Q_R}^{1,\alpha}$ is finite and as a result $Du^k$ is continuous on $\overline{Q_{R}}$. In particular, if we denote by $\mathscr{C}^{1,\alpha}(Q), 0<\alpha<1$ the H\"older space of all functions $\psi$ such that $\lvert \psi \rvert_{Q}^{1,\alpha}$ is finite, then $u^k \in \mathscr{C}^{1,\alpha}(Q_R)$.
\newline

\textbf{Step 2: Convergence Inside the Cylinder $(0,T) \times \mathscr{B}_R$}\\
Take $k \geq 1$. By~\eqref{eq_JDRSAM_logtrans_BVP_Phi_tilde_kplus1_identity},
\begin{eqnarray}
                &&
        \frac{\partial u^{k}}{\partial t}
        + \frac{1}{2} \textrm{tr}
                    \left( \Lambda\Lambda'(t,x) D^2 u^{k}\right)
                +       f_{a}(t,x,h^{k})' Du^{k} + \theta g(t,x,h^{k}) u^{k}
                + d_{a}^{\tilde{\Phi}}(t,x)
                                                                                                        \nonumber\\
                &\leq&
        \frac{\partial u^{k}}{\partial t}
        + \frac{1}{2} \textrm{tr}
                    \left( \Lambda\Lambda'(t,x) D^2 u^{k}\right)
                +       f_{a}(t,x,h^{k-1})' Du^{k} + \theta g(t,x,h^{k-1}) u^{k}
                + d_{a}^{\tilde{\Phi}}(t,x)
                                                                                                        \nonumber\\
    &=& 0
                                                                                                        \nonumber
\end{eqnarray}
Subtracting~\eqref{eq_JDRSAM_logtrans_BVP_Phi_tilde_kplus1},
\begin{eqnarray}\label{eq_JDRSAM_logtrans_BVP_Phi_tilde_kplus2}
                &&
        \left(
                                        \frac{\partial u^{k+1}}{\partial t}
                                -       \frac{\partial u^{k}}{\partial t}
                        \right)
        + \frac{1}{2} \textrm{tr}
                    \left( \Lambda\Lambda'(t,x) D^2 \left[
                                                                        u^{k+1}
                                                        -       u^{k}
                                                        \right]\right)
                                                                                                        \nonumber\\
                &&
                +       f_{a}(t,x,h^{k})'
                                \left(Du^{k+1}-Du^{k+1}\right)
                + \theta g(t,x,h^{k})
                                \left(u^{k+1} - u^{k}\right)
                \leq    0
\end{eqnarray}

Define the sequence of functions $(W^k)_{k \in \mathbb{N}}$ as
\begin{eqnarray}
        W^k := u^{k+1} - u^{k}
                                                                        \nonumber
\end{eqnarray}
then $W^k$ satisfies the inequality
\begin{eqnarray}\label{as_JDRSAM_intermediatePDE1}
    &&
        \frac{\partial W^{k}}{\partial t}
      +  \frac{1}{2} \textrm{tr} \left(\Lambda\Lambda'(t,x)D^2 W^{k}\right)
                +  f_{a}(t,x,h^{k})'DW^{k}
      +  \theta g(t,x,h^{k}) W^{k}
                \leq 0
\end{eqnarray}
in $(0,T)\times\mathscr{B}_R$, and with boundary condition $W^{k}(T,x) = 0$ on $\partial^* Q_R = \left( (0,T) \times \partial\mathscr{B}_R \right) \cup \left( \left\{T\right\} \times \mathscr{B}_R \right)$. Moreover, $W^k$ is continuous on $\overline{Q_R}$ by continuity of $u^k$ $\forall k \in \mathbb{N}$. By the Maximum Principle in Lemma~\ref{theo_classical_weak_maximum_principle_extensionboundedcost} below, $W^k(t,x) \leq 0$ for $k \geq 1$ and hence by definition of $W^k$,
\begin{eqnarray}
        u^{k} \geq u^{k+1}
        , \qquad \forall k \in \mathbb{N}
                                                                        \nonumber
\end{eqnarray}
which implies that the sequence $\left\{ u^{k} \right\}_{k \in \mathbb{N}}$ is non decreasing. As a a result, it is bounded from above by $u^1$. Since the sequence $(u^k)_{k \in \mathbb{N}}$ is non-increasing and is also bounded, it converges. Moreover the convergence is uniform because $u^k$ is uniformly continuous.
Denote by $u$ its limit as $k \to \infty$. 
\\

For $p > n+2$, $Du^k$ is continuous and as $k \to \infty$ we conclude that
\begin{itemize}
\itemb $Du^k$ converges to $Du$ uniformly on $\overline{Q_R}$ ;
\itemb $D^2u^k$, $\frac{\partial u^k}{\partial t}$ converge weakly in $L^{p}(Q_R)$ respectively to $D^2u$ and $\frac{\partial u}{\partial t}$.
\newline
\end{itemize}

\indent\indent\textbf{Step 3: Proving that $u \in C^{1,2}(Q_R)$}\\
Using relationship~\eqref{eq_JDRSAM_logtrans_BVP_Phi_tilde_kplus1_identity} and then equation~\eqref{eq_JDRSAM_logtrans_BVP_Phi_tilde_kplus1}, we get:
\begin{eqnarray}\label{eq_JDRSAM_logtrans_BVP_ineq1}
    &&
        \frac{\partial u^{k}}{\partial t}
        + \frac{1}{2} \textrm{tr}
                    \left( \Lambda\Lambda'(t,x) D^2 u^{k}\right)
                +       f_{a}(t,x,h)' Du^{k} + \theta g(t,x,h) u^{k}
                + d_{a}^{\tilde{\Phi}}(t,x)
                                                \nonumber\\
    &\geq&
        \frac{\partial u^{k}}{\partial t}
        + \frac{1}{2} \textrm{tr}
                    \left( \Lambda\Lambda'(t,x) D^2 u^{k}\right)
                +       f_{a}(t,x,h^{k})' Du^{k} + \theta g(t,x,h^{k}) u^{k}
                + d_{a}^{\tilde{\Phi}}(t,x)
                                                \nonumber\\
    &=&
        \left( \frac{\partial u^{k}}{\partial t} - \frac{\partial u^{k+1}}{\partial t} \right)
        +   \left(
            \frac{1}{2} \textrm{tr} \left[
                \left( \Lambda\Lambda'(t,x)) D^2 u^{k}\right)
                -   \left( \Lambda\Lambda'(t,x) D^2 u^{k+1}\right)
                \right]\right.
                                                \nonumber\\
    &&  \left.
                +   f_{a}(t,x,h^{k})' \left( Du^{k} - Du^{k+1} \right)
        +   \theta g(t,x,h^{k}) \left( u^{k} - u^{k+1} \right)
        \right]
\end{eqnarray}
for any admissible control $h$. Since the left-hand side of~\eqref{eq_JDRSAM_logtrans_BVP_ineq1} tends weakly in $L^{p}(Q_R)$ to
\begin{eqnarray}\label{eq_JDRSAM_proofstep2_ineq1}
        \frac{\partial u}{\partial t}
        + \frac{1}{2} \textrm{tr}
                    \left( \Lambda\Lambda'(t,x) D^2 u \right)
                +       f_{a}(t,x,h)' Du + \theta g(t,x,h) u
                + d_{a}^{\tilde{\Phi}}(t,x)
\end{eqnarray}
as $k \to \infty$ and the right-hand side tends tends weakly to 0, then we obtain the following inequality
\begin{eqnarray}
        \frac{\partial u}{\partial t}
        + \frac{1}{2} \textrm{tr}
                    \left( \Lambda\Lambda'(t,x) D^2 u \right)
                +       f_{a}(t,x,h)' Du + \theta g(t,x,h) u
                + d_{a}^{\tilde{\Phi}}(t,x)
    \geq    0
                                                \nonumber
\end{eqnarray}
almost everywhere in $Q_R$.
\\

Using a measurable selection theorem and following an argument similar to that of Lemma VI.6.1 of Fleming and Rishel~\cite{fl75}, we see that there exists a Borel measurable function $h^{*}$ from $(0,T)\times\mathscr{B}_{R}$ into $\mathcal{J}$ such that.
\begin{eqnarray}
        f_{a}(t,x,h^*)' Du
        + \theta g(t,x,h^*) u
    &=& \inf_{h \in \mathcal{J}} \left\{
                    f_{a}(t,x,h)' Du
        +   \theta g(t,x,h) u\right\}
                                                                                                                                \nonumber
\end{eqnarray}
holds for almost all $(t,x) \in (0,T)\times\mathscr{B}_{R}$. Then
\begin{eqnarray}\label{eq_JDRSAM_logtrans_BVP_ineq2}
    &&
        \frac{\partial u^{k}}{\partial t}
        + \frac{1}{2} \textrm{tr}
                    \left( \Lambda\Lambda'(t,x) D^2 u^{k}\right)
                +       f_{a}(t,x,h^{*})' Du^{k} + \theta g(t,x,h^{*}) u^{k}
                + d_{a}^{\tilde{\Phi}}(t,x)
                                                \nonumber\\
    &\leq&
        \frac{\partial u^{k}}{\partial t}
        + \frac{1}{2} \textrm{tr}
                    \left( \Lambda\Lambda'(t,x) D^2 u^{k}\right)
                +       f_{a}(t,x,h^{k})' Du^{k} + \theta g(t,x,h^{k}) u^{k}
                + d_{a}^{\tilde{\Phi}}(t,x)
                                                \nonumber\\
    &=&
        \left( \frac{\partial u^{k}}{\partial t} - \frac{\partial u^{k+1}}{\partial t} \right)
        +   \left(
            \frac{1}{2} \textrm{tr} \left[
                \left( \Lambda\Lambda'(t,x) D^2 u^{k}\right)
                -   \left( \Lambda\Lambda'(t,x) D^2 u^{k+1}\right)
                \right]\right.
                                                \nonumber\\
    &&  \left.
                +   f_{a}(t,x,h^{k})' \left( Du^{k} - Du^{k+1} \right)
        +   \theta g(t,x,h^{k}) \left( u^{k} - Du^{k+1} \right)
        \right]
\end{eqnarray}

Since the left-hand side of~\eqref{eq_JDRSAM_logtrans_BVP_ineq2} tends weakly in $L^{p}(Q_R)$ to
\begin{eqnarray}
        \frac{\partial u}{\partial t}
        + \frac{1}{2} \textrm{tr}
                    \left( \Lambda\Lambda'(t,x) D^2 u \right)
                +       f_{a}(t,x,h^{*})' Du + \theta g(t,x,h^{*}) u
                + d_{a}^{\tilde{\Phi}}(t,x)
                                                \nonumber
\end{eqnarray}
as $k \to \infty$ and the right-hand side tends weakly to 0, then we obtain the inequality
\begin{eqnarray}\label{eq_JDRSAM_proofstep2_ineq2}
        \frac{\partial u}{\partial t}
        + \frac{1}{2} \textrm{tr}
                    \left( \Lambda\Lambda'(t,x) D^2 u \right)
                +       f_{a}(t,x,h^{*})' Du + \theta g(t,x,h^{*}) u
                + d_{a}^{\tilde{\Phi}}(t,x)
    \leq    0
\end{eqnarray}
almost everywhere in $Q_R$.
\\

Combining~\eqref{eq_JDRSAM_proofstep2_ineq1} and~\eqref{eq_JDRSAM_proofstep2_ineq2}, we have therefore shown that
\begin{eqnarray}\label{eq_PDE_step3_1}
        \frac{\partial u}{\partial t}
        + \frac{1}{2} \textrm{tr}
                    \left( \Lambda\Lambda'(t,x) D^2 u \right)
                +       f_{a}(t,x,h^{*})' Du + \theta g(t,x,h^{*}) u
                + d_{a}^{\tilde{\Phi}}(t,x)
    =   0
\end{eqnarray}
almost everywhere in $Q_R$. Hence, $u$ is a solution of equation~\eqref{as_JDRSAM_auxiliaryPDE_tildePhi}. Moreover, $u \in \mathscr{H}^{p}(Q_R)$ and as a result $u \in \mathscr{C}^{1,\alpha}(Q_R)$ for $p>n+2$ by estimate~\eqref{eq_JDRSAM_logtrans_BVP_bound0}. Also, since $H_a$ is locally Lipschitz, $\lvert u \rvert_{Q_R}^{\alpha} < \infty$ for $\alpha > 0$ and $\lvert Du \rvert_{Q_R}^{\alpha} < \infty$ for $0 < \alpha \leq 1$, then $\lvert H_a(t,x,u,Du) \rvert_{Q_R}^{\alpha} < \infty$. 
\newline

The key idea now is that we can rewrite PDE~\eqref{eq_PDE_step3_1} as
\begin{eqnarray}\label{eq_PDE_step3_2}
        \frac{\partial u}{\partial t}
        + \frac{1}{2} \textrm{tr}
                    \left( \Lambda\Lambda'(t,x) D^2 u \right)
                +       H_{a}(t,x,u,Du)
                + d_{a}^{\tilde{\Phi}}(t,x)
    =   0
\end{eqnarray}
which only depends on $t$ and $x$, but not explicitly on $h$.
Let $H_a^u(t,x) := H_{a}(t,x,u,Du)$ and consider the auxiliary PDE
\begin{eqnarray}\label{eq_PDE_step3_3}
        \frac{\partial v}{\partial t}
        + \frac{1}{2} \textrm{tr}
                    \left( \Lambda\Lambda'(t,x) D^2 v \right)
                +       F(t,x)
    =   0
\end{eqnarray}
where $F(t,x) := H_{a}^u(t,x) + d_{a}^{\tilde{\Phi}}(t,x)$, and subject to boundary conditions
\begin{eqnarray}
        v(t, x) &=& \Psi(t,x)
        \qquad \forall (t,x) \in \partial^* Q_R := \left( (0,T) \times \partial \mathscr{B}_R \right) \cup \left( \left\{T\right\} \times \mathscr{B}_R \right)
                                                \nonumber
\end{eqnarray}
It is clear that $u$ solves both~\eqref{eq_PDE_step3_2} and~\eqref{eq_PDE_step3_3}. Using similar arguments to those developed so far, we can show that auxiliary PDE~\eqref{eq_PDE_step3_3} admits a unique solution $v$, which coincides with $u$.
\newline

To go further, we need to know that $F$ is H\"older continuous. However, $d_a^{\tilde{\Phi}}$ is continuous by Lemma~\ref{lem_JDRSAM_d_bounded_cond2} and $H_a$ is locally Lipschitz by Remark~\ref{rk_JDRSAM_Ha_Lipschitz}, so we can use the argument proposed by Davis, Guo and Wu~\cite{guwu09} in the proof of their Theorem 5.5 to show that $F$ is H\"older continuous in $x$ with exponent $0<\alpha<1$. We omit the details of the proof here as it is lengthy but solely requires standard calculus techniques.
\newline

We can now show that $u \in C^{1,2}(Q_R)$. Define
\begin{eqnarray}
        \lvert v^k \rvert_{Q_R}^{2,\alpha}
   :=   \lvert v^k \rvert_{Q_R}^{1,\alpha}
        +       \Big\lvert \frac{\partial v^k}{\partial t} \Big\rvert_{Q_R}^{\alpha}
   +    \sum_{i,j=1}^{n} \lvert v_{x_i x_j}^k \rvert_{Q_R}^{\alpha}
                                            \nonumber
\end{eqnarray}
and denote by $\mathscr{C}^{2,\alpha}(Q), 0<\alpha<1$ the H\"older space of all functions $\psi$ such that $\lvert \psi \rvert_{Q}^{2,\alpha}$ is bounded.
\newline

Now, consider two open subsets $Q'$ and $Q''$ of $Q$ such that $\bar{Q'} \subset \bar{Q''}$. By estimate (E10) in Appendix E of Fleming and Rishel \cite{fl75}, we have
\begin{eqnarray}\label{eq_JDRSAM_logtrans_BVP_bound1}
            \lvert v \rvert_{Q'}^{2,\alpha}
    \leq    M_2 \left(
                                     |F|_{Q''}^{\alpha}
                             +  \lVert v \rVert_{Q''} \right)
\end{eqnarray}
for some constant $M_2$ depending solely on $Q'$ and $Q''$. Set $Q'' = Q_R$ and take $Q'$ to be any subset of $Q$ such that $\bar{Q}' \subset Q$. Thus
\begin{eqnarray}\label{eq_JDRSAM_logtrans_BVP_bound2}
            \lvert u \rvert_{Q'}^{2,\alpha} 
                =       \lvert v \rvert_{Q'}^{2,\alpha} 
                \leq    M_2 \left( |F|_{Q_R}^{\alpha} + \lVert v \rVert_{Q_R} \right)
                        <       \infty
\end{eqnarray}
When interpreted in light of estimate~\eqref{eq_JDRSAM_logtrans_BVP_bound0}, we see that the derivatives $\frac{\partial u}{\partial t}$, $\frac{\partial u}{\partial x_i}$ and $\frac{\partial^2 u}{\partial x_i x_j}$ satisfy a uniform H\"older condition on any compact subset $Q'$ of $Q_R$. By Theorem 10.1 in Chapter IV of Ladyzenskaja, Solonnikov and Uralceva~\cite{la68}, we can therefore conclude that $u \in C^{1,2}(Q_R)$.
\\

\textbf{Step 4: Convergence from the Cylinder $(0,T) \times \mathscr{B}_R$ to the State Space $(0,T) \times \mathbb{R}^n$}\\

For $l=1,2,\ldots$, we define a function $\alpha_l$ satisfying the following
\begin{enumerate}[(i).]
\item $\alpha_l \in C^{\infty}$;
\item $\alpha_l \geq 0$;
\item
        \begin{eqnarray}
                        \alpha_l(x) :=
                        \left\{ \begin{array}{ll}
                                1       &       \textrm{for } x \in \mathscr{B}_{l}             \\
                                0       &       \textrm{for } x \in \mathbb{R}^n\backslash\mathscr{B}_{l+1}     \\
                        \end{array}\right.
                                                                                                                                \nonumber
        \end{eqnarray}
\item   $\big| \frac{\partial \alpha_{l}}{\partial x} \big| \leq 2$.
\end{enumerate}
Let $\tilde{\Phi}_l$ be a solution of the PDE:
\begin{eqnarray}\label{as_JDRSAM_auxiliaryPDE_tildePhi_step3}
        \frac{\partial u_l}{\partial t}
        + \frac{1}{2} \textrm{tr}
                    \left( \Lambda\Lambda'(t,x) D^2 u_l\right)
                +       \alpha_l(x)H_{a}\left(t,x,u_l,Du_l\right)
                + d_{a}^{\tilde{\Phi}}(t,x)
    =   0
\end{eqnarray}
and subject to terminal condition $u_l(T, x) = \alpha_l(x)$.
\\

By the local estimate~\eqref{eq_JDRSAM_logtrans_BVP_bound0} from step 1 above, $\lVert u_l \rVert_{p,Q_R}^{(2)}$ is bounded for $p >1$ for any bounded $Q \subset (0,T)\times\mathbb{R}^n$ and $W_l$ satisfies a H\"older condition on each bounded $Q$. In particular, take $Q := (0,T)\times\mathscr{B}_{l_0}$ for some $l_0 >0$. For $l>l_0$, $u_l$ solves the ``parabolic'' PDE~\eqref{eq_JDRSAM_exptrans_HJBPDE_parabolic_outline} in $Q$.
\\

By Remark~\ref{rk_JDRSAM_Ha_Lipschitz}, $H_a$ is locally Lipschitz continuous. Taking into account the estimate~\eqref{eq_JDRSAM_logtrans_BVP_bound1}, we find, that $\frac{\partial u^{(i)}}{\partial t}$ and $\frac{\partial^2 u^{(i)}}{\partial x_i x_j}$ also satisfy a uniform H\"older condition on any compact subset of $Q$.
\\

By Ascoli's theorem, we can find a subsequence $\left(u^l\right)_{l \in \mathbb{N}}$ of $\left(u^{(i)}\right)_{i \in \mathbb{N}}$ such that \begin{enumerate}[(i).]
    \itemb $\left(u^l \right)_{l \in \mathbb{N}}$ tends to a limit $\tilde{u}$ uniformly on each compact subset of $\overline{Q}_0$;
    \itemb $\left(\frac{\partial u}{\partial t}^l \right)_{l \in \mathbb{N}}$ tends to a limit $\frac{\partial \tilde{u}}{\partial t}$ uniformly on each compact subset of $Q_0$;
    \itemb $\left(Du^l \right)_{l \in \mathbb{N}}$ tends to a limit $D\tilde{u}$ uniformly on each compact subset of $Q_0$;
    \itemb $\left(D^2u^l \right)_{l \in \mathbb{N}}$ tends to a limit $D^2\tilde{u}$ uniformly on each compact subset of $Q_0$.
\end{enumerate}

To conclude, the function $\tilde{u}$ is a classical solution of the equation~\eqref{eq_JDRSAM_exptrans_HJBPDE_parabolic_outline} with terminal condition $\tilde{\Phi}(T,x) = 1$.

\end{proof}

A parabolic version of the weak maximum principle is required in the previous proof to show that the sequence of function $(u^k)_{k=1,\ldots}$ converges.\footnote{Note that we do not need the maximum principle to show that $\tilde{\Phi} \geq 0$ as this is a direct consequence of the structure of risk-sensitive control.} The parabolic weak maximum principle used in the proof takes the form:

\begin{lemma}[Classical Weak Maximum Principle (Theorem 5.1 in Pham~\cite{ph98})]\label{theo_classical_weak_maximum_principle}
Consider a parabolic PDE of the form~\eqref{eq_JDRSAM_generic_linear_PDE} with parabolic operator
\begin{eqnarray}\label{eq_JDRSAM_diffstate_generic_parabolic_operator}
                L\psi &=&
        \frac{\partial \psi}{\partial t}
        + \frac{1}{2} \textrm{tr}
                    \left( a(t,x) D^2 \psi\right)
                +       b(t,x)' D\psi
      + \theta c(t,x) \psi
                + \frac{\partial \psi}{\partial t}
\end{eqnarray}
and assume
\begin{itemize}
\itemb $Q$ is bounded;
\itemb the operator $L\psi$ is parabolic for every $(x,t) \in Q$;
\itemb the coefficients $a(t,x)$, $b(t,x)$, $c(t,x)$ and $d(t,x)$ are continuous;
\itemb $c(t,x) \leq 0$;
\end{itemize}
If
\begin{itemize}
\itemb $\psi$ is continuous on $\overline{Q}$ in the viscosity sense,
\itemb $L\psi \geq 0$ in Q,
\itemb $\psi \leq 0$ on $\partial^* Q$
\end{itemize}
then
$\psi \leq 0$ on $\bar{Q}$.
\end{lemma}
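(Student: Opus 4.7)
The plan is to argue by contradiction, using a small linear-in-time perturbation to upgrade the non-strict subsolution inequality $L\psi\geq 0$ to a strict one, and then to extract a contradiction from the first- and second-order necessary conditions at the maximizer.

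First, suppose for contradiction that $M:=\sup_{\bar Q}\psi>0$. Since $Q$ is bounded, $\bar Q$ is compact and the continuous function $\psi$ attains $M$. Because $\psi\leq 0$ on the parabolic boundary $\partial^{*}Q$, the maximizer must lie in $\bar Q\setminus\partial^{*}Q$, i.e.\ at some $(t_0,x_0)$ with $t_0\in[0,T)$ and $x_0\in G$. For small $\eta>0$ I would then introduce the perturbation
\begin{equation*}
\phi_\eta(t,x):=\psi(t,x)-\eta(T-t).
\end{equation*}
On $\partial^{*}Q$ we still have $\phi_\eta\leq\psi\leq 0$, while $\phi_\eta(t_0,x_0)=M-\eta(T-t_0)>0$ once $\eta T<M$; hence $\phi_\eta$ also attains a positive maximum at some $(t^{*},x^{*})\in\bar Q\setminus\partial^{*}Q$. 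A direct computation, using $c\leq 0$ and $T-t\geq 0$, gives
\begin{equation*}
L\phi_\eta \;=\; L\psi + \eta - \theta\, c(t,x)\,\eta(T-t) \;\geq\; L\psi + \eta \;\geq\; \eta \;>\; 0
\end{equation*}
throughout $Q$.

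Next I would derive a contradiction from the necessary conditions at $(t^{*},x^{*})$. When $\psi\in C^{1,2}(Q)$, which is the regularity enjoyed by the iterates $u^{k}$ to which this lemma is applied, standard calculus yields $D\phi_\eta(t^{*},x^{*})=0$, $D^{2}\phi_\eta(t^{*},x^{*})\leq 0$, and either $\partial_t\phi_\eta(t^{*},x^{*})=0$ when $t^{*}\in(0,T)$ or $\partial_t\phi_\eta(t^{*},x^{*})\leq 0$ as a right-derivative when $t^{*}=0$. Parabolicity of $L$ ensures $a(t^{*},x^{*})$ is positive semidefinite, so $\tfrac12\,\textrm{tr}(a D^{2}\phi_\eta)\leq 0$; the signs $c\leq 0$ and $\phi_\eta(t^{*},x^{*})>0$ give $\theta c\,\phi_\eta\leq 0$. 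Adding these with $b'D\phi_\eta=0$ and $\partial_t\phi_\eta\leq 0$ produces $L\phi_\eta(t^{*},x^{*})\leq 0$, contradicting the strict positivity shown above. For a merely viscosity subsolution $\psi$, the same contradiction is obtained by using $\phi_\eta$ (or a paraboloid touching $\phi_\eta$ from above at $(t^{*},x^{*})$) as a test function in the viscosity subsolution definition.

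The principal technical nuisance is the boundary case $t^{*}=0$, where the time derivative exists only as a one-sided limit; however the required sign $\partial_t\phi_\eta(0,x^{*})\leq 0$ follows immediately from the fact that $(0,x^{*})$ is a maximizer of $\phi_\eta$. If one prefers to avoid one-sided derivatives, the same argument can be carried out on the truncated cylinder $(\delta,T)\times G$ for $\delta>0$ and $\delta$ then sent to $0$ using the continuity of $\psi$. In either route the contradiction forces $M\leq 0$, and hence $\psi\leq 0$ on $\bar Q$, as required.
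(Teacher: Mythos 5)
Your proof is correct, and it is genuinely more informative than what the paper provides: the paper's ``proof'' of this lemma consists solely of the citation ``Refer to Pham~\cite{ph98}'', whereas you supply the complete classical argument (perturb by $\eta(T-t)$ to make the subsolution inequality strict, locate a positive maximum off the parabolic boundary $\partial^{*}Q=(\{T\}\times G)\cup((0,T)\times\partial G)$, and contradict the first- and second-order conditions there). Your sign bookkeeping is right for this backward-in-time setting: the perturbation contributes $+\eta$ through $\partial_t$, the term $-\theta c\,\eta(T-t)$ is nonnegative since $\theta>0$, $c\leq 0$, $t\leq T$, and the one-sided case $t^{*}=0$ is handled correctly since $t=0$ is not part of $\partial^{*}Q$ here. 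Your remark on the viscosity case is also sound: $\varphi(t,x)=\eta(T-t)+M_\eta$ is itself an admissible $C^{1,2}$ test function touching $\psi$ from above at the maximizer, so no paraboloid is even needed. Two small caveats worth recording. First, the displayed operator in the lemma contains $\partial\psi/\partial t$ twice; this is evidently a typo for the inhomogeneous term $d(t,x)$ of \eqref{eq_JDRSAM_generic_linear_PDE}, and your argument silently treats the homogeneous case. With a genuinely nonzero $d$ of arbitrary sign the conclusion as stated would fail (at the interior maximum one only gets $L\phi_\eta\leq d(t^{*},x^{*})$), so strictly one needs $d\leq 0$; in the paper's only application of this lemma --- to $W^{k}=u^{k+1}-u^{k}$ in \eqref{as_JDRSAM_intermediatePDE1} --- the inhomogeneous term cancels, so this is harmless. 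Second, in your final display you should read the conclusion at $(t^{*},x^{*})$ as $L\phi_\eta\leq 0$ \emph{because} $\phi_\eta(t^{*},x^{*})>0$ and $c\leq 0$; if the maximum value of $\phi_\eta$ were zero the zeroth-order term would contribute nothing, but your hypothesis $\eta T<M$ guarantees strict positivity, so the step is justified as written.
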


\begin{proof}
        Refer to Pham~\cite{ph98}.
\end{proof}

The condition $c(x) \leq 0$ is proving too restrictive for our purpose. The following classical extension to the weak maximum principle partially addresses this difficulty:

\begin{lemma}[Extension of Theorem~\ref{theo_classical_weak_maximum_principle} for a bounded cost function $c(x,t)$ (Theorem 5.1 in Pham~\cite{ph98})]\label{theo_classical_weak_maximum_principle_extensionboundedcost}

Consider a parabolic PDE of the form~\ref{eq_JDRSAM_generic_linear_PDE} and assume
\begin{itemize}
\itemb $Q$ is bounded;
\itemb the operator $L\psi$ is parabolic for every $(x,t) \in Q$;
\itemb the coefficients $a(t,x)$, $b(t,x)$, $c(t,x)$ and $d(t,x)$ are continuous;
\itemb $c(t,x) \leq c_0$, where $c_0 \geq 0$;
\end{itemize}
If
\begin{itemize}
\itemb $\psi$ is continuous on $\overline{Q}$ in the viscosity sense,
\itemb $L\psi \geq 0$ in Q,
\itemb $\psi \leq 0$ on $\partial^* Q$
\end{itemize}
then
$\psi \leq 0$ on $Q$.

\end{lemma}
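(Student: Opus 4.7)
The plan is to reduce this extended statement to the classical weak maximum principle (the preceding Lemma, with $c \leq 0$) by means of a standard exponential change of variable that absorbs the positive upper bound $c_0$ into the time-derivative term.

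First, I would introduce the auxiliary function $\phi(t,x) := e^{\theta c_0 t}\psi(t,x)$ on $\bar{Q}$. Since $e^{\theta c_0 t}$ is smooth, strictly positive and bounded on $\bar{Q}$, the function $\phi$ inherits from $\psi$ continuity (in the viscosity sense) on $\bar{Q}$, and the boundary sign condition transfers: $\phi \leq 0$ on $\partial^* Q$ if and only if $\psi \leq 0$ on $\partial^* Q$.

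Second, substituting $\psi = e^{-\theta c_0 t}\phi$ into the parabolic operator $L$ and computing the time derivative gives, after factoring $e^{-\theta c_0 t}$,
\begin{equation*}
L\psi = e^{-\theta c_0 t}\left[\frac{\partial \phi}{\partial t} + \frac{1}{2}\textrm{tr}\left(a(t,x) D^2\phi\right) + b(t,x)'D\phi + \theta(c(t,x) - c_0)\phi\right] =: e^{-\theta c_0 t}\, \tilde{L}\phi,
\end{equation*}
where $\tilde{L}$ is the parabolic operator with the same diffusion and drift coefficients $a,b$ but with modified zeroth-order coefficient $\tilde{c}(t,x) := c(t,x) - c_0 \leq 0$. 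The same identity remains meaningful when $L\psi \geq 0$ is interpreted in the viscosity sense, because multiplication by the smooth, strictly positive factor $e^{-\theta c_0 t}$ and addition of a lower-order term of the form $-\theta c_0 \phi$ both preserve the class of viscosity subsolutions under the corresponding change of function.

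Third, since $e^{-\theta c_0 t}>0$ and $L\psi \geq 0$ in $Q$, we have $\tilde{L}\phi \geq 0$ in $Q$. The triple $(\tilde{L}, \phi, Q)$ now satisfies every hypothesis of the classical weak maximum principle stated in the previous Lemma: $Q$ is bounded, $\tilde{L}$ is parabolic with continuous coefficients, $\tilde{c} \leq 0$, $\phi$ is continuous on $\bar{Q}$ in the viscosity sense, $\tilde{L}\phi\geq 0$ in $Q$, and $\phi \leq 0$ on $\partial^* Q$. Applying that lemma gives $\phi \leq 0$ on $\bar{Q}$, and multiplying by $e^{-\theta c_0 t} > 0$ restores $\psi \leq 0$ on $\bar{Q}$. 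No real obstacle arises: this exponential shift is the standard device used to remove the sign restriction on the zeroth-order coefficient in maximum principles for linear parabolic operators, and the only subtlety, namely that the transformation respects the viscosity interpretation of $L\psi \geq 0$, is immediate because the transformation is smooth, affine in $\phi$, and multiplicative in $\psi$.
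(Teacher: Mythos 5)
Your proof is correct and takes essentially the same route as the paper's: an exponential-in-time change of variable that shifts the zeroth-order coefficient from $\theta c$ to $\theta(c-c_0)\leq 0$, after which the classical weak maximum principle (Lemma~\ref{theo_classical_weak_maximum_principle}) applies directly. If anything, your choice of factor $e^{\theta c_0 t}$ (rather than the paper's $e^{-c_0 t}$) is the one for which the computation of the transformed zeroth-order coefficient actually comes out as claimed.
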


\begin{proof}
Introduce the function $u(t,x) := e^{-c_0t}\psi(t,x)$. The second set of assumptions for the function $\psi$ translates for $u$ into:
\begin{itemize}
\itemb $u \in C^{1,2}(Q)$,
\itemb $(L-c_0)u \geq 0$ in Q,
\itemb $u$ is continuous on $\overline{Q}$, and
\itemb $u \leq 0$ on $\partial^* Q$
\end{itemize}
Moreover the coefficient of $u(t,x)$ in the parabolic operator for $u$ is now $c(t,x)-c_0 \leq 0$. Hence, we can apply Theorem~\ref{theo_classical_weak_maximum_principle} to the function $u$ in order to deduce that  $u \leq 0$ on $Q$ and therefore $\psi \leq 0$ on $Q$.

\end{proof}

\begin{remark}
The control-based argument used in the proof of Theorem 7.2 in~\cite{dall_JDRSAM_Diff} cannot be used here. The reason for this is that reinterpreting the PIDE as a PDE removes the natural connection between the PDE and the dynamics of the factor process. It therefore becomes more effective to consider the PDE in abstraction from the control problem and prove existence directly through standard PDE arguments.
\end{remark}

\subsection{Existence of a Classical Solution to the HJB PIDE~\eqref{eq_JDRSAM_exptrans_HJBPDE}} 
So far, we have shown that $\tilde{\Phi}$ is a viscosity solution to HJB PIDE~\eqref{eq_JDRSAM_exptrans_HJBPDE} in Section 4.1 and that $\tilde{u}$ is a classical solution of the equation~\eqref{eq_JDRSAM_exptrans_HJBPDE} with terminal condition $\tilde{\Phi}(T,x) = 1$ in Section 4.5. In this subsection, we prove that HJB PIDE~\eqref{eq_JDRSAM_exptrans_HJBPDE} admits a classical solution by showing that $\tilde{u} = \tilde{\Phi}$.
\newline

\begin{theorem}[Existence of a Classical Solution to the HJB PIDE~\eqref{eq_JDRSAM_exptrans_HJBPDE}]\label{Theo_JDRSAM_existence_PIDE}
        Under Assumption~\ref{as_factorjumps_xi_integrable} the RS HJB PIDE~\eqref{eq_JDRSAM_exptrans_HJBPDE} with terminal condition $\tilde{\Phi}(T,x) = 1$ has a unique solution $\tilde{\Phi} \in C^{1,2}\left((0,T)\times\mathbb{R}^n\right)$ with $\tilde{\Phi}$ continuous in $[0,T]\times\mathbb{R}^n$.
\end{theorem}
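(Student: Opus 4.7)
The plan is to identify the classical solution $\tilde{u}$ constructed in Theorem~\ref{Theo_JDRSAM_existence} with the value function $\tilde{\Phi}$, using uniqueness of viscosity solutions of the auxiliary parabolic PDE~\eqref{eq_JDRSAM_exptrans_HJBPDE_parabolic_outline}. One cannot argue uniqueness directly on the PIDE~\eqref{eq_JDRSAM_exptrans_HJBPDE}, since no comparison principle for viscosity solutions of the nonlocal equation is available at the level of generality needed here; the whole point of Steps~2--5 of the outline is to by-pass this by working on the local PDE in which the troublesome nonlocal term has been frozen into the autonomous source $d_a^{\tilde{\Phi}}$.

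Concretely, I would proceed as follows. First, invoke Theorem~\ref{Theo_JDRSAM_existence} to obtain $\tilde{u} \in C^{1,2}((0,T)\times\mathbb{R}^n)$, continuous on $[0,T]\times\mathbb{R}^n$ with $\tilde{u}(T,x)=1$, solving the PDE~\eqref{eq_JDRSAM_exptrans_HJBPDE_parabolic_outline} classically. Since every classical solution is a viscosity solution, $\tilde{u}$ is in particular a bounded viscosity solution of~\eqref{eq_JDRSAM_exptrans_HJBPDE_parabolic_outline}; Proposition~\ref{prop_PIDE_to_PDE} says the same of $\tilde{\Phi}$. The comparison result (Corollary~\ref{coro_JDRSAM_valuefunc_uniquesol_PDE}) then forces $\tilde{u} \equiv \tilde{\Phi}$ on $[0,T]\times\mathbb{R}^n$, so $\tilde{\Phi}$ inherits the $C^{1,2}$ regularity and the continuity up to $T$ of $\tilde{u}$. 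Finally, by the very definition of $d_a^{\tilde{\Phi}}$ in~\eqref{def_JDRSAM_function_tilde_d_outline} together with Assumption~\eqref{as_factorjumps_xi_integrable}, the classical identity~\eqref{eq_JDRSAM_exptrans_HJBPDE_parabolic_outline} now satisfied by $\tilde{u}=\tilde{\Phi}$ is pointwise identical to~\eqref{eq_JDRSAM_exptrans_HJBPDE}, so $\tilde{\Phi}$ is a classical solution of the PIDE. Uniqueness of the classical PIDE solution is obtained by the same detour: any further classical solution would be a bounded viscosity solution of the frozen-source PDE~\eqref{eq_JDRSAM_exptrans_HJBPDE_parabolic_outline} and must therefore coincide with $\tilde{\Phi}$ by Corollary~\ref{coro_JDRSAM_valuefunc_uniquesol_PDE}.

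The main conceptual obstacle is keeping the roles of the two equations carefully separated: the PDE is local but only meaningful once $\tilde{\Phi}$ has been fixed inside the source term $d_a^{\tilde{\Phi}}$, while the PIDE is genuinely nonlocal but lacks a direct comparison principle. The argument avoids circularity because $\tilde{\Phi}$ enters Theorem~\ref{Theo_JDRSAM_existence} only as a \emph{prescribed} continuous source, and the conclusion $\tilde{u}=\tilde{\Phi}$ is extracted from the uniqueness theorem for that linear-in-the-source problem, not from any self-referential argument. The key enabling facts are Proposition~\ref{prop_viscosity_equivalence}, which secures Definition~\ref{def_JDRSAM_viscositysol_testfunc_alt} as the right notion of viscosity solution and therefore makes Step~3 of the outline rigorous, and Assumption~\eqref{as_factorjumps_xi_integrable}, without which the splitting of the nonlocal operator into $d_a^{\tilde{\Phi}}$ plus an absorbed drift would not even be well defined.
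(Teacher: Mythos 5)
Your proposal is correct and follows essentially the same route as the paper: obtain the classical solution $\tilde{u}$ of the frozen-source PDE~\eqref{eq_JDRSAM_exptrans_HJBPDE_parabolic_outline} from Theorem~\ref{Theo_JDRSAM_existence}, note that a classical solution is a viscosity solution, invoke the uniqueness of viscosity solutions of that PDE (Corollary~\ref{coro_JDRSAM_valuefunc_uniquesol_PDE}) to identify $\tilde{u}$ with $\tilde{\Phi}$, and then read the PDE back as the PIDE via the definition of $d_a^{\tilde{\Phi}}$. The paper's own proof of Theorem~\ref{Theo_JDRSAM_existence_PIDE} is exactly this argument, including the final appeal to Corollary~\ref{coro_JDRSAM_valuefunc_uniquesol_PDE} for uniqueness.
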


\begin{proof}
We showed in Theorem~\ref{theo_JDRSAM_viscositysol}, Proposition~\ref{prop_PIDE_to_PDE} and Theorem~\ref{theo_JDRSAM_comparison_parabolicPDE} that $\tilde{\Phi}$ is both a Lipschitz continuous viscosity solution of HJB PIDE~\eqref{eq_JDRSAM_exptrans_HJBPDE} and the unique viscosity solution of HJB PDE~\eqref{eq_JDRSAM_exptrans_HJBPDE_parabolic_outline}. Now, observe that a classical solution is also a viscosity solution. Broadly speaking the argument is that if the solution of the PDE is smooth, then we can use it as a test function in the definition of viscosity solutions. If we do this, we will recover the classical maximum principle and therefore prove that the solution of the PDE is a classical solution. Hence, the classical solution $\tilde{u}$ to PDE~\eqref{eq_JDRSAM_exptrans_HJBPDE_parabolic_outline}, whose existence was proved in Theorem~\ref{Theo_JDRSAM_existence}, is also a viscosity solution of PDE~\eqref{eq_JDRSAM_exptrans_HJBPDE_parabolic_outline}.
As a result, $\tilde{\Phi} = \tilde{u}$ and we conclude that $\tilde{\Phi}$ is $C^{1,2}$ and satisfies PIDE~\eqref{eq_JDRSAM_exptrans_HJBPDE} in the classical sense. Uniqueness follows from Corollary \ref{coro_JDRSAM_valuefunc_uniquesol_PDE}. \hfill\end{proof}
\begin{corollary}[Existence of a Classical Solution for the Risk-Sensitive Control Problem]\label{Coro_JDRSAM_existence_PIDE}
        The RS HJB PIDE~\eqref{eq_JDRSAM_HJBPDE} with terminal condition $\Phi(T,x) =0$ has a unique solution $\Phi \in C^{1,2}\left([0,T]\times\mathbb{R}^n\right)$ with $\Phi$ continuous in $[0,T]\times\mathbb{R}^n$.
\end{corollary}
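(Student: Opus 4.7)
The plan is to deduce the corollary directly from Theorem~\ref{Theo_JDRSAM_existence_PIDE} by inverting the logarithmic change of variable $\tilde{\Phi}(t,x)=\exp\{-\theta\Phi(t,x)\}$ used throughout Section~\ref{sec_Optim}. Since Theorem~\ref{Theo_JDRSAM_existence_PIDE} supplies a unique $C^{1,2}$ solution $\tilde{\Phi}$ of the exponentially transformed PIDE~\eqref{eq_JDRSAM_exptrans_HJBPDE}--\eqref{eq_JDRSAM_exptrans_HJBPDE_termcond}, and Proposition~\ref{prop_JDRSAM_tildePhi_bounded} tells us that $\tilde{\Phi}>0$ everywhere, the function $\Phi:=-\theta^{-1}\ln\tilde{\Phi}$ is well defined. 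Because $x\mapsto -\theta^{-1}\ln x$ is $C^{\infty}$ on $(0,\infty)$ and $\tilde{\Phi}\in C^{1,2}([0,T]\times\mathbb{R}^n)$, the composition $\Phi$ inherits $C^{1,2}$ regularity on the same domain, with continuity up to $t=T$ where $\Phi(T,x)=-\theta^{-1}\ln 1=0$, matching the terminal condition~\eqref{eq_JDRSAM_HJBPDE_termcond}.

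The second step is to verify by the chain rule that $\Phi$ satisfies the risk-sensitive HJB PIDE~\eqref{eq_JDRSAM_HJBPDE}. From $\tilde{\Phi}=e^{-\theta\Phi}$ one obtains
\[
\partial_t\tilde{\Phi}=-\theta e^{-\theta\Phi}\partial_t\Phi,\qquad D\tilde{\Phi}=-\theta e^{-\theta\Phi}D\Phi,\qquad D^2\tilde{\Phi}=-\theta e^{-\theta\Phi}D^2\Phi+\theta^2 e^{-\theta\Phi}D\Phi (D\Phi)',
\]
and the non-local term in~\eqref{eq_JDRSAM_exptrans_HJBPDE} becomes
\[
\int_{\mathbf{Z}}\bigl\{e^{-\theta\Phi(t,x+\xi)}-e^{-\theta\Phi(t,x)}+\theta e^{-\theta\Phi(t,x)}\xi'D\Phi(t,x)\bigr\}\nu(dz).
\]
Substituting these expressions into~\eqref{eq_JDRSAM_exptrans_HJBPDE}, factoring out $-\theta e^{-\theta\Phi(t,x)}$, and using the identity
\[
H(t,x,\tilde{\Phi},D\tilde{\Phi})=-\theta e^{-\theta\Phi}\sup_{h\in\mathcal{J}}\bigl\{f(t,x,h)'D\Phi-g(t,x,h)\bigr\},
\]
which follows from the definitions~\eqref{eq_JDRSAM_logtrans_H_function} of $H$ and~\eqref{eq_JDRSAM_HJBPDE_operator_L} of $L^h$, reproduces exactly the Hamiltonian $\sup_{h\in\mathcal{J}}L^h(t,x,\Phi,D\Phi,D^2\Phi)$. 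In particular the quadratic term $-\tfrac{\theta}{2}(D\Phi)'\Lambda\Lambda'D\Phi$ in $L^h$ arises from the second derivative expansion via the cross term $\theta^2 D\Phi (D\Phi)'$, and the $-\tfrac{1}{\theta}(e^{-\theta[\Phi(t,x+\xi)-\Phi(t,x)]}-1)$ piece in $\mathcal{I}_{NL}$ comes from the factored non-local term.

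For uniqueness, suppose $\Phi_1$ is any $C^{1,2}$ solution of~\eqref{eq_JDRSAM_HJBPDE}--\eqref{eq_JDRSAM_HJBPDE_termcond}. Running the same chain-rule computation in the opposite direction shows that $\tilde{\Phi}_1:=e^{-\theta\Phi_1}$ is a $C^{1,2}$ solution of~\eqref{eq_JDRSAM_exptrans_HJBPDE}--\eqref{eq_JDRSAM_exptrans_HJBPDE_termcond}; by the uniqueness portion of Theorem~\ref{Theo_JDRSAM_existence_PIDE}, $\tilde{\Phi}_1=\tilde{\Phi}$, hence $\Phi_1=\Phi$. The main obstacle is purely algebraic bookkeeping in the chain-rule identity linking the two Hamiltonians and the two non-local operators; no new analytic argument (regularity, comparison, or stability) is needed beyond what is already provided by Theorem~\ref{Theo_JDRSAM_existence_PIDE} and Proposition~\ref{prop_JDRSAM_tildePhi_bounded}.
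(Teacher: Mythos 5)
Your proposal is correct and follows exactly the route the paper takes: the paper's own proof is the one-line observation that the result ``follows from the basic relationship $\tilde{\Phi}=\exp\{-\theta\Phi\}$,'' and your argument simply fills in the chain-rule bookkeeping (which checks out, including the sign flip turning the infimum in $H$ into the supremum over $L^h$ and the recovery of the quadratic gradient term and of $\mathcal{I}_{NL}$), together with the correct appeal to Proposition~\ref{prop_JDRSAM_tildePhi_bounded} for positivity and the reverse substitution for uniqueness. Nothing is missing.
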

\proof This follows from the basic relationship between $\Phi$ and $\tilde{\Phi}$, i.e. $\tilde{\Phi}=\exp\{-\theta\Phi\}$.\hfill\endproof

%
%
\section{Identifying the Optimal Strategy}\label{optimalcontrol} All that remains now is to show that the controls derived from the Hamiltonian-minimizing function $\check{h}$ of Corollary \ref{coro_JDRSAM_optimcontrol_tilde} and from the maximiser $\hat{h}$ of Proposition \ref{prop_JDRSAM_optimcontrol} correspond to the optimal policy.\\

\begin{lemma}\label{lem_JDRSAM_hstar} 
The Hamiltonian-minimizing function $\check{h}$ of Corollary \ref{coro_JDRSAM_optimcontrol_tilde} and the maximiser $\hat{h}$ of Proposition \ref{prop_JDRSAM_optimcontrol} correspond to the same control $h^*(t,X_t)$, that is:
$$\hat{h}(t,X_t,D\Phi(t,X_t)) = \check{h}(t,X_t,\tilde{\Phi}(t,X_t),D\tilde{\Phi}(t,X_t)) = : h^*(t,X_t)$$.
\end{lemma}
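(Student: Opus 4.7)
The plan is to exploit the exponential transformation $\tilde{\Phi}(t,x)=\exp\{-\theta\Phi(t,x)\}$ established earlier, to show that the two optimization problems defining $\hat{h}$ and $\check{h}$ are in fact equivalent at the relevant arguments, so that their unique optimizers (from Proposition \ref{prop_JDRSAM_optimcontrol} and Corollary \ref{coro_JDRSAM_optimcontrol_tilde}) must agree.

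First I would isolate the $h$-dependent parts of each optimization. From \eqref{eq_JDRSAM_HJBPDE_operator_L}, the only terms of $L^h(t,x,\Phi,D\Phi,D^2\Phi)$ depending on $h$ are $f(t,x,h)'D\Phi - g(t,x,h)$, so that
\[
\hat{h}(t,x,D\Phi) \;=\; \mathrm{argsup}_{h\in\mathcal{J}}\bigl\{\,f(t,x,h)'D\Phi(t,x) - g(t,x,h)\,\bigr\}.
\]
From \eqref{eq_JDRSAM_logtrans_H_function}, the function $\check{h}$ is the minimizer in
\[
H(t,x,\tilde{\Phi},D\tilde{\Phi}) \;=\; \inf_{h\in\mathcal{J}}\bigl\{\,f(t,x,h)'D\tilde{\Phi}(t,x) + \theta g(t,x,h)\tilde{\Phi}(t,x)\,\bigr\}.
\]

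Next I would substitute the transformation. Since $\tilde{\Phi}=\exp\{-\theta\Phi\}$, we have $D\tilde{\Phi}=-\theta\tilde{\Phi}\,D\Phi$, and therefore the bracket in the infimum above becomes
\[
f(t,x,h)'\bigl(-\theta\tilde{\Phi}\,D\Phi\bigr) + \theta g(t,x,h)\tilde{\Phi} \;=\; -\theta\tilde{\Phi}(t,x)\bigl[\,f(t,x,h)'D\Phi(t,x) - g(t,x,h)\,\bigr].
\]
By Proposition \ref{prop_JDRSAM_tildePhi_bounded}, $\tilde{\Phi}(t,x)>0$, so the factor $-\theta\tilde{\Phi}(t,x)$ in front of the bracket is a strictly negative constant (with respect to $h$). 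Minimizing over $h\in\mathcal{J}$ is therefore equivalent to maximizing $f(t,x,h)'D\Phi(t,x)-g(t,x,h)$ over the same set.

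Finally, since both the supremum (Proposition \ref{prop_JDRSAM_optimcontrol}) and the infimum (Corollary \ref{coro_JDRSAM_optimcontrol_tilde}) admit a \emph{unique} Borel-measurable optimizer, uniqueness forces $\hat{h}(t,X_t,D\Phi(t,X_t)) = \check{h}(t,X_t,\tilde{\Phi}(t,X_t),D\tilde{\Phi}(t,X_t))$ pointwise, which is the claim. There is no real obstacle here beyond bookkeeping: the argument is purely algebraic and relies only on (i) the strict positivity of $\tilde{\Phi}$ and (ii) the uniqueness of the optimizer in each problem, both already established.
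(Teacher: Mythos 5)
Your proposal is correct and follows essentially the same route as the paper: the paper's own proof simply invokes the uniqueness of the two optimizers together with the monotone transformation $\tilde{\Phi}=e^{-\theta\Phi}$, and your argument is that same idea with the algebra made explicit (isolating the $h$-dependent terms, substituting $D\tilde{\Phi}=-\theta\tilde{\Phi}D\Phi$, and using $\tilde{\Phi}>0$ to flip the infimum into the supremum). The added detail is welcome but does not constitute a different method.
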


\begin{proof}
        The Hamiltonian-minimizing function $\check{h}$ of Corollary \ref{coro_JDRSAM_optimcontrol_tilde} and the maximiser $\hat{h}$ of Proposition \ref{prop_JDRSAM_optimcontrol} are both unique. Moreover $\Phi$ and $\tilde{\Phi}$ are related through a monotone transformation.This proves that $\hat{h}(t,X_t,D\Phi(t,X_t)) = \check{h}(t,X_t,\tilde{\Phi}(t,X_t),D\tilde{\Phi}(t,X_t))$.\hfill
\end{proof}\\

\begin{remark}
	The control $h^*$ introduced in step 3 of the proof of Theorem~\ref{Theo_JDRSAM_existence} is the control $h^*(t,X_t)$.
\end{remark}
\hfill

\begin{proposition}\label{prop_JDRSAM_hstar_admissible} The control $h^*(t,X_t)$ is admissible: $h^*(t,X_t) \in \mathcal{A}$.
\end{proposition}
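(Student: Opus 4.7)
The proof requires verifying the two conditions of Definition~\ref{def_JDRSAM_admissible_A}: (i) $h^* \in \mathcal{H}$, i.e.\ $h^*(t,X_t)$ is progressively measurable, c\`adl\`ag, and takes values in $\mathcal{J}$; and (ii) $\mathbf{E}[\chi^{h^*}(T)] = 1$. Condition (i) will follow without difficulty from regularity already established for $\Phi$ and $\hat{h}$. The main obstacle is (ii), since \emph{a priori} $\chi^{h^*}$ is only a nonnegative local martingale, hence a supermartingale, so we only know $\mathbf{E}[\chi^{h^*}(T)]\le 1$.

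For (i), Corollary~\ref{Coro_JDRSAM_existence_PIDE} gives $\Phi \in C^{1,2}([0,T]\times\mathbb{R}^n)$, so $D\Phi$ is continuous, and Proposition~\ref{prop_JDRSAM_optimcontrol} provides a Borel measurable selector $\hat{h}:[0,T]\times\mathbb{R}^n\times\mathbb{R}^n \to \mathcal{J}$. Composition with the c\`adl\`ag adapted process $X$ shows that $h^*(t,X_t) = \hat{h}(t,X_t,D\Phi(t,X_t))$ is progressively measurable and admits a c\`adl\`ag version (one may instead work with the predictable representative $\hat{h}(t,X_{t^-},D\Phi(t,X_{t^-}))$). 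By construction $h^*(t,X_t) \in \mathcal{J}$ pointwise, and Assumption~\ref{as_assetjumps_upanddown_1} ensures that $\mathcal{J}\subset \mathcal{J}_0$ is bounded, so $h^*$ is in fact uniformly bounded in $(t,\omega)$.

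For (ii), the plan is to exploit the uniform boundedness of $h^*$ and of the model coefficients in order to upgrade $\chi^{h^*}$ to a true martingale. Under Assumption~\ref{As_Securities}(iv) the matrix $\Sigma$ is bounded, hence $h^*{}'\Sigma\Sigma'(t,X_t)h^*$ is uniformly bounded and Novikov's condition holds for the continuous component of $\chi^{h^*}$. For the jump component, $G(t,z,h^*) = 1 - (1+h^*{}'\gamma(t,z))^{-\theta}$ is uniformly bounded in $(t,z,\omega)$: boundedness of $\gamma$ from Assumption~\ref{As_Securities}(vii) together with boundedness of $h^*$ bounds $h^*{}'\gamma$ from above, while the defining inequality of $\mathcal{J}_0$ and the compactness of $\tilde{\mathbf{S}}$ bound $1+h^*{}'\gamma(t,z)$ strictly away from $0$. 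Consequently both $G(\cdot,\cdot,h^*)$ and $\ln(1-G(\cdot,\cdot,h^*))$ are bounded uniformly in $(t,z,\omega)$.

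With these estimates in hand, the martingale property follows by a standard localization argument. Set $\tau_n := \inf\{t \ge 0 : \chi^{h^*}(t) \ge n\}\wedge T$. Then $\chi^{h^*}(\cdot\wedge\tau_n)$ is a bounded local martingale and hence a true martingale with $\mathbf{E}[\chi^{h^*}(T\wedge\tau_n)] = 1$ for every $n$. The uniform bounds on the integrands, combined with the exponential-moment inequality for Poisson stochastic integrals with bounded jumps, yield a uniform $L^p$-estimate ($p>1$) on $\chi^{h^*}(T\wedge\tau_n)$, hence uniform integrability of the family. Passing to the limit $n\to\infty$ and using that $\chi^{h^*}(T\wedge\tau_n)\to \chi^{h^*}(T)$ almost surely, we conclude $\mathbf{E}[\chi^{h^*}(T)] = 1$, completing the verification that $h^* \in \mathcal{A}$.
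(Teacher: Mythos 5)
Your proof reaches the right conclusion but by a genuinely different route from the paper's. The paper disposes of this proposition in one line, citing Proposition 4.3 of Part I \cite{dall_JDRSAM_Diff}; as the remark following the proposition explains, that argument rests on M\'emin's theorem on the multiplicative decomposition of exponential semimartingales (with Cheridito--Filipovi\'c--Yor offered as an alternative). You instead give a self-contained verification: progressive measurability and boundedness of $h^*$ from the Borel selector of Proposition~\ref{prop_JDRSAM_optimcontrol} and the boundedness of $\mathcal{J}\subset\mathcal{J}_0$ under Assumption~\ref{as_assetjumps_upanddown_1}, Novikov's criterion for the Brownian component, uniform bounds on $G$ and $\ln(1-G)$ for the jump component, and a localization plus uniform-integrability passage to the limit. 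What your approach buys is transparency and independence from the semimartingale decomposition machinery; what the paper's citation buys is brevity and an argument that handles the interaction of the continuous and discontinuous parts of $\chi^{h^*}$ in one stroke. Both arguments ultimately trade on the same facts: $h^*$ and the model coefficients are bounded.

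One step needs more justification than you give it: the claim that $1+(h^*)'\gamma(t,z)$ is bounded \emph{uniformly} away from zero. The defining inequality of $\mathcal{J}_0$ together with compactness of $\tilde{\mathbf{S}}$ yields $\inf_{\psi\in\tilde{\mathbf{S}}}(1+h'\psi)>0$ for each \emph{fixed} $h\in\mathcal{J}_0$, but $h^*(t,\omega)$ ranges over $\mathcal{J}_0$, whose closure meets the set where $1+h'\psi=0$; there your bound degenerates and $\ln(1-G)=-\theta\ln(1+h'\gamma)$ blows up, so the uniform estimates on the jump part (and hence the uniform $L^p$ bound in the localization step) are not yet secured. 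The gap is closable: the jump term $\frac{1}{\theta}\int_{\mathbf{Z}}[(1+h'\gamma(t,z))^{-\theta}-1]\nu(dz)$ in the Hamiltonian diverges as $h$ approaches that boundary while the remaining terms are bounded on $\mathcal{J}_0$ by $C(1+|D\Phi|)$, and $D\Phi$ is bounded (Proposition~\ref{prop_tildePhi_Lipschitz} gives $D\tilde{\Phi}$ bounded, and $\tilde{\Phi}\geq e^{-\theta\|g\|_\infty T}>0$ by the uniform bound on $g$ over the bounded set $\mathcal{J}$); hence the minimizer $\hat{h}(t,x,D\Phi(t,x))$ stays in a compact subset of $\mathcal{J}_0$. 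With that observation inserted, your Novikov and localization steps go through.
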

\begin{proof}
        The proof follows closely Proposition 4.3 in Davis and Lleo~\cite{dall_JDRSAM_Diff}. The class of admissible controls is presented in Definition~\ref{def_JDRSAM_admissible_A}.\hfill
\end{proof}\\

\begin{remark}
        The argument used in~\cite{dall_JDRSAM_Diff} is based on a result by M\'emin~\cite{me79}. One could also derive a similar argument using the elegant results of~\cite{chfiyo05}. 
\end{remark}\\

\begin{theorem}\label{Theo_optimalcontrol}
The control $h^*$ is optimal. In particular $\tilde{\Phi}(t,x) = \tilde{I}(v,x,h^*;t;T;\theta)$.
\end{theorem}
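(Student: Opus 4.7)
The plan is to execute a classical verification argument. Fix $(t,x) \in [0,T] \times \mathbb{R}^n$ and any admissible $h \in \mathcal{A}$. Under $\mathbb{P}^h_{t,x}$ the factor process $X$ follows the jump-diffusion SDE \eqref{eq_JDRSAM_state_SDE}, and Theorem~\ref{Theo_JDRSAM_existence_PIDE} supplies a classical $C^{1,2}$ solution $\tilde{\Phi}$ of the HJB PIDE~\eqref{eq_JDRSAM_exptrans_HJBPDE}. I would apply the Itô formula for jump-diffusions to
$$Y^h_s := \tilde{\Phi}(s,X_s)\exp\!\left(\theta\!\int_t^s g(u,X_u,h(u))\,du\right),\qquad s\in[t,T],$$
and exploit the variational structure of the Hamiltonian to turn $Y^h$ into a $\mathbb{P}^h$-(sub)martingale.

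Expanding by the product rule and substituting the HJB equation~\eqref{eq_JDRSAM_exptrans_HJBPDE} (which $\tilde{\Phi}$ satisfies pointwise) for $\partial_s\tilde{\Phi}$, the drift of $Y^h$ reduces to
$$e^{\theta\int_t^s g\,du}\,\Bigl\{f(s,X_s,h)'D\tilde{\Phi} + \theta g(s,X_s,h)\tilde{\Phi} - H(s,X_s,\tilde{\Phi},D\tilde{\Phi})\Bigr\}\,ds,$$
which is $\ge 0$ by the defining property $H=\inf_h\{f'p + \theta g r\}$ of the Hamiltonian and which vanishes identically when $h=h^*$ by Lemma~\ref{lem_JDRSAM_hstar}. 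Crucially, Assumption~\ref{as_JDRSAM_uncorrelatedjumps} (no simultaneous jumps between factors and asset prices) is what makes the non-local drift produced by Itô align with the $\nu$-integral appearing in the HJB: on the support of $\gamma$ the integrand involving $\xi$ vanishes, and on the support of $\xi$ the weight $(1+h'\gamma)^{-\theta}$ equals one, so $\nu^h$ and $\nu$ contribute identically. Consequently $Y^h$ is a $\mathbb{P}^h$-local submartingale and $Y^{h^*}$ is a $\mathbb{P}^{h^*}$-local martingale, the residual martingale part consisting of a Brownian stochastic integral against $dW^h$ and a compensated-Poisson integral against $\tilde{N}^h$.

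The principal technical obstacle is promoting these local martingales to true martingales. For this I would assemble the a priori estimates already in hand: $\tilde{\Phi}$ is bounded (Proposition~\ref{prop_JDRSAM_tildePhi_bounded}) and Lipschitz in $x$ (Proposition~\ref{prop_tildePhi_Lipschitz}), so $|D\tilde{\Phi}|$ is bounded and $|\tilde{\Phi}(s,X_{s^-}+\xi) - \tilde{\Phi}(s,X_{s^-})| \le K|\xi|$; $\Lambda$ is bounded by Assumption~\ref{As_Factors}, controlling the Brownian integrand; the exponential weight is uniformly bounded because $g$ is bounded in $x$ uniformly in $h\in\mathcal{J}$ (Remark~\ref{prop_JDRSAM_g_bounded_Lipschitz}) and the feasible set $\mathcal{J}$ is compact; and the square-integrability condition \eqref{as_factorjumps_xi_sq_int}, combined with standard moment estimates for \eqref{eq_JDRSAM_state_SDE}---valid because $f$ is Lipschitz of linear growth (Remark~\ref{rk_JDRSAM_f_Lipschitz}) and the $\mathbb{P}^h$-intensity of the jumps is dominated by a constant multiple of $\nu$ thanks to Assumption~\ref{as_assetjumps_upanddown_1}---yields $\mathbf{E}^h_{t,x}\!\int_t^T\!\!\int_{\mathbf{Z}}|\xi(s,X_s,z)|^2\,\nu(dz)\,ds < \infty$. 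A standard localisation $\tau_n = T\wedge\inf\{s\ge t:|X_s|\ge n\}$ together with the dominated convergence theorem then discharges the local-to-true-martingale passage.

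Taking $\mathbf{E}^h_{t,x}[\cdot]$ of $Y^h_T - Y^h_t$ and using the terminal condition $\tilde{\Phi}(T,\cdot)\equiv 1$ gives $\tilde{\Phi}(t,x) \le \tilde{I}(t,x,h)$ for every $h\in\mathcal{A}$, with equality at $h=h^*$; admissibility of $h^*$ is already furnished by Proposition~\ref{prop_JDRSAM_hstar_admissible}. Therefore $\tilde{\Phi}(t,x) = \tilde{I}(t,x,h^*) = \inf_{h\in\mathcal{A}}\tilde{I}(t,x,h)$, which identifies $h^*$ as an optimal control and completes the proof.
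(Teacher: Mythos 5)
Your proposal is correct and follows essentially the same route as the paper: apply It\^o's formula to $\tilde{\Phi}(s,X_s)e^{Z_s}$ under $\mathbb{P}^h$, kill the drift via the HJB PIDE (using Assumption~\ref{as_JDRSAM_uncorrelatedjumps} to identify the $\mathbb{P}^h$-compensator with $\nu$ on the support of $\xi$), and upgrade the local martingale to a true martingale using the boundedness and Lipschitz continuity of $\tilde{\Phi}$ together with \eqref{as_factorjumps_xi_sq_int} and the second-moment estimate on $X$. The only difference is that you additionally carry out the submartingale inequality for arbitrary $h\in\mathcal{A}$, which the paper omits as redundant because $\tilde{\Phi}$ is by definition the infimum of $\tilde{I}$ over $\mathcal{A}$, so establishing $\tilde{I}(t,x,h^*)=\tilde{\Phi}(t,x)$ alone already yields optimality.
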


\begin{proof}
Consider the Borel-measurable minimizing control $h^*(t,X_t)$ with associated measure $\mathbb{P}^*$ and let $X(s), s\geq t$ be the state process with initial data $X(t) = x$.

Define the process $Z(s) := \theta \int_{t}^{s}g(u,X_u,h^*_u) du$ and use the general Ito formula to calculate $Z(s)\tilde{\Phi}(s, X(s))$. We find that 
\begin{eqnarray}
    \tilde{\Phi}(s,X_s)e^{Z_s} &=&
        \tilde{\Phi}(t,x)
        + \int_{t}^{s} D\tilde{\Phi}'\Lambda(u,X(u)) dW_{u}^{\theta}
                                                                                                                                        \nonumber\\
        &&
    +         \int_t^s  \int_{\mathbf{Z}} \left\{
                \tilde{\Phi}\left(u,X(u^-)+\xi\left(u,X(u^-),z\right)\right)
            -   \tilde{\Phi}(u,X(u^-))
        \right\} \tilde{N} (du,dz).\label{**}
\end{eqnarray}
(The drift term is equal to zero since $h^*$ achieves the minimum in the HJB PIDE~\eqref{eq_JDRSAM_exptrans_HJBPDE}.) We claim that both stochastic integrals in \eqref{**} are martingales. Indeed, under Assumption~\ref{As_Factors}, Theorem 1.19 of \cite{oksu05} implies that
\begin{equation} \mathbf{E}^*\left[\int_t^T|X(s)|^2ds\right]<\infty.\label{***}\end{equation}
This is enough to show that the Brownian integral is a martingale, since $D\tilde{\Phi}$ is bounded and $\Lambda$ is Lipschitz in $x$. For the Poisson random measure integral we have, since $\tilde{\Phi}$ is Lipschitz (with constant $K$),
\[ \alpha(u,X(u^-),z)\equiv|\tilde{\Phi}(u,X(u^-)+\xi(u,X(u^-),z)-\tilde{\Phi}(u,X(u^-))|\leq K|\xi(u,X(u^-),z)|,\]
so, by Assumption \eqref{as_factorjumps_xi_sq_int},
\[ \int_\mathbf{Z}\alpha^2(u,X(u^-),z)\nu(dz)<3c^2(1+|X(u^-)|^2).\]
Hence by \eqref{***}
\[ \mathbf{E}^*\int_0^T\int_\mathbf{Z}\alpha^2(u,X(u^-),z)\nu(dz)du<\infty,\]
and this is a sufficient condition for the stochastic integral to be a martingale (see \cite{oksu05}, (1.1.13)).
Thus, from \eqref{**},
\begin{eqnarray*} \tilde{\Phi}(t,x)&=&\mathbf{E}^*[\tilde{\Phi}(T,X(T))\exp(Z(T))]\\
&=&\mathbf{E}^*\left[e^{\theta \int_{t}^{T}g(s,X_s,h^*_s)ds} \right]\\
&=&\tilde{I}(v,x,h^*;t;T;\theta).
\end{eqnarray*}
This completes the proof.\hfill\end{proof}\\

As usual, we have an equivalent result for the logarithmically-transformed control problem.\\

\begin{corollary}\label{Coro_JDRSAM_verification} $h^*$ defined above is optimal for the logarithmically-transformed problem, and
\[ \Phi(t,x) = I(t,x,h^*;\theta;T;v).\]
\end{corollary}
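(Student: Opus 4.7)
The plan is to derive everything from Theorem~\ref{Theo_optimalcontrol} by invoking the strictly monotone relationship between the two value functions and the two criteria, since this was precisely the role the transformation was introduced to play (see the discussion after~\eqref{eq_JDRSAM_logtrans_H_function}).

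First I would record the pointwise identity $\tilde{I}(t,x,h) = \exp\{-\theta\, I(t,x,h)\}$ for every admissible $h \in \mathcal{A}^s$, which is immediate from definitions~\eqref{Itilde}--\eqref{I}, together with $\tilde{\Phi}(t,x) = \exp\{-\theta\, \Phi(t,x)\}$ from~\eqref{Phi} (established after Corollary~\ref{coro_JDRSAM_optimcontrol_tilde}). Since $x \mapsto -\tfrac{1}{\theta}\ln x$ is strictly decreasing on $(0,\infty)$, infima of $\tilde{I}(t,x,\cdot)$ correspond to suprema of $I(t,x,\cdot)$, so a control achieving the infimum in the definition of $\tilde{\Phi}$ automatically achieves the supremum in the definition of $\Phi$, and conversely.

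Next I would apply Theorem~\ref{Theo_optimalcontrol}, which gives
\[
\tilde{\Phi}(t,x) = \tilde{I}(v,x,h^*;t;T;\theta),
\]
and take $-\tfrac{1}{\theta}\ln$ of both sides. Using the two identities from the previous paragraph, the left-hand side becomes $\Phi(t,x)$ and the right-hand side becomes $I(t,x,h^*;\theta;T;v)$, yielding the displayed equation. Admissibility $h^* \in \mathcal{A}$ is already provided by Proposition~\ref{prop_JDRSAM_hstar_admissible}, and by the monotone-transformation observation above, the fact that $h^*$ attains $\Phi(t,x) = \sup_{h\in\mathcal{A}^s} I(t,x,h)$ follows from its attaining $\tilde{\Phi}(t,x) = \inf_{h\in\mathcal{A}^s}\tilde{I}(t,x,h)$.

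There is no real obstacle here: the entire content of the corollary is the push-through of Theorem~\ref{Theo_optimalcontrol} under the bijection $\tilde{\Phi} = e^{-\theta \Phi}$. The only small point to verify is that positivity of $\tilde{\Phi}$, needed to take logarithms, is in hand from Proposition~\ref{prop_JDRSAM_tildePhi_bounded}, so the transformation is well defined along the optimal trajectory as well as in the value function.
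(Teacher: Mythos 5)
Your proof is correct and follows essentially the same route as the paper, which simply invokes the relation $\tilde{\Phi}=e^{-\theta\Phi}$ and the fact that an optimal strategy for the exponentially transformed problem is optimal for the original one. Your version just spells out the monotone-transformation argument (including the positivity of $\tilde{\Phi}$ needed to take logarithms) that the paper leaves implicit.
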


\begin{proof}
        This corollary follows from the relation between $\Phi$ and $\tilde{\Phi}$ and from the fact that an admissible (optimal) strategy for the exponentially transformed problem is also admissible (optimal) for the risk-sensitive problem.
\end{proof}

%
%

\section{Conclusion}\label{sec_conc}
In this article, we reformulated the risk-sensitive investment management problem to allow jumps in both factor levels and asset prices, stochastic volatility and investment constraints. Using a combination of viscosity solutions, change of notation, policy improvement argument and classical results on parabolic PDEs, we showed that the Bellman PIDE associated with our control problem does admit a unique smooth $C^{1,2}$ solution. Furthermore, we proved via a verification theorem that this solution and the candidate optimal control we identified solve the control problem.
\\

At first, this outcome may appear fortunate given the non-linear and non-local nature of the PDE involved. In fact, the argument used in the derivation only hinges on three techniques. First, the Lipshitz continuity of the value function provides us with the ability to rewrite the HJB PIDE as a PDE. Second, viscosity solutions give us existence and uniqueness of a weak solution to both of these equations. A proof of existence by Fleming and Rishel based on a policy improvement originally due to Bellman completes the triptic by providing a smooth solution.
\\

The robustness of the approach presented in this article is a clear advantage. Further research is needed to determine both the extend of the jump-diffusion problems this approach can be used to solve and how much further it can be developed.
\\


%
%
\appendix

%
%

\section{Proof of Proposition~\ref{prop_tildePhi_Lipschitz}}\label{App_Proof_prop_tildePhi_Lipschitz}

\textbf{Step 1:}\\
Take an admissible control $\hat{h} \in \mathcal{A(T)}$ and consider the auxiliary criterion $\tilde{J}$ defined under the $\mathbb{P}$-measure as
\begin{equation}
   \tilde{J}(v,x,\hat{h};t,T;\theta)
        = e^{-\theta J(t,x,h;\theta;T;v)}
        = \mathbf{E}_{t,x}
                \left[
        \exp \left\{ \theta \int_{t}^{s} g(s,X_s,h(s)) ds \right\} \chi^h(t)
                        \right]
                                                                                        \nonumber
\end{equation}
where the Dol\'eans exponential is defined in~\eqref{eq_JDRSAM_Doleansexp_chi}
\\

By property~\ref{prop_JDRSAM_g_bounded_Lipschitz} $g$ is bounded the constant $R$
\begin{eqnarray}\label{eq_JDRSAM_proofLipschitz_def_R}
        R := \exp\left\{ \theta \lVert g \rVert_{\infty}  (T-t) \right\}
                                                                                                                        \nonumber
\end{eqnarray}
is well defined and bounded. Let $\eta \in \mathbb{R}^n$ be a directional vector with $\lVert\eta\rVert = 1$, and $k \in \mathbb{R}$ be a scalar. Define the operator $\Delta_x F(s,x)$ as
\begin{eqnarray}
        \Delta_x F(s,x)
        := \frac{1}{k} \left[F(s,x+k\eta) - F(s,x)\right]
                                                                                                                        \nonumber
\end{eqnarray}
Hence,
\begin{eqnarray}\label{eq_JDRSAM_proofLipschitz_eq1_1}
        \Delta_x \tilde{J}(v,x,\hat{h};t,T;\theta)
        &=& \frac{1}{k} \left[\tilde{J}(v,x_1+k\eta,\hat{h};t,T;\theta) - \tilde{J}(v,x_1,\hat{h};t,T;\theta)\right]
                                                                                                                        \nonumber\\
        &=& \frac{1}{k}\mathbf{E}_{t,x_2}
                \left[
        \exp \left\{ \theta \int_{t}^{s} g(s,X_2(s),h(s)) ds \right\}
                                \chi^{\hat{h}}(t)
                        \right]
                                                                                                                        \nonumber\\
        &&      -       \frac{1}{k}\mathbf{E}_{t,x_1}
                \left[
        \exp \left\{ \theta \int_{t}^{s} g(s,X_1(s),h(s)) ds \right\}
                                \chi^{\hat{h}}(t)
                        \right]
\end{eqnarray}
where $X_1(s)$ solves the state equation~\eqref{eq_JDRSAM_state_SDE} with $X_1(t) = x_1$ and $X_2(s)$ solves the state equation~\eqref{eq_JDRSAM_state_SDE} with $X_2(t) = x_2 := x_1 + k\eta$
\\

Define
\begin{eqnarray}
    \mathcal{D}_t^i
    &:=& \exp \left\{ -\theta \int_{t}^{s} \hat{h}(s)'\Sigma(s,X_i(s)) dW_s
    -\frac{1}{2} \theta^2 \int_{t}^{s} \hat{h}(s)'\Sigma\Sigma'(s,X_i(s))\hat{h}(s) ds    \right\},
        \quad i=1,2
                                                                                                                                                                \nonumber\\
\end{eqnarray}
and
\begin{eqnarray}
    \mathcal{P}_t^h
    &:=& \exp \left\{
        +\int_{t}^{s} \int_{\mathbf{Z}} \ln\left(1-G(t,z,\hat{h}(s))\right) \tilde{N} (ds,dz)
            \right.
                                                            \nonumber\\
   &&   \left.
        +\int_{t}^{s} \int_{\mathbf{Z}} \left\{\ln\left(1-G(t,z,\hat{h}(s))\right)+G(t,z,\hat{h}(s))\right\}\nu(dz)ds
    \right\},
                                                                                                                                                                \nonumber\\
\end{eqnarray}

Let $\mathbb{P}_{\mathcal{P}}$ be the measure on
$(\Omega,\mathcal{F})$ defined via the Radon-Nikod\'ym derivative
\begin{eqnarray}
    \left. \frac{d\mathbb{P}_{\mathcal{P}}}{d\mathbb{P}}\right|_{\mathcal{F}_t}
    := \mathcal{P}_t
    \quad  \forall t \geq 0
\end{eqnarray}
then~\eqref{eq_JDRSAM_proofLipschitz_eq1_1} becomes
\begin{eqnarray}
        \Delta_x \tilde{J}(v,x,\hat{h};t,T;\theta)
        &=& \frac{1}{k}\mathbf{E}_{t,x_2}^{\mathcal{P}}
                \left[
        \exp \left\{ \theta \int_{t}^{s} g(s,X_2(s),h(s)) ds \right\}
                                \mathcal{D}_t^2
                        \right]
                                                                                                                        \nonumber\\
        &&      -       \frac{1}{k}\mathbf{E}_{t,x_1}^{\mathcal{P}}
                \left[
        \exp \left\{ \theta \int_{t}^{s} g(s,X_1(s),h(s)) ds \right\}
                                \mathcal{D}_t^1
                        \right]
                                                                                                                        \nonumber
\end{eqnarray}
where $\mathbf{E}_{t,x}^{\mathcal{P}} \left[ \cdot \right]$ denotes the
expectation taken with respect to the measure $\mathbb{P}^{\mathcal{P}}$ and with initial conditions $(t,x)$.
\\

Thus
\begin{eqnarray}
        \Delta_x \tilde{J}(v,x,\hat{h};t,T)
        &=& \frac{1}{k}\mathbf{E}_{t,x_2}^{\mathcal{P}}
                \left[
        \exp \left\{ \theta \int_{t}^{s} g(s,X_2(s),h(s)) ds \right\}
                                \mathcal{D}_t^2
                        \right]
                                                                                                                        \nonumber\\
        &&      -       \frac{1}{k}\mathbf{E}_{t,x_1}^{\mathcal{P}}
                \left[
        \exp \left\{ \theta \int_{t}^{s} g(s,X_1(s),h(s)) ds \right\}
                                \mathcal{D}_t^2\frac{\mathcal{D}_t^1}{\mathcal{D}_t^2}
                        \right]
                                                                                                                        \nonumber
\end{eqnarray}

Let $\mathbb{P}_{2}$ be the measure on
$(\Omega,\mathcal{F})$ defined via the Radon-Nikod\'ym derivative
\begin{eqnarray}
    \left. \frac{d\mathbb{P}_{2}}{d\mathbb{P}_{\mathcal{P}}}\right|_{\mathcal{F}_t}
    := \mathcal{D}_t^2
    \quad  \forall t \geq 0
\end{eqnarray}
Then,
\begin{eqnarray}\label{eq_JDRSAM_proofLipschitz_eq1_2}
        \Delta_x \tilde{J}(v,x,\hat{h};t,T;\theta)
        &=& \frac{1}{k}\mathbf{E}_{t,x_2}^{2}
                \left[
                \exp \left\{ \theta \int_{t}^{s} g(s,X_2(s),h(s)) ds \right\}
                -       \exp \left\{ \theta \int_{t}^{s} g(s,X_1(s),h(s)) ds \right\}
                        \right]
                                                                                                                        \nonumber\\
        &&      -       \frac{1}{k}\mathbf{E}_{t,x_2}^{2}
                \left[
        \exp \left\{ \theta \int_{t}^{s} g(s,X_1(s),h(s)) ds \right\}
                                \left(\mathcal{D}_t^{1,2}-1\right)
                        \right]
                                                                                                                        \nonumber\\
        &=&     \frac{1}{k}\left(A + B\right)
\end{eqnarray}
where
\begin{eqnarray}
        A:= \mathbf{E}_{t,x_2}^{2}
                \left[
                \exp \left\{ \theta \int_{t}^{s} g(s,X_2(s),h(s)) ds \right\}
                -       \exp \left\{ \theta \int_{t}^{s} g(s,X_1(s),h(s)) ds \right\}
                        \right]
                                                                                                                        \nonumber
\end{eqnarray}
\begin{eqnarray}
        B := \mathbf{E}_{t,x_2}^{2}
                \left[
        \exp \left\{ \theta \int_{t}^{s} g(s,X_1(s),h(s)) ds \right\}
                                \left(\mathcal{D}_t^{1,2}-1\right)
                        \right]
                                                                                                                        \nonumber
\end{eqnarray}
\begin{eqnarray}
        \mathcal{D}_t^{1,2}
        &:=& \frac{\mathcal{D}_t^1}{\mathcal{D}_t^2}
                                                                                                                        \nonumber\\
        &=& \exp \left\{ -\theta \int_{t}^{s} \hat{h}(s)'
                                \left[\Sigma(s,X_1(s))-\Sigma(s,X_2(s))\right] dW_s^2
                \right.
                                                                                                                        \nonumber\\
        &&      \left.
    -\frac{1}{2} \theta^2 \int_{t}^{s} \hat{h}(s)'
                        \left[\Sigma(s,X_1(s))-\Sigma(s,X_2(s))\right]
                        \left[\Sigma(s,X_2(s))-\Sigma(s,X_2(s))\right]'
                \hat{h}(s) ds    \right\}
                                                                                                                        \nonumber
\end{eqnarray}
and $W^2(s)$ is a $\mathbb{P}_{2}$-Brownian motion.
\\

\textbf{Step 2}\\
We focus on term $B$ in~\eqref{eq_JDRSAM_proofLipschitz_eq1_2}. By~\eqref{eq_JDRSAM_proofLipschitz_def_R}, we have
\begin{eqnarray}
        B \leq R \mathbf{E}_{t,x_2}^{2}
                \left[
                                \left(\mathcal{D}_t^{1,2}-1\right)
                        \right]
                                                                                                                        \nonumber
\end{eqnarray}
The exponential martingale $\mathcal{D}_t^{1,2}$ satisfies the SDE
\begin{eqnarray}
        \frac{d\mathcal{D}^{1,2}(s)}{\mathcal{D}^{1,2}(s)}
        &=& -\theta \hat{h}(s)'
                                \left[\Sigma(s,X_1(s))-\Sigma(s,X_2(s))\right] dW_s^2
                                                                                                                        \nonumber
\end{eqnarray}
thus,
\begin{eqnarray}
        \mathcal{D}^{1,2}(s)    - 1
        &=& -\theta \int_{t}^{s} \mathcal{D}^{1,2}(u) \hat{h}(u)'
                                \left[\Sigma(u,X_1(u))-\Sigma(u,X_2(u))\right] dW_u^2,
                                                                                                                        \nonumber\\
        &=& Y_1(s) - Y_2(s)
                                                                                                                        \nonumber
\end{eqnarray}
with
\begin{eqnarray}
        Y_i(s)
        :=
        -\theta \int_{t}^{s} \mathcal{D}^{1,2}(u) \hat{h}(u)'\Sigma(u,X_i(u))dW_u^2
        \qquad i=1,2
                                                                                                                        \nonumber
\end{eqnarray}
\\

Now,
\begin{eqnarray}
        &&      \mathbf{E}_{t,x_2}^{2}\left[ |Y_1(s) - Y_2(s)|^2 \right]
                                                                                                                        \nonumber\\
        &=& \mathbf{E}_{t,x_2}^{2}\left[ \int_{t}^{s}
                        \left(\mathcal{D}^{1,2}(u)\right)^2
                        \hat{h}(u)'
                                \left[\Sigma(u,X_1(u))-\Sigma(u,X_2(u))\right]
                                \left[\Sigma(u,X_1(u))-\Sigma(u,X_2(u))\right]'\hat{h}(u)
                        du \right]
                                                                                                                        \nonumber
\end{eqnarray}
By Cauchy-Schwartz,
\begin{eqnarray}\label{eq_JDRSAM_proofLipschitz_eq3_1}
        &&      \mathbf{E}_{t,x_2}^{2}\left[ |Y_1(s) - Y_2(s)|^2 \right]
                                                                                                                        \nonumber\\
        &\leq& \int_{t}^{s}\left\{\sqrt{\mathbf{E}_{t,x_2}^{2}\left[
                        \left(\mathcal{D}^{1,2}(u)\right)^4\right]}
                        \right.
                                                                                                                        \nonumber\\
        &&              \left.
                        \sqrt{\mathbf{E}_{t,x_2}^{2}\left[
                        \left(\hat{h}(u)'
                                \left[\Sigma(u,X_1(u))-\Sigma(u,X_2(u))\right]
                                \left[\Sigma(u,X_1(u))-\Sigma(u,X_2(u))\right]'\hat{h}(u)\right)^2
                        \right]}\right\}
                        du
                                                                                                                        \nonumber\\
\end{eqnarray}
\\

The term
\begin{eqnarray}
        \left(\mathcal{D}_u^{1,2}\right)^4
        &:=& \left(\frac{\mathcal{D}_u^1}{\mathcal{D}_u^2}\right)^4
                                                                                                                        \nonumber\\
        &=& \exp \left\{ -4\theta \int_{t}^{s} \hat{h}(r)'
                                \left[\Sigma(r,X_1(r))-\Sigma(r,X_2(r))\right] dW_r^2
                \right.
                                                                                                                        \nonumber\\
        &&      \left.
    -2 \theta^2 \int_{t}^{s} \hat{h}(s)'
                        \left[\Sigma(r,X_1(r))-\Sigma(r,X_2(r))\right]
                        \left[\Sigma(r,X_2(r))-\Sigma(r,X_2(r))\right]'
                \hat{h}(r) dr    \right\}
                                                                                                                        \nonumber\\
        &=& \mathcal{D}_u^{3}\exp \left\{
        6 \theta^2 \int_{t}^{u} \hat{h}(r)'
                        \left[\Sigma(r,X_1(r))-\Sigma(r,X_2(r))\right]
                        \left[\Sigma(r,X_2(r))-\Sigma(r,X_2(r))\right]'
                \hat{h}(r) dr    \right\}
                                                                                                                        \nonumber
\end{eqnarray}
where
\begin{eqnarray}
        \mathcal{D}_u^{3}
        &:=& \exp \left\{ -4\theta \int_{t}^{u} \hat{h}(r)'
                                \left[\Sigma(r,X_1(r))-\Sigma(r,X_2(r))\right] dW_r^2
                \right.
                                                                                                                        \nonumber\\
        &&      \left.
    - 8\theta^2 \int_{t}^{u} \hat{h}(r)'
                        \left[\Sigma(r,X_1(r))-\Sigma(r,X_2(r))\right]
                        \left[\Sigma(r,X_2(r))-\Sigma(r,X_2(r))\right]'
                \hat{h}(r) dr    \right\}
                                                                                                                        \nonumber
\end{eqnarray}
Let $\mathbb{P}_{3}$ be the measure on
$(\Omega,\mathcal{F})$ defined via the Radon-Nikod\'ym derivative
\begin{eqnarray}
    \left. \frac{d\mathbb{P}_{3}}{d\mathbb{P}_{2}}\right|_{\mathcal{F}_t}
    := \mathcal{D}_t^3
    \quad  \forall t \geq 0
\end{eqnarray}
then
\begin{eqnarray}\label{eq_JDRSAM_proofLipschitz_eq3_2}
        &&      \sqrt{\mathbf{E}_{t,x_2}^{2}\left[
                        \left(\mathcal{D}^{1,2}(u)\right)^4\right]}
                                                                                                \nonumber\\
        &=&
        \sqrt{\mathbf{E}_{t,x_2}^{3}\left[
                        \exp \left\{
        6 \theta^2 \int_{t}^{u} \hat{h}(r)'
                        \left[\Sigma(r,X_1(r))-\Sigma(r,X_2(r))\right]
                        \left[\Sigma(r,X_2(r))-\Sigma(r,X_2(r))\right]'
                \hat{h}(r) dr    \right\}
        \right]}
                                                                                                \nonumber\\
        &\leq&  K_3
\end{eqnarray}
where $\mathbf{E}_{t,x}^{3}$ is the expectation with respect to the measure $\mathbb{P}_{3}$. The last inequality follows by boundedness of $\Sigma(\cdot)$ and the constant $K_3$ depends on the choice of control $\hat{h}$.
\\

We now look at the second term on the right hand side of~\eqref{eq_JDRSAM_proofLipschitz_eq3_1}. By Lipschitz continuity,
\begin{eqnarray}
        &&
\left(\hat{h}(u)'\left[\Sigma(u,X_1(u))-\Sigma(u,X_2(u))\right]
        \left[\Sigma(u,X_1(u))-\Sigma(u,X_2(u))\right]'\hat{h}(u)\right)^2
                                                                                        \nonumber\\
        &\leq&
        K_4 |X_1(u)-X_2(u)|^4
                                                                                        \nonumber
\end{eqnarray}
for $u \in [t,s]$ and where the constant $K_4>0$ depends on $\hat{h}$ and the Lipschitz constant $K_{\Sigma}$.
\\

Let $Z(u) = X_1(u)-X_2(u)$, then
\begin{eqnarray}
    dZ(r)
    &=&     f(r,Z(r)dr
            + \Lambda(r,Z(r)) dW_{r}^{\theta}
                                + \int_{\mathbf{Z}} \xi(z,Z(r))
                                                \tilde{N} ^{h}(dr,dz)
                                                                                                                \nonumber
\end{eqnarray}
where
\begin{eqnarray}
        f(r,Z(r))       &=&     f(r,X_1(r),h(r)) - f(r,X_2(r),h(r))
                                                                                                                \nonumber\\
        \Lambda(r,Z(r)) &=&     \Lambda(r,X_1(r)) - \Lambda(r,X_2(r))
                                                                                                                \nonumber\\
        \xi(z,Z(r))
        &=&     \xi\left(r,X_1(r^-),z\right)    - \xi\left(r,X_2(r^-),z\right)
                                                                                                                \nonumber
\end{eqnarray}
\\

By It\^o,
\begin{eqnarray}
        |Z(u)|^4
        &=&     |Z(t)|^4
        +       4\int_{t}^{u} Z^3(r)f(r,Z(r)) dr
        +       4\int_{t}^{u} Z^3(r)\Lambda(r,Z(r)) dW(r)
                                                                                                \nonumber\\
        &&
        +       6\int_{t}^{u} Z^2(r)\Lambda\Lambda'(r,Z(r)) dr
                                                                                                \nonumber\\
        &&
        + \int_{t}^{u}\int_{\mathbf{Z}} \left\{
                (Z(u^-)+\xi(z,Z(u^-)))^4 - Z^4(u^-) - 4 Z^3(u^-)\cdot\xi(z,Z(u))
                \right\} \nu(dz)dr
                                                                                                \nonumber\\
        &&
        + \int_{t}^{u}\int_{\mathbf{Z}} \left\{
                (Z(u^-)+\xi(z,Z(u^-)))^4 - Z^4(u^-) - 4 Z^3(u^-)\cdot\xi(z,Z(u))
                \right\} \tilde{N} (dz,dr)
                                                                                                        \nonumber
\end{eqnarray}

Taking the expectation under $\mathbb{P}_2$,
\begin{eqnarray}
        \mathbf{E}_{t,x_2}^{2}|Z(u)|^4
        &=&     |Z(t)|^4
        +       2\mathbf{E}_{t,x_2}^{2}\left[\int_{t}^{u} Z^3(r)f(r,Z(r))
        +       3\int_{t}^{u} Z^2(r)\Lambda\Lambda'(r,Z(r)) dr \right]
                                                                                                \nonumber\\
        &&
        + \mathbf{E}_{t,x_2}^{2}\left[\int_{t}^{u}\int_{\mathbf{Z}} \left\{
                (Z(u^-)+\xi(z,Z(u^-)))^4 - Z^4(u^-) - 4 Z^3(u^-)\cdot\xi(z,Z(u))
                \right\} \nu(dz)dr \right]
                                                                                                \nonumber
\end{eqnarray}
\\

Applying Assumption~\ref{As_Factors} (\ref{As_Factors_i}), (\ref{As_Factors_ii}), (\ref{As_Factors_v}) and Assumption~\ref{As_Securities} (\ref{As_Securities_ix}) as well as Remark~\ref{rk_JDRSAM_f_Lipschitz} we finally obtain
\begin{eqnarray}
        \mathbf{E}_{t,x_2}^{2}|X_1(u)-X_2(u)|^4
        &=&     |x_1-x_2|^4
        +       C_4 \int_{t}^{u}\mathbf{E}_{t,x_2}^{2} |X_1(r)-X_2(r)|^4 dr
                                                                                                \nonumber
\end{eqnarray}
\begin{eqnarray}
        C_0
        := 4(K_b + \theta K_{\Lambda\Sigma}+K_{\xi}K_0)
        +       6K_{\Lambda}^2
        +\int_{\mathbf{Z}}K_{\xi}^2(6+4K_{\xi}+K_{\xi}^2)\nu(dz)
                                                                                                                                                \nonumber
\end{eqnarray}
and
\begin{eqnarray}
        K_0
        := \int_{\mathbf{Z}}
                        \left[ \left(1+h'\gamma(t,z)\right)^{-\theta}
                        - \mathit{1}_{\mathbf{Z}_0}(z)
                \right]\nu(dz)
                                                                                                                                                \nonumber
\end{eqnarray}
By Gronwall's inequality,
\begin{eqnarray}
        \mathbf{E}_{t,x}^{\hat{h},\theta} |X_1(u)-X_2(u)|^2
        \leq e^{C_4(u-t)}|x_1-x_2|^2
                                                                                                                                                \nonumber
\end{eqnarray}
\\

Thus, taking into consideration~\eqref{eq_JDRSAM_proofLipschitz_eq3_2}, equation~\eqref{eq_JDRSAM_proofLipschitz_eq3_1} becomes
\begin{eqnarray}
        \mathbf{E}_{t,x_2}^{2}\left[ |Y_1(s) - Y_2(s)|^2 \right]
        &\leq& e^{C_4(u-t)}|x_1-x_2|^2
                                                                                                                        \nonumber
\end{eqnarray}
and we conclude that term $B$ in~\eqref{eq_JDRSAM_proofLipschitz_eq1_2} is Lipschitz continuous since
\begin{eqnarray}
        B \leq C_B|x_1-x_2|
                                                                                                                        \nonumber
\end{eqnarray}
where $C_B := K_3R e^{\frac{1}{2}C_4(u-t)}$
\\

\textbf{Step 3: Conclusion}\\
The Lipschitz continuity of term $A$ in~\eqref{eq_JDRSAM_proofLipschitz_eq1_2} follows from Theorem VI.8.1 and Lemma IV.7.1 in~\cite{flso06}. Let $C_A>0$ be the Lipschitz constant for term $A$, then
\begin{eqnarray}
        &&      \Delta_x \tilde{J}(v,x,\hat{h};t,T)
                                                                                                                        \nonumber\\
        &=& \frac{1}{k}\mathbf{E}_{t,x_2}^{2}
                \left[
                \exp \left\{ \theta \int_{t}^{s} g(s,X_2(s),h(s)) ds \right\}
                -       \exp \left\{ \theta \int_{t}^{s} g(s,X_1(s),h(s)) ds \right\}
                        \right]
                                                                                                                        \nonumber\\
        &&      -       \frac{1}{k}\mathbf{E}_{t,x_2}^{2}
                \left[
        \exp \left\{ \theta \int_{t}^{s} g(s,X_1(s),h(s)) ds \right\}
                                \left(\mathcal{D}_t^{1,2}-1\right)
                        \right]
                                                                                                                        \nonumber\\
        &\leq&  C_x |x_2 - x_1|
                                                                                                                        \nonumber
\end{eqnarray}
where $C_x := C_B + C_A$. This completes the proof.

%
%

\end{document}